\documentclass[aps,prx,twocolumn,floatfix,superscriptaddress,longbibliography]{revtex4-2}

\usepackage{amsmath,amssymb,amsthm,mathtools}
\usepackage{bm}
\usepackage{braket}
\usepackage{graphicx}
\usepackage{xcolor}
\usepackage{hyperref}
\hypersetup{colorlinks=true,linkcolor=blue!55!black,citecolor=teal,urlcolor=teal}
\usepackage{enumitem}
\usepackage{booktabs}
\usepackage{orcidlink}
\usepackage[capitalize]{cleveref}
\usepackage{dsfont}
\usepackage[normalem]{ulem}

\newcommand{\g}{\mathfrak{g}}
\newcommand{\h}{\mathfrak{h}}

\newcommand{\adphi}{\Phi^{\mathrm{ad}}}
\newcommand{\Adphi}{\Phi^{\mathrm{Ad}}}

\newcommand{\Tr}{\operatorname{Tr}}
\newcommand{\ii}{\mathrm{i}}
\newcommand{\dd}{\mathrm{d}}
\newcommand{\R}{\mathbb{R}}

\newcommand{\Om}{\Omega}
\newcommand{\rvec}{\bm{r}}
\newcommand{\Rvec}{\bm{R}}
\newcommand{\gsim}{\ensuremath{\mathfrak g\text{-}\mathrm{sim}}}
\newcommand{\poly}{\mathrm{poly}}

\DeclareMathOperator{\Sp}{Sp}
\DeclareMathOperator{\spann}{span}

\theoremstyle{plain}
\newtheorem{theorem}{Theorem}
\newtheorem{proposition}{Proposition}
\newtheorem{lemma}{Lemma}
\newtheorem{corollary}{Corollary}
\theoremstyle{definition}
\newtheorem{definition}{Definition}

\newtheorem{example}{Example}

\begin{document}

\title{Lie-Algebraic Classical Simulation of Bosonic Systems Beyond Gaussian Dynamics}

\author{Adelina B\"arligea\,\orcidlink{0009-0008-5497-1941}}
\email{adelina.baerligea@uni-a.de}
\thanks{These authors contributed equally.}
\affiliation{Institute for Computer Science, University of Augsburg, 86159 Augsburg, Germany}

\author{Timothy Heightman\,\orcidlink{0000-0002-3314-0929}}
\email{theightman@icfo.net}
\thanks{These authors contributed equally.}
\affiliation{ICFO -- Institut de Ci\`encies Fot\`oniques, The Barcelona Institute of
Science and Technology, 08860 Castelldefels (Barcelona), Spain}
\affiliation{Quside Technologies SL, Carrer d'Esteve Terradas 1, 08860 Castelldefels
(Barcelona), Spain}

\author{Jakob~S.~Kottmann\,\orcidlink{0000-0002-4156-2048}}
\affiliation{Institute for Computer Science, University of Augsburg, 86159 Augsburg, Germany}
\affiliation{Center for Advanced Analytics and Predictive Sciences, University of Augsburg, 86159 Augsburg, Germany}

\author{Antonio Acín\,\orcidlink{0000-0002-1355-3435}}
\affiliation{ICFO -- Institut de Ci\`encies Fot\`oniques, The Barcelona Institute of
Science and Technology, 08860 Castelldefels (Barcelona), Spain}
\affiliation{ICREA -- Institucio Catalana de Recerca i Estudis Avançats, Lluis Companys 23, 08010 Barcelona, Spain}

\begin{abstract}
Classical simulability is ultimately determined by both the dynamics of a quantum system and the observables being evaluated. Lie-algebraic simulation exploits the latter to make exact polynomial-time classical simulations by propagating observables through low-dimensional invariant operator spaces. However, its conventional formulation in terms of polynomial-dimensional dynamical Lie algebras does not directly accommodate bosonic systems as their algebras are neither compact nor semisimple. 
In this contribution, we overcome this limitation, making bosonic systems accessible to the Lie-algebraic programme of exact polynomial-time classical simulation. We prove that expectation values, fixed-order correlation functions, including multi-time correlators and out-of-time-ordered correlators, and gradients are efficiently computable whenever their operator modules have polynomial dimension. This recovers Gaussian quantum optics and extends it to non-Gaussian input states, while identifying exact polynomial regimes of interacting non-Gaussian dynamics including bounded-photon Kerr and pair-hopping Hamiltonians and nilpotent polynomial phase dynamics. We show that unlike in the finite-dimensional spin and fermionic setting treated previously, a finite-dimensional bosonic generator algebra alone does not guarantee finite observable dynamics.
We further derive a controlled perturbative hierarchy for squeezing beyond exact sector confinement and confirm the predicted error orders numerically. We also evaluate operator spreading on interacting chains of up to $400$ modes and connect a topological doublon band with flux-reversed edge motion. These results provide a unified formalism for classifying, discovering, and systematically approximating tractable bosonic quantum dynamics with classical polynomial-time simulation.
\end{abstract}

\maketitle


\section{Introduction}

Claims of quantum advantage are meaningful only relative to the best available classical description. Beyond delineating the boundary between quantum and classical computation, classical simulation is valuable as a tool for validating quantum devices and understanding quantum many-body dynamics. Bosonic systems bring this question into focus since their  canonical commutation relations place them in an infinite-dimensional Fock space even at a fixed number of modes. Describing photons, phonons, and collective excitations in quantum optics and many-body systems, they hence combine direct physical relevance with a structural challenge absent from spin systems and finite-dimensional Hilbert spaces.

The classical simulation of bosonic systems is already a mature subject~\cite{Lloyd1999,Bartlett2002,Weedbrook2012,Mari2012}, but its conclusions depend crucially on the problem being considered. Gaussian states, operations, and measurements admit efficient descriptions in terms of first and second moments, forming the foundation of classical simulation in continuous-variable quantum information~\cite{Bartlett2002,Weedbrook2012}. Yet passive Gaussian dynamics acting on non-Gaussian inputs also underlies boson sampling, where sampling from the complete output distribution is believed to be classically intractable under standard complexity assumptions~\cite{Aaronson2013}. Gaussian boson sampling instead uses Gaussian inputs such as squeezed states; sampling is likewise believed to be classically intractable under standard complexity assumptions, while its photon-counting probabilities are governed by hafnians rather than permanents~\cite{Hamilton2017}. These examples highlight a broader distinction, that sampling, computing individual outcome probabilities, and evaluating expectation values are operationally different problems~\cite{Bravyi.2021-ClassicalAlgorithmsQuantum,Pashayan2020}. The same physical dynamics may be tractable for one output and hard for another.

This work addresses the quantum mean value problem, in which the exact evaluation of expectation values, fixed-order and multi-time correlators, out-of-time-ordered correlators (OTOCs)~\cite{Larkin.1969-QuasiclassicalMethodTheory}, and parameter gradients are to be simulated. These quantities provide direct access to transport, correlations, response, operator spreading, phase transitions, and variational objectives without requiring reconstruction or sampling of the full many-body state. The question motivating this contribution is thus whether physically informative observables can remain exactly tractable under dynamics that need not be Gaussian. 

Across successful classical simulation methods, tractability ultimately reflects the existence of a representation that scales polynomially in system size, such as stabilizer descriptions for Clifford circuits~\cite{Gottesman.1997-Stabilizer,aaronson2004improved}, a bounded bond dimension in tensor-network methods~\cite{Perez-Garcia.2007-MPSrepresentations,Markov.2008-SimulatingQCbyContractingTNs}, or a covariance matrix for Gaussian dynamics both for fermions~\cite{Knill.2001-FermionicLinearOptics,Valiant.2001-QuantumComputersThat,Terhal.2002-ClassicalSimulationNoninteractingfermion} and bosons~\cite{Wang.2007-QIGaussian,Weedbrook2012}. Lie-algebraic simulation follows the same principle in the Heisenberg picture. Instead of propagating a state vector or density matrix, it evolves the observables of interest within a low-dimensional invariant operator space.
Somma et al.~\cite{Somma.2005-QuantumComputationComplexity,Somma.2006-EfficientSolvabilityHamiltonians} established this approach for systems governed by polynomial-dimensional Lie algebras admitting suitable finite-dimensional faithful representations. More recently, this approach was recast in an adjoint-space formulation within the $\gsim$ framework~\cite{Goh.2025-LiealgebraicClassicalSimulations}, which propagates observables within invariant subspaces of the adjoint action. Later, Ref.~\cite{Barligea.2026-EnablingLieAlgebraicClassical} extended this beyond the standard free-fermion setting through symmetry-adapted operator bases, including permutation-invariant systems and dynamics on bounded-Hamming-weight sectors.

Bosons, however, appear to fall outside this paradigm. Their Hilbert space is infinite-dimensional. The canonical quadratures generate phase-space displacements and, together with the identity, form the non-semisimple Heisenberg--Weyl algebra. Quadratic Hamiltonians generate Gaussian transformations through the non-compact real symplectic algebra. Indeed, the original Lie-algebraic framework explicitly left the bosonic case outside its scope because the relevant algebra is not semisimple and its physical representations are infinite-dimensional~\cite{Somma.2006-EfficientSolvabilityHamiltonians}. These properties obstruct a direct application of the original argument through a finite-dimensional faithful representation of the generator algebra. Crucially, however, they do not preclude finite-dimensional Heisenberg evolution for the particular observable being evaluated. 

We exploit this distinction to extend Lie-algebraic classical simulation to bosonic systems. Starting from the invariant-subspace propagation principle underlying $\gsim$, we construct the relevant invariant operator space, which we call the \emph{reachable operator module}, directly from the dynamics--observable pair. For a fixed set of circuit generators and an observable, this module is the smallest linear operator space containing the observable that is invariant under commutation with every generator. In the finite-dimensional setting, this is the minimal invariant subspace required for the propagation developed in Ref.~\cite{Goh.2025-LiealgebraicClassicalSimulations}. In the bosonic setting, however, it becomes the central classifying object for efficient simulation; it can remain finite and polynomially accessible even though the Hilbert space and ambient operator algebra are infinite-dimensional and the associated Lie-algebra representation falls outside the conventional assumptions. \cref{fig:summary} previews this observable-seeded construction, whose formal definition and simulation criterion are developed in \cref{sec:finiteopertormodules}.

\begin{figure*}[htbp]
  \centering
  \includegraphics[width=\textwidth]{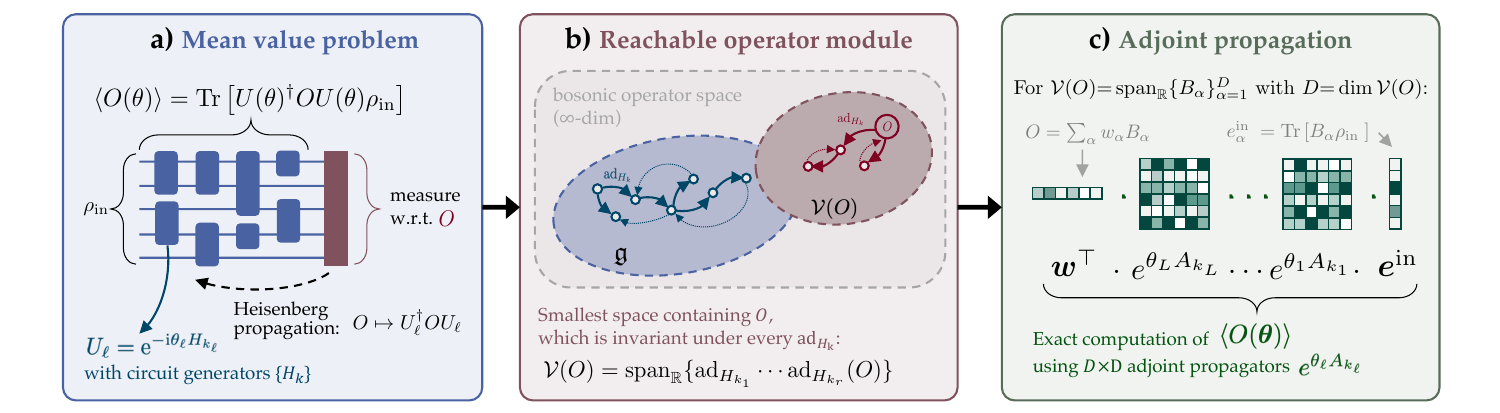}
  \caption{Reachable-operator-module simulation workflow.
  (a) For a circuit $U(\bm\theta)=U_L\cdots U_1$ with layers $U_\ell=e^{-\ii\theta_\ell H_{k_\ell}}$ as in \cref{eq:circuit-unitary}, the mean-value problem is evaluated by propagating the target observable backwards through the circuit in the Heisenberg picture, as in \cref{eq:mean-value-task}.
  (b) The Hermitian DLA $\g$ is generated from the Hamiltonians by repeated applications of $\mathrm{ad}_H(A)=\ii[H,A]$ (see \cref{eq:hermitian-ad,eq:DLA}), whereas the reachable operator module $\mathcal V(O)$ is seeded by the observable and is the smallest real operator space containing $O$ that is invariant under every generator action (see \cref{def:reach}). The overlap between $\g$ and $\mathcal V(O)$ is schematic: there is no general inclusion relation between them, and $\mathcal V(O)$ need not itself be closed under commutators. Rather, $\g$ acts on $\mathcal V(O)$ through the restricted adjoint maps.
  (c) If $D=\dim\mathcal V(O)<\infty$ and $\{B_\alpha\}_{\alpha=1}^{D}$ is a basis, these restricted actions are represented by matrices $A_k$ defined through $\mathrm{ad}_{H_k}(B_\alpha)=\sum_\beta(A_k)_{\alpha\beta}B_\beta$ as in \cref{eq:module-action-matrices}. Expanding $O=\sum_\alpha w_\alpha B_\alpha$ and defining $e_\alpha^{\mathrm{in}}=\Tr[B_\alpha\rho_{\mathrm{in}}]$ gives the exact finite-dimensional contraction in \cref{eq:module-eval}. 
  }
  \label{fig:summary}
\end{figure*}

In this work, we establish that exact bosonic Heisenberg simulation can be formulated directly on finite algebraic modules, identify the additional accessibility conditions required for an efficient simulation, and construct such modules for several nontrivial bosonic regimes. The reachable module need not contain the generators or be closed under commutators among its own elements, and it may be substantially smaller than the full dynamical Lie algebra (DLA). Conversely, a finite-dimensional generator algebra can act on an observable whose orbit is infinite-dimensional. Classical simulability is consequently not a property of Hamiltonians alone, but of the pair formed by the dynamics and the observable of interest. This viewpoint recovers Gaussian simulation as one instance of this principle, but it also exposes exact and polynomially tractable families with genuinely non-Gaussian Hamiltonian dynamics. Our main contributions are as follows.

\paragraph{Reachable-module simulability criterion.} 
We formulate exact Lie-algebraic simulation directly in terms of the reachable operator module. If this module has polynomial dimension and the required input overlaps are efficiently accessible, then the generators' actions can be constructed or applied efficiently, meaning that quantum mean values can be evaluated by finite-dimensional Heisenberg propagation. The same construction yields fixed-order correlators, multi-time observables, OTOCs, and reverse-mode gradients for machine learning. The conventional polynomial-dimensional DLA condition is recovered when the observable belongs to the generator algebra, but we show it is sufficient rather than necessary for forward simulation.

\paragraph{A unified construction of bosonic simulable families.} 
We identify three distinct mechanisms that produce polynomial reachable modules. First, quadratic dynamics preserves polynomial degree, yielding a hierarchy of finite moment modules. This embeds the usual displacement and covariance-matrix formalism as its first- and second-order sectors while extending it to arbitrary fixed-order correlators and non-Gaussian input states whenever the corresponding input moments are available. Second, adapting the $U(1)$ symmetry-sector strategy for bounded-Hamming-weight systems developed in~\cite{Barligea.2026-EnablingLieAlgebraicClassical}, we show that number-conserving bosonic dynamics are exactly simulable on any fixed sector, or any finite collection of sectors below a bounded maximum photon number. This includes interacting self-Kerr, cross-Kerr, and pair-hopping dynamics at arbitrary interaction strength and circuit depth, subject only to efficient access to the restricted generators. Third, motivated by finite-dimensional classifications of single-mode bosonic Lie algebras~\cite{Heib2025}, we construct a multi-mode nilpotent phase family containing cubic- and higher-order generators whose bounded-degree quadrature observables remain exactly tractable.

\paragraph{Controlled departure from exact sector confinement.} 
Number-changing Gaussian operations generally destroy exact sector confinement. We analyze this departure for squeezing, whose pair-creation and pair-annihilation terms couple photon-number sectors in parity-resolved steps. We show that weak squeezing can nevertheless be treated through parity-resolved photon-number bands that enlarge systematically with perturbative order. Under explicit analyticity assumptions, a band of depth $k$ reproduces the full dynamics through order $2k+1$ for number-conserving observables, such that the leading truncation error appears only at order $2(k+1)$ in the squeezing strength. Each enlargement of the band therefore suppresses the error by two additional powers of the squeezing amplitude. For standard squeezed vacua, this asymptotic control can further be sharpened using explicit photon-number-tail bounds. The resulting hierarchy provides a route between exact sector simulation and number-changing Gaussian dynamics, with a systematically improvable error governed by the retained band depth.

\paragraph{A sharp boundary of the method.} 
We demonstrate that a finite generator algebra alone does not guarantee simulability by constructing finite-dimensional non-Gaussian algebras with infinite observable orbits. In the opposite direction, suitable cubic-or-higher generators combined with Gaussian operations reach continuous-variable universality and generate modules of unbounded degree for generic observables~\cite{Lloyd1999,Calcluth2024}. The resulting taxonomy separates Gaussian degree preservation, bounded-photon sector confinement, and nilpotent phase closure from universal or otherwise unbounded non-Gaussian dynamics. It also makes explicit that the decisive resource is closure of the relevant observable dynamics, independently of whether the dynamics are Gaussian.

\paragraph{Numerical realization of the theory.} 
Our numerical studies include coherent and incoherent mixtures of bounded photon sectors, interacting Kerr and Bose--Hubbard dynamics, repulsively bound photon pairs, exact operator spreading through OTOCs, nilpotent cubic-phase propagation, and the predicted convergence orders of the squeezing-band approximation. Independent small-system and finite-sector calculations validate the corresponding propagations, while additional results illustrate differentiable control and symmetry-resolved interacting dynamics.

We emphasize that the outputs produced by our method are exact mean values and fixed-order correlators, not samples from the complete output distribution. This distinction is essential since the order of the considered observable is itself a computational resource. Fixed-order moment modules can remain polynomial, whereas the high-order coincidence observables encoding boson-sampling probabilities need not. Our results are therefore fully consistent with boson-sampling hardness~\cite{Aaronson2013,Hamilton2017}. Rather, they identify a complementary regime in which physically informative observables remain exactly accessible even when the input state or the Hamiltonian dynamics is non-Gaussian.

The remainder of this work is structured as follows.
\cref{sec:relwork} positions the mean-value problem within the literature on Gaussian simulation, non-Gaussian resources, sampling, and nonlinear bosonic dynamics. \cref{sec:finiteopertormodules} introduces the reachable-module criterion and its extension to correlators and gradients. \cref{sec:heisenbergweyl} specializes the construction to the Heisenberg--Weyl and Gaussian algebras, while \cref{sec:nongaussian} develops exact simulation at bounded photon number and its controlled extension under squeezing. \cref{sec:boundaries} establishes the limits of the criterion and constructs the nilpotent cubic-phase family. \cref{sec:numerics} presents the numerical demonstrations. Supporting proofs, gradient recursions, perturbative details, and extended numerical validations are collected in the appendices and referenced from the corresponding main-text sections.

\section{Related work}
\label{sec:relwork}

The consensus in the literature on classical simulation of bosonic systems is grounded in Gaussian quantum information, in which Gaussian states, dynamics, and measurements are fully described by first and second moments. These evolve through symplectic covariance-matrix methods~\cite{Bartlett2002,Wang.2007-QIGaussian,Weedbrook2012}. The boundary of this free theory is sharp, and adding a suitable non-Gaussian operation, such as a cubic phase gate, Kerr interaction, or a Gottesman--Kitaev--Preskill (GKP) resource state,  yields universal continuous-variable quantum computation \cite{Lloyd1999,Gottesman2001,Calcluth2024}. Beyond the Gaussian setting, existing simulation methods are naturally organized by two distinctions. First, we have the question of where non-Gaussianity enters, either through states, measurements or dynamics. Second, we have the question of whether the given problem is based on sampling, outcome probability estimation, or computing mean-values.

One line of work keeps the propagated dynamics Gaussian and controls the non-Gaussianity of states, measurements, or phase-space representations. Positive-quasiprobability methods give efficient weak simulation whenever the relevant states, processes, and measurements admit nonnegative representations~\cite{Mari2012,Veitch2012,Veitch2013,RahimiKeshari2016}. At the same time, negativity is not a complete obstruction, since structured GKP-type architectures can remain efficiently simulable despite large or unbounded Wigner negativity~\cite{GarciaAlvarez2020,Calcluth2022,Calcluth2023}. A complementary family of exact and approximate strong simulators decomposes non-Gaussian input states into Gaussian or coherent branches and propagates each branch through Gaussian dynamics, tracking their relative phases~\cite{Bourassa2021,Marshall2023,Dias.2024-ClassicalSimulationNongaussiana,Hahn2025}, with a fermionic analogue in~\cite{Dias.2024-ClassicalSimulationNonGaussian}. As in stabilizer-rank simulation for qubits~\cite{Bravyi2016,Bravyi.2019-SimulationQuantumCircuits}, the cost is then governed by resource measures such as Gaussian rank and extent~\cite{Hahn2025,Dias.2024-ClassicalSimulationNongaussiana}, stellar rank~\cite{Chabaud2020,Chabaud2021,Chabaud2023}, and symplectic rank \cite{Mele2026}. These methods are powerful, but their non-Gaussianity usually arises through state complexity, measurement complexity, or gate count, rather than by propagating a non-Gaussian Hamiltonian flow in a finite Heisenberg-picture operator space. 

Beyond the dynamics and inputs, it is also well-established that sampling from an output distribution, computing outcome probabilities, and evaluating expectation values are distinct notions of simulation~\cite{Bravyi.2021-ClassicalAlgorithmsQuantum,Pashayan2020}. Boson sampling hardness is located in high-order outcome probabilities, such as $n$-photon coincidences given by permanents~\cite{Aaronson2013}, as opposed to arbitrary fixed-order correlators. Conversely, even for Gaussian states, certain photon-number moments and cumulants are \#P-hard in general~\cite{Cardin2024}, showing that the observable class is part of the complexity landscape. Building on the qubit quantum mean-value problem~\cite{Bravyi.2021-ClassicalAlgorithmsQuantum}, recent work has isolated mean-value estimation as a problem for bosonic circuits~\cite{Lim.2025-ClassicalAlgorithmsEstimating}: Oh and Lim give additive-error algorithms for measurement-adaptive Gaussian circuits with non-Gaussian inputs~\cite{Oh2026}. We address the same simulation problem, however our method applies also to non-Gaussian dynamics as well as non-Gaussian inputs. 

Non-Gaussian bosonic dynamics have traditionally been treated by approximate numerical methods, including truncated Fock spaces \cite{Johansson.2012,Killoran.2019strawberryfields}, semiclassical phase-space methods~\cite{Polkovnikov2010,Deuar2002}, fixed-order cumulant truncations~\cite{Plankensteiner2022}, and tensor networks~\cite{Vinther2025}. 
Close in spirit to our non-Gaussian applications are two concurrent works on Gaussian-plus-Kerr circuits~\cite{Guseynov2026,upreti2026exponentiallyimproved}. The first of these works propagates coherent state superpositions  through displaced linear optics with Kerr gates, obtaining approximate quasi-polynomial costs for logarithmically many Kerr layers and polynomial costs at weak coupling \cite{Guseynov2026}. Next, the authors of Ref.~\cite{upreti2026exponentiallyimproved} compute outcome probabilities for circuits interleaving Gaussian layers with number-conserving non-Gaussian gates, identifying efficient regimes at logarithmic mode number or logarithmic depth, and exact descriptions for rational Kerr parameters. These Schr\"odinger-picture methods treat circuit families that include displacements and squeezing. Our framework instead evaluates expectation values, correlation functions, and gradients directly in the Heisenberg picture. At bounded photon number, it accommodates polynomial-depth number-conserving dynamics with unrestricted interaction strength and general sector-preserving nonlinearities, without reconstructing the evolved state or its output distribution.

On the passive side, very recent work~\cite{monbroussou2026classicalsimulationmodelconcentration} studies concentration and classical simulability by decomposing fixed-photon-number operator spaces into irreducible representations of $U(m)$. Fixed-degree number-preserving observables then occupy polynomial-dimensional low-irrep sectors, giving exact $\gsim$ when the required projections are efficiently accessible. Truncating higher irreps also yields average-case approximations over Haar-random interferometers. This provides a representation-theoretic account of the passive, fixed-order regime. Their analysis allows the photon number to grow but keeps the dynamics passive, whereas our reachable-module construction also covers active Gaussian transformations and non-Gaussian generators.

The role of photon number changes once the dynamics becomes interacting. 
The fixed-photon-number sector dimension $\binom{N+n-1}{N}$ is recognized as algorithmically important and as a route to universality on sectors~\cite{Childs.2013-UniversalComputationMultiparticle,Oszmaniec.2017,Gu.2021-FastforwardingQuantumEvolution}; here we use the same sector structure in the opposite direction, as an exact mean-value simulation mechanism when $N$ is bounded. This bounded-$N$ regime is complementary to the QMA-hard Bose--Hubbard setting in which the particle number grows with system size~\cite{Childs2014}.

\section{Classical simulability from finite reachable operator modules}
\label{sec:finiteopertormodules}
Lie-algebraic expectation-value simulation originates in the generalized coherent-state and finite-representation framework of Somma et al.~\cite{Somma.2005-QuantumComputationComplexity,Somma.2006-EfficientSolvabilityHamiltonians}. Goh et al.~\cite{Goh.2025-LiealgebraicClassicalSimulations} developed the practical adjoint-space $\gsim$ formulation, including propagation in invariant operator subspaces, product observables, and reverse-mode gradients. B\"arligea et al.~\cite{Barligea.2026-EnablingLieAlgebraicClassical} subsequently supplied symmetry- and subspace-adapted bases and efficient preprocessing primitives beyond the previously dominant free-fermionic setting. We first recall these ingredients before specializing the invariant space to the minimal module generated by the requested observable.

\subsection{Standard adjoint-space Lie-algebraic simulation}
Before turning to bosons, we recall the standard finite-dimensional adjoint-space construction familiar from qubit and spin systems, using a Hermitian operator convention throughout.
Let $H_1,\ldots,H_K$ be Hermitian generators and consider circuits of the form
\begin{equation}
U_\ell=e^{-\ii\theta_\ell H_{k_\ell}},
\quad
U(\bm\theta)=U_L\cdots U_1,
\label{eq:circuit-unitary}
\end{equation}
so that $U_1$ acts first on the input state. We now seek exact values for observables' mean values,
\begin{equation}
\begin{aligned}
  \langle O(\bm\theta)\rangle
  &=
  \Tr[
    O\,U(\bm\theta)\rho_{\mathrm{in}}U(\bm\theta)^\dagger
  ]\\
  &=
  \Tr[
    U(\bm\theta)^\dagger O U(\bm\theta)\rho_{\mathrm{in}}
  ].
\end{aligned}
  \label{eq:mean-value-task}
\end{equation}
For a Hermitian generator $H$, the adjoint action is the linear map that sends an operator $A$ to its commutator with $H$. We use the Hermitian convention
\begin{equation}
\mathrm{ad}_H(A):=\ii[H,A],
\label{eq:hermitian-ad}
\end{equation}
so that $\mathrm{ad}_H(A)$ remains Hermitian whenever $H$ and $A$ are Hermitian. Multiplication by $\ii$ identifies this convention with the usual skew-Hermitian formulation generated by $\ii H$.
In this convention, the dynamical Lie algebra (DLA) is obtained by closing the generators under repeated adjoint actions,
\begin{equation}
\begin{aligned}
  \g
  =
  \spann_{\mathbb R}
  \{
    & \mathrm{ad}_{H_{\alpha_1}}\cdots
    \mathrm{ad}_{H_{\alpha_\ell}}(H_\beta)
    : \\
    & \ell\ge 0,\;
    \alpha_1,\ldots,\alpha_\ell,\beta\in\{1,\ldots,K\}
  \}.
\end{aligned}
  \label{eq:DLA}
\end{equation}
Here $\ell$ is the depth of the nested commutator. In a finite-dimensional operator space, the span stabilizes after finitely many steps; if $d=\dim\mathfrak g$, words with $\ell\leq d-1$ suffice. For example, the one-qubit generators $H_1=\tfrac{X}{2}$ and $H_2=\tfrac{Z}{2}$ generate $\spann_{\mathbb R}\{X,Y,Z\}$ because $\ii[H_1,H_2]=\tfrac{Y}{2}$, and further commutators produce no additional directions. Thus the usual DLA is already a reachability construction, since it is the space reached from the generator set under repeated commutation with the generators. 

The original framework assumed compact semisimple Lie algebras to replace the physical Hilbert-space representation by finite algebraic data~\cite{Somma.2005-QuantumComputationComplexity,Somma.2006-EfficientSolvabilityHamiltonians}. Semisimplicity guarantees a faithful adjoint representation and provides the highest-weight structure used to encode input states efficiently, while compactness supplies a positive-definite invariant inner product. These assumptions, however, are stronger than what the propagation below requires, which is merely a finite-dimensional invariant operator space with computable generator actions and input overlaps.

For this finite adjoint propagation, let $\{B_\alpha\}_{\alpha=1}^d$ be a Hilbert--Schmidt orthonormal Hermitian basis of $\g$, where ${d=\dim\g}$, with inner product ${\langle B_\alpha,B_\beta\rangle=\Tr[B_\alpha^\dagger B_\beta]}$. More generally, the propagation formulas require only a real basis of the relevant invariant operator space; Hilbert--Schmidt orthonormality is used only to extract coefficients. The infinitesimal adjoint action of a basis element $B_\gamma$ is represented by the real $d\times d$ matrix $\adphi(B_\gamma)$ defined by
\begin{equation}
  \mathrm{ad}_{B_\gamma}(B_\alpha)
  =
  \ii[B_\gamma,B_\alpha]
  =
  \sum_{\beta=1}^d
  \left(\adphi(B_\gamma)\right)_{\alpha\beta}B_\beta .
  \label{eq:adphi-definition}
\end{equation}
Because the basis is Hilbert--Schmidt orthonormal, the expansion coefficients of the adjoint action are
\begin{equation}
f_{\alpha\beta}^{\ \ \gamma}:=\left\langle B_\gamma,\ii[B_\alpha,B_\beta]\right\rangle,\qquad \left(\adphi(B_\gamma)\right)_{\alpha\beta}=f_{\gamma\alpha}^{\ \ \beta}.
\label{eq:struct}
\end{equation}
The coefficients $f_{\alpha\beta}^{\ \ \gamma}$ are the \emph{structure constants} in this Hermitian convention.
For a general Hamiltonian $H=\sum_\gamma h_\gamma B_\gamma$, we may therefore write,
\begin{equation}
  \adphi(H)
  =
  \sum_\gamma h_\gamma\: \adphi(B_\gamma).
  \label{eq:adphi-linear}
\end{equation}
The finite Heisenberg (adjoint) action is then obtained by exponentiation,
\begin{equation}
  e^{\ii\theta H}B_\alpha e^{-\ii\theta H}
  =
  \sum_{\beta=1}^d
  \left(e^{\theta\:\adphi(H)}\right)_{\alpha\beta}B_\beta.
  \label{eq:finite-adjoint-one-layer}
\end{equation}
Thus $\adphi(H)$ and its exponential are $d\times d$ matrices, where $d=\dim\g$. The adjoint propagation is independent of the input state; $\rho_{\mathrm{in}}$ enters only through the $d$ input overlaps defined below, whose evaluation cost depends on how the state is represented.

Returning to \cref{eq:mean-value-task}, if the observable is expanded as $O=\sum_\alpha w_\alpha B_\alpha$ with $\bm w^{\top}=(w_1,\dots,w_d)$, and assuming that the input overlaps
\begin{equation}
e_\alpha^{\mathrm{in}}
=
\Tr[B_\alpha\rho_{\mathrm{in}}]
\end{equation}
exist and are efficiently available (i.e., can be computed classically in polynomial time), then the expectation value~\eqref{eq:mean-value-task} is the contraction of two $d$-dimensional objects through the adjoint propagator:
\begin{equation}
\begin{aligned}
  \langle O(\bm\theta)\rangle
  &=
  \Tr[
    U(\bm\theta)^\dagger O U(\bm\theta)\rho_{\mathrm{in}}
  ]
  \\
  &=
  \bm w^{\top}
  \Adphi(U(\bm\theta))
  \bm e^{\mathrm{in}}
  \\
  &=
\bm w^{\top}
e^{\theta_L\adphi(H_{k_L})}\cdots
e^{\theta_1\adphi(H_{k_1})}
\bm e^{\mathrm{in}}.
\end{aligned}
\label{eq:eval}
\end{equation}
This is the basic $\gsim$ evaluation formula~\cite{Goh.2025-LiealgebraicClassicalSimulations}. It gives an efficient exact algorithm when $d=\poly(n)$ and the basis admits efficient construction or application of the adjoint matrices, observable coordinates, and input overlaps~\cite{Barligea.2026-EnablingLieAlgebraicClassical}.

\subsection{Observable-seeded reachable modules}
For the forward mean-value problem, propagating the full DLA is stronger than necessary. Indeed, the same reachability operation can be seeded by the observable itself in lieu of the generators. This distinction is mostly hidden in finite systems, but it becomes essential for bosons. The Hilbert space is infinite-dimensional and the usual compactness and finite-representation assumptions may fail, yet the linear operator space generated from a particular observable by repeated commutation with the circuit generators can still be finite-dimensional.

\begin{definition}[Reachable operator module]
\label{def:reach}
Let $H_1,\ldots,H_K$ be Hermitian generators and let $O$ be an observable. The
reachable operator module of $O$ is
\begin{equation}
\begin{aligned}
  \mathcal V(O)
  :=
  \spann_{\R}
  \{
    & \mathrm{ad}_{H_{k_1}}\cdots\mathrm{ad}_{H_{k_\ell}}(O)
    :\\
    & \ell\ge 0,\;
    k_1,\ldots,k_\ell\in\{1,\dots,K\}\}.
\end{aligned}
  \label{eq:reachable-module}
\end{equation}
Equivalently, $\mathcal V(O)$ is the smallest real operator space containing $O$ and invariant under every $\mathrm{ad}_{H_k}$. It is a module for the DLA generated by the $H_k$, but it need not itself be a Lie algebra, as closure under commutators inside $\mathcal V(O)$ is not required.
\end{definition}

We review the module terminology, its distinction from Lie-algebra and associative-algebra closure, and its role in adjoint propagation in Appendix~\ref{app:module-background}.
Related notions of reachable operator spaces are long established in geometric control theory~\cite{Jurdjevic.1972-ControlSystemsLie}, in quantum control~\cite{Schirmer.2001-CompleteControllabilityQuantum,Albertini.2003-NotionsControllabilityBilinear,DAlessandro.2021-IntroductionQuantumControl}, in symmetric systems theory~\cite{Zeier.2011-Symmetryprinciplesquantumsystems,Allcock.2024-DynamicalLieAlgebras,Kazi.2025-AnalyzingQuantumApproximate,Diaz.2023-ShowcasingBarrenPlateau,Gargiulo.2026-PauliStringsQuantumUnified}, in the commutant formalism of Refs.~\cite{Moudgalya.2022-HilbertSpaceFragmentation,Moudgalya.2023-SymmetriesCommutantAlgebras,Lastres.2026-NonuniversalityConservedSuperoperators}, and on the simulation side~\cite{Somma.2005-QuantumComputationComplexity,Somma.2006-EfficientSolvabilityHamiltonians,Kokcu.2022-FixedDepthHamiltonian,Smith.2025-OptimallyGeneratingSu,Goh.2025-LiealgebraicClassicalSimulations,Anschuetz.2023-EfficientClassicalAlgorithms}. None of these works, however, isolate the reachable operator module $\mathcal V(O)$ generated from a fixed observable under nested adjoint action, nor tie it to a criterion for exact classical simulability of bounded-photon bosonic mean values.

For every \(A\in\mathcal{V}(O)\), assume that the circuit
unitaries and conjugated basis operators share a common
invariant domain on which
\begin{equation}
    e^{i\theta H_k} A e^{-i\theta H_k}
    =
    e^{\theta\operatorname{ad}_{H_k}}(A)
\end{equation}
holds. Suppose that
$
D:=\dim_{\mathbb{R}}\mathcal{V}(O)<\infty
$,
choose a basis $\{B_\alpha\}_{\alpha=1}^{D}$, and define
the matrices $A_k$ by
\begin{equation}
    \operatorname{ad}_{H_k}(B_\alpha)
    =
    \sum_{\beta=1}^{D}(A_k)_{\alpha\beta}B_\beta.
\label{eq:module-action-matrices}
\end{equation}
Let $\bm w$ and $\bm e^{\mathrm{in}}$ denote the observable-coordinate and input-overlap vectors in this basis, defined as in the standard construction above. We then obtain the following theorem.

\begin{theorem}[Reachable-module simulation criterion]
Every circuit of the form~\cref{eq:circuit-unitary} preserves \(\mathcal{V}(O)\), and
its expectation value satisfies
\begin{equation}
    \langle O(\boldsymbol{\theta})\rangle
    =
    w^\top
    e^{\theta_L A_{k_L}}
    \cdots
    e^{\theta_1 A_{k_1}}
    e^{\mathrm{in}} .
\label{eq:module-eval}
\end{equation}
If \(D=\operatorname{poly}(n)\), \(L=\operatorname{poly}(n)\),
and the basis actions, observable coordinates, input overlaps,
and matrix propagators are accessible in polynomial time, this
formula gives exact polynomial-time expectation values and
reverse-mode gradients.
\label{thm:reach}
\end{theorem}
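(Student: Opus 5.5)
The plan is to work entirely in the Heisenberg picture, layer by layer, and reduce everything to finite-dimensional linear algebra on $\mathcal V(O)$.

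\emph{Step 1: invariance of $\mathcal V(O)$ under each layer.} By \cref{def:reach}, $\mathcal V(O)$ is invariant under every $\mathrm{ad}_{H_k}$. Since $D<\infty$, the restriction of $\mathrm{ad}_{H_k}$ to $\mathcal V(O)$ is the linear map represented by $A_k$ through \cref{eq:module-action-matrices}. Its exponential series therefore converges in the finite-dimensional space $\mathcal V(O)$, and we have
\begin{equation}
e^{\theta\,\mathrm{ad}_{H_k}}(B_\alpha)=\sum_{\beta}\bigl(e^{\theta A_k}\bigr)_{\alpha\beta}B_\beta .
\end{equation}
This follows by induction on powers, since $\mathrm{ad}_{H_k}^m(B_\alpha)=\sum_\beta (A_k^m)_{\alpha\beta}B_\beta$, together with linearity. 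The domain hypothesis stated before the theorem identifies this series with the genuine conjugation $e^{\ii\theta H_k}B_\alpha e^{-\ii\theta H_k}$. Each $U_\ell^\dagger(\cdot)U_\ell$ therefore maps $\mathcal V(O)$ into itself. Composing these maps shows that $U(\bm\theta)^\dagger\,\mathcal V(O)\,U(\bm\theta)\subseteq\mathcal V(O)$, which proves the preservation claim.

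\emph{Step 2: composing the layers into \cref{eq:module-eval}.} We have
\begin{equation}
U^\dagger O U=U_1^\dagger\cdots U_L^\dagger\,O\,U_L\cdots U_1 ,
\end{equation}
so the outermost conjugation acting on $O$ is by $U_L$. Write $O=\sum_\alpha w_\alpha B_\alpha$ and apply Step 1 successively to $U_L$, then $U_{L-1}$, and so on down to $U_1$. At each stage the coefficients are carried along as a row vector multiplied on the right. This gives
\begin{equation}
U^\dagger O U=\sum_{\beta}\bigl(\bm w^\top e^{\theta_L A_{k_L}}\cdots e^{\theta_1 A_{k_1}}\bigr)_\beta B_\beta .
\end{equation}
Here I must check the index ordering carefully. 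The convention $(A_k)_{\alpha\beta}$ with the input index first makes successive substitutions multiply on the right, so the order $L,\dots,1$ comes out as claimed. Taking the trace against $\rho_{\mathrm{in}}$ and using linearity with $e^{\mathrm{in}}_\beta=\Tr[B_\beta\rho_{\mathrm{in}}]$ then yields \cref{eq:module-eval}. This step requires the overlaps to exist, which is part of the hypothesis.

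\emph{Step 3: complexity and gradients.} Given $\bm w$, $\bm e^{\mathrm{in}}$ and the $A_k$ in polynomial time, each $e^{\theta A_k}$ is a $D\times D$ matrix exponential. It can be computed to machine precision, or exactly in the sense of standard numerical linear algebra, in $O(D^3)$ time. Alternatively, one applies it to a vector via the assumed accessible propagators. Sweeping from the left with a row vector costs $O(LD^2)$ after the propagators are formed.

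For gradients, observe that
\begin{equation}
\partial_{\theta_\ell}e^{\theta_\ell A_{k_\ell}}=A_{k_\ell}e^{\theta_\ell A_{k_\ell}} .
\end{equation}
Hence $\partial_{\theta_\ell}\langle O\rangle=\bm\lambda_\ell^\top A_{k_\ell}\bm\mu_\ell$, where $\bm\lambda_\ell^\top$ is the forward-propagated row vector through layers $L,\dots,\ell$ and $\bm\mu_\ell$ is the backward-propagated input vector through layers $\ell-1,\dots,1$. Storing all $\bm\mu_\ell$ in one pass and all $\bm\lambda_\ell$ in the other gives every partial derivative in $O(LD^2)$ total, which is reverse-mode differentiation. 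Since $D,L=\poly(n)$, the whole procedure runs in polynomial time.

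The main obstacle is the analytic justification in Step 1 for unbounded bosonic operators. The identity $e^{\ii\theta H}Ae^{-\ii\theta H}=e^{\theta\,\mathrm{ad}_H}(A)$ is not automatic, since the power series of the conjugation need not converge a priori. My approach is to rely explicitly on the stated common-invariant-domain assumption. I would also remark that the assumption is natural here: $F(\theta)=e^{\ii\theta H}Ae^{-\ii\theta H}$ and $G(\theta)=e^{\theta\,\mathrm{ad}_H}A$ satisfy the same linear ODE $\dot X=\ii[H,X]$ with the same initial value, and $G$ stays in the finite-dimensional space $\mathcal V(O)$. Uniqueness of solutions on the invariant domain would then give $F=G$ weakly on vectors in that domain.
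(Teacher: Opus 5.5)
Your proposal is correct and follows essentially the same route as the paper. You restrict each $\mathrm{ad}_{H_k}$ to the finite-dimensional invariant module, exponentiate it to $e^{\theta A_k}$, compose the layers in the order $L,\dots,1$ acting on the row vector $\bm w^\top$, and contract with $\bm e^{\mathrm{in}}$. The differences are minor: you obtain the exponential from the power series where the paper solves the coefficient ODE $\tfrac{\mathrm{d}}{\mathrm{d}t}\bm c^\top=\bm c^\top A_k$, and you invoke the common-invariant-domain hypothesis explicitly and derive the reverse-mode gradient formula directly, both of which the paper leaves implicit or defers to its gradient appendix.
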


\begin{proof}
See Appendix~\ref{prf:reach}
\end{proof}

The reduction replaces Hilbert-space propagation with exact
finite-dimensional linear algebra. The matrix
operations may still be evaluated to a prescribed numerical
precision, which supplies the only approximation in the
calculation; otherwise the calculation is free of any Fock-space cutoff, moment closure or perturbative truncation.

Reverse-mode gradients follow by differentiating the same product of finite-dimensional matrices and accumulating the derivatives in reverse order~\cite{Goh.2025-LiealgebraicClassicalSimulations}. A breakdown of the complete construction is captured in \cref{fig:summary}, from the mean-value simulation problem in \cref{eq:mean-value-task} through the observable-seeded closure in \cref{eq:reachable-module} to the finite-dimensional evaluation in \cref{eq:module-eval}.

Next, we turn our attention to fixed-order correlation functions. To that end, for fixed \(m=O(1)\), let
\[
    O_j^{(j)}
    :=
    \bigl(U^{(j)}\bigr)^\dagger O_j U^{(j)},
    \qquad j=1,\ldots,m,
\]
where the \(U^{(j)}\) may denote different circuits or circuit
prefixes. Suppose that \(O_j^{(j)}\) lies in a reachable module
\(\mathcal{V}_j\) with basis
\(\{B_{\alpha_j}^{(j)}\}_{\alpha_j=1}^{D_j}\), and expand in the basis,
\begin{equation}
    O_j^{(j)}
    =
    \sum_{\alpha_j}
    \widetilde w_{\alpha_j}^{(j)}
    B_{\alpha_j}^{(j)} .
\label{eq:reach-mod-expansion}
\end{equation}
This allows us to define the ordered input moments,
\[
    E^{\mathrm{in}}_{\alpha_1\cdots\alpha_m}
    :=
    \operatorname{Tr}\!\left[
        B_{\alpha_1}^{(1)}
        \cdots
        B_{\alpha_m}^{(m)}
        \rho_{\mathrm{in}}
    \right],
\]
for which we arrive at the following corollary.

\begin{corollary}[Fixed-order correlation functions]
\label{cor:module-correlators}
The ordered correlation function satisfies
\begin{equation}
    \left\langle
        O_1^{(1)}\cdots O_m^{(m)}
    \right\rangle
    =
    \sum_{\alpha_1,\ldots,\alpha_m}
    \left(
        \prod_{j=1}^{m}
        \widetilde w_{\alpha_j}^{(j)}
    \right)
    E^{\mathrm{in}}_{\alpha_1\cdots\alpha_m}.
\end{equation}
For fixed \(m\), this contraction is efficient whenever the
individual module actions and the required input-moment
contraction are efficiently accessible.
\end{corollary}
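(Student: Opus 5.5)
The identity is a multilinear expansion, and I will prove it that way. First, for each $j$, the assumption $O_j^{(j)}\in\mathcal V_j$ together with the basis expansion \cref{eq:reach-mod-expansion} gives $O_j^{(j)}=\sum_{\alpha_j}\widetilde w^{(j)}_{\alpha_j}B^{(j)}_{\alpha_j}$. The coefficients $\widetilde w^{(j)}$ are obtained exactly as in \cref{thm:reach}. Writing $O_j=\sum_\alpha w^{(j)}_\alpha B^{(j)}_\alpha$ and propagating through the circuit $U^{(j)}$ gives
\begin{equation*}
\widetilde{\bm w}^{(j)\top}=\bm w^{(j)\top}e^{\theta_L A_{k_L}}\cdots e^{\theta_1A_{k_1}},
\end{equation*}
using the module action matrices of \cref{eq:module-action-matrices} for $\mathcal V_j$. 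This step relies on the invariance of $\mathcal V_j$ established in the proof of \cref{thm:reach}.

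Next I substitute these expansions into the ordered product $O_1^{(1)}\cdots O_m^{(m)}$. Operator multiplication is multilinear and the sums are finite, since each $D_j<\infty$. The product therefore equals the finite sum $\sum_{\alpha_1,\ldots,\alpha_m}\prod_j\widetilde w^{(j)}_{\alpha_j}\,B^{(1)}_{\alpha_1}\cdots B^{(m)}_{\alpha_m}$, with the operator order preserved. Multiplying by $\rho_{\mathrm{in}}$ and taking the trace, which is linear, gives the stated formula with $E^{\mathrm{in}}_{\alpha_1\cdots\alpha_m}$.

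The only delicate point is analytic. In the bosonic setting the $B_\alpha$ are unbounded, so the products and traces must make sense. I handle this with the standing domain assumption preceding \cref{thm:reach}. All basis operators and conjugated operators act on a common invariant domain, for instance finite-particle vectors or Schwartz-type vectors. I assume that $\rho_{\mathrm{in}}$ is such that each $\Tr[B^{(1)}_{\alpha_1}\cdots B^{(m)}_{\alpha_m}\rho_{\mathrm{in}}]$ exists; this is implicit in defining $E^{\mathrm{in}}$. Under these assumptions the finite sum can be traced term by term. I expect this justification to be the main obstacle, and I would state it as a hypothesis rather than derive it in general.

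For the complexity claim, each $\widetilde{\bm w}^{(j)}$ costs $\mathrm{poly}(D_j,L)$ by \cref{thm:reach}. The contraction runs over $\prod_jD_j$ index tuples. For fixed $m=O(1)$ and $D_j=\mathrm{poly}(n)$ this count is polynomial. Efficiency then holds provided the moment tensor $E^{\mathrm{in}}$, or its contraction against the product of the $\widetilde{\bm w}^{(j)}$, can be evaluated in polynomial time, which is exactly the stated accessibility hypothesis.
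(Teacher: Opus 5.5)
Your proposal is correct and follows the paper's own proof essentially step for step. In both, you expand each $O_j^{(j)}$ in its module basis, multiply out the ordered product by multilinearity while keeping the operator order (so unequal-time and out-of-time-ordered products are covered), use linearity of the trace to obtain $E^{\mathrm{in}}$, and count $\prod_j D_j$ terms for fixed $m$. Your explicit hypothesis that each $\Tr[B^{(1)}_{\alpha_1}\cdots B^{(m)}_{\alpha_m}\rho_{\mathrm{in}}]$ exists on a common invariant domain is something the paper leaves implicit in the definition of $E^{\mathrm{in}}$, and it is a sound way to handle the unbounded bosonic basis operators.
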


\begin{proof}
See Appendix~\ref{prf:module-correlators}.
\end{proof}

The same construction covers equal-time, unequal-time, and out-of-time-ordered products, as well as reverse-mode gradient recursion for automatic differentiation, which we detail in Appendix~\ref{app:module-products-gradients}. The resulting reverse-mode derivatives are validated against finite differences and used in a statistically replicated nonlinear two-photon control problem over depths $1$--$8$ and lattices as large as $9\times9$ in Appendix~\ref{app:num-gradients}, where success fractions distinguish ansatz expressivity from optimization robustness in the shallow-depth regime.

We see that the standard DLA criterion is recovered as the special case in which the observable is seeded \textit{inside} the generator algebra. Indeed, if $O\in\g$, then $\mathcal V(O)\subseteq\g$, so adjoint propagation in the full DLA is a valid choice. The converse need not hold, however, and $\mathcal V(O)$ can be much smaller than $\g$, meaning it may be unnecessary to represent the full algebra faithfully. For the forward simulation problem considered here, only the induced action on the reachable module and the input overlaps are used. 

This observation is the point at which bosons depart from the standard finite spin intuition. First, central directions and affine terms can be included in the module basis. More importantly, we can propagate the orbit of the observable instead of the full generator algebra which was previously done in spin and fermionic systems. For Gaussian bosonic dynamics and a linear observable $O\in\spann_{\mathbb R}\{\mathds{1},\hat x_1,\hat p_1,\ldots,\hat x_n,\hat p_n\}$, the reachable module has dimension at most $2n+1$, even though the ambient Gaussian algebra has dimension $\mathcal O(n^2)$. This shows that the DLA dimension can substantially overestimate the operator space required by a particular problem. Standard Gaussian optics already realizes this reduction through first-moment propagation; the significance here is that it follows directly from the reachable-module criterion and extends uniformly to the higher-order and non-Gaussian settings developed below.

\section{Heisenberg--Weyl algebra and Gaussian optics}
\label{sec:heisenbergweyl}

We now apply the reachable-module formalism of \cref{sec:finiteopertormodules} to bosonic modes. Gaussian dynamics provides the first example of the module criterion. Although the bosonic Fock space is infinite-dimensional, quadratic Hamiltonians preserve finite-dimensional spaces of observables of fixed polynomial degree. 
Throughout this section, we use the Hermitian operator convention of~\cref{eq:hermitian-ad}.

\subsection{Gaussian algebra and symplectic propagation}
\label{ssec:Gaussianpropagation}
A bosonic system of $n$ modes is described by creation and annihilation operators $\hat a_k,\hat a_k^\dagger$, with ${k=1,\dots,n}$, satisfying the canonical commutation relations (CCR),
\begin{equation}
  [\hat a_k,\hat a_l^\dagger]=\delta_{kl}\,\mathds{1},
  \qquad
  [\hat a_k,\hat a_l]=[\hat a_k^\dagger,\hat a_l^\dagger]=0 .
  \label{eq:ccr}
\end{equation}
Equivalently, in terms of quadratures
\begin{equation}
\hat x_k=(\hat a_k+\hat a_k^\dagger)/\sqrt2,
\qquad
\hat p_k=(\hat a_k-\hat a_k^\dagger)/(\sqrt2\,\ii),
\label{eq:bquadratures}
\end{equation}
one has $[\hat x_k,\hat p_l]=\ii\,\delta_{kl}\mathds{1}$. We collect the quadratures into the Hermitian phase-space vector
\begin{equation}
\Rvec=(\hat x_1,\hat p_1,\ldots,\hat x_n,\hat p_n)^{\top}.
\end{equation}
Its components are ordered as $\Rvec_{2k-1}=\hat x_k$ and $\Rvec_{2k}=\hat p_k$ for $k=1,\ldots,n$.
In these conventions, the CCR take the compact matrix form
\begin{equation}
[\Rvec_i,\Rvec_j]=\ii\Omega_{ij}\mathds{1},\qquad \Omega=\bigoplus_{k=1}^n\begin{pmatrix}0&1\\-1&0\end{pmatrix}.
\label{eq:Omega}
\end{equation}
Under the Hermitian bracket of \cref{eq:hermitian-ad},
$\mathrm{ad}_{\Rvec_i}(\Rvec_j)
=
-\Omega_{ij}\mathds{1}$,
so the nonzero structure coefficients of the linear Heisenberg--Weyl basis are read directly from $-\Omega$ in the index convention of \cref{eq:struct}. They are therefore available in closed form, with only one nonzero entry in each noncentral row. All higher commutators used below follow from \cref{eq:Omega} and the Leibniz rule.

The identity and linear quadratures therefore close on the Heisenberg--Weyl algebra,
\begin{equation}
  \h_n:=\spann_{\R}\{\, \mathds{1},\ \hat x_1,\dots,\hat x_n,\
        \hat p_1,\dots,\hat p_n\,\},
\label{eq:HWalgebra}
\end{equation}
with $\dim\h_n=2n+1$. By \cref{eq:Omega}, the Hermitian bracket of any two linear quadratures is proportional to the identity. Since the identity commutes with every operator, applying one further bracket gives zero. Hence $\h_n$ is two-step nilpotent.

Linear Hamiltonians in $\h_n$ generate displacements, and to obtain the usual Gaussian transformations while remaining finite-dimensional, we add quadratic Hamiltonians.
To describe quadratic Hamiltonians, let $\operatorname{Sym}$ denote Weyl symmetrization, namely the average over all orderings of its factors; in particular, $\operatorname{Sym}(\Rvec_i\Rvec_j)=\tfrac{1}{2}(\Rvec_i\Rvec_j+\Rvec_j\Rvec_i)$. The Hermitian quadratic span is
\begin{equation}
\mathfrak q_n:=\spann_{\mathbb R}\left\{\operatorname{Sym}(\Rvec_i\Rvec_j):1\leq i\leq j\leq2n\right\}.
\label{eq:quadraticspan}
\end{equation}

The Leibniz identity $[AB,C]=A[B,C]+[A,C]B$ reduces commutators between two such quadratic polynomials to the CCR. Since every elementary quadrature commutator is proportional to the identity, the bracket of a quadratic operator with a linear operator is linear, while the bracket of two quadratic operators remains quadratic. Thus $[\mathfrak q_n,\h_n]\subseteq\h_n$ and $[\mathfrak q_n,\mathfrak q_n]\subseteq\mathfrak q_n$.

The quadratic operators form the Lie algebra $\mathfrak q_n$, which through its action on $\Rvec$ is isomorphic to the real symplectic algebra $\mathfrak{sp}(2n,\mathbb R)$. Together with the Heisenberg--Weyl algebra $\h_n$, they form the finite-dimensional ambient algebra of Gaussian bosonic dynamics~\cite{Bartlett2002,Wang.2007-QIGaussian,Weedbrook2012},
\begin{equation}
\g_{\mathrm{Gauss}}=\mathfrak q_n\ltimes\h_n\cong\mathfrak{sp}(2n,\mathbb R)\ltimes\h_n,
\label{eq:gGauss}
\end{equation}
with $\dim\g_{\mathrm{Gauss}}=(n+1)(2n+1)$. The notation $\ltimes$ denotes the semidirect sum in which $\h_n$ is an ideal of $\g_{\mathrm{Gauss}}$ and $\mathfrak q_n$ acts on $\h_n$ by commutation. For example, $\ii[\hat x_k^2,\hat p_k]=-2\hat x_k$. Notice that the symplectic factor $\mathfrak q_n\cong\mathfrak{sp}(2n,\mathbb R)$ is simple but non-compact, whereas the full Gaussian algebra $\g_{\mathrm{Gauss}}$ is not semisimple because $\h_n$ is a nonzero nilpotent ideal.

A concrete Gaussian circuit may generate only a subalgebra of $\g_{\mathrm{Gauss}}$, but \cref{eq:gGauss} is the ambient algebra for arbitrary at-most-quadratic bosonic generators. 
To recover the corresponding action on linear observables, we can express an at-most-quadratic Hamiltonian as
\begin{equation}
  H=\frac12\,\Rvec^{\!\top}G\,\Rvec-\bm c^{\top}\Om\,\Rvec,
  \label{eq:Hquad}
\end{equation}
with ${G=G^{\top}\in\R^{2n\times 2n}}$ and ${\bm c\in\R^{2n}}$. 
For the time-independent Hamiltonian in \cref{eq:Hquad}, the Heisenberg equation is $\frac{\dd}{\dd t}\Rvec(t)=\ii[H,\Rvec(t)]$. Evaluating this commutator with the CCR and the Leibniz identity gives
\begin{equation}
  \frac{\dd}{\dd t}\Rvec(t)
  =
  \Om G\,\Rvec(t)-\bm c,
  \quad
  \Rvec(t)=S(t)\Rvec(0)-\bm d(t),
  \label{eq:heom}
\end{equation}
where
\begin{equation}
  S(t)=e^{t\Om G}\in\Sp(2n,\R),
  \quad
  \bm d(t)=\int_0^t e^{s\Om G}\bm c\,\dd s .
  \label{eq:symplectic}
\end{equation}
Since $G$ is symmetric, ${(\Om G)^{\top}\Om+\Om(\Om G)=0}$, and hence
$S(t)\in\Sp(2n,\R)$. Thus, the adjoint action of Gaussian dynamics on the linear module is the familiar affine symplectic transformation.

Equivalently, one may augment the quadrature vector by the central coordinate,
\begin{equation}
\begin{gathered}
  \widetilde{\Rvec}
  :=
  \begin{pmatrix}
    \mathds{1}\\
    \Rvec
  \end{pmatrix},
  \\
  \widetilde{\Rvec}(t)
  =
  \widetilde S(t)\widetilde{\Rvec}(0),
  \quad
  \widetilde S(t)
  =
  \begin{pmatrix}
    1&0\\
    -\bm d(t)&S(t)
  \end{pmatrix},
\end{gathered}
  \label{eq:augmentedS}
\end{equation}
where $S(t)$ and the displacement vector $\bm d(t)$ are defined in
\cref{eq:symplectic}.
Thus affine displacements remain within the finite-dimensional module because the inhomogeneous term is represented by the identity operator, which spans the center of $\h_n$.

To express the homogeneous part of the Gaussian evolution in the standard Bogoliubov form, define the creation--annihilation vector $\rvec=(\hat a_1,\ldots,\hat a_n,\hat a_1^\dagger,\ldots,\hat a_n^\dagger)^\top$. Let $C$ be the fixed change-of-basis matrix satisfying $\rvec=C\Rvec$. The corresponding transfer matrix is~\cite{Weedbrook2012}
\begin{equation}
C S(t) C^{-1}
=
\begin{pmatrix}
U(t)&V(t)\\
V(t)^*&U(t)^*
\end{pmatrix}.
\label{eq:bogoliubov}
\end{equation}
The quadrature and creation--annihilation representations therefore describe the same Gaussian transformation in different operator bases. Standard phase shifts, squeezers, two-mode squeezers, and beamsplitters are collected in Appendix~\ref{app:gaussian-gates}.

The important point for $\gsim$ is that none of these constructions relies on the infinite-dimensional bosonic Hilbert space. Linear observables evolve in a module of dimension $2n+1$, while the ambient Gaussian algebra has dimension $\mathcal O(n^2)$. The relevant finite object is therefore the reachable operator module, not the infinite bosonic Hilbert space and not necessarily the full ambient algebra.

\subsection{Fixed-degree moment modules}
\label{ssec:moment-modules}

The preceding affine transformation maps each quadrature to a linear combination of the identity and the quadratures. Consequently, a product of at most $m$ quadratures remains a polynomial of degree at most $m$, providing a closed operator space for higher-order expectation values and correlation functions.

To that end, let $O$ be a quadrature polynomial of degree at most $m$.
Using the augmented vector $\widetilde R$, we may write
\begin{equation}
    O
    =
    \sum_{\alpha_1,\ldots,\alpha_m}
    w_{\alpha_1\cdots\alpha_m}
    \widetilde R_{\alpha_1}
    \cdots
    \widetilde R_{\alpha_m},
\end{equation}
where identity components encode terms of degree below \(m\).
If we denote the corresponding input moment tensor by
\begin{equation}
    E^{\mathrm{in}}_{\alpha_1\cdots\alpha_m}
    :=
    \operatorname{Tr}\!\left[
        \widetilde R_{\alpha_1}
        \cdots
        \widetilde R_{\alpha_m}
        \rho_{\mathrm{in}}
    \right],
    \label{eq:input-moment-tensor}
\end{equation}
then we arrive at the following proposition~\footnote{For passive linear optics, this fixed-degree restriction also admits a representation-theoretic description via degree-$d$ number-preserving observables  that occupy at most the first $d+1$ irreducible components of the fixed-photon-number operator space~\cite{monbroussou2026classicalsimulationmodelconcentration}.}.

\begin{proposition}[Gaussian moment modules]
\label{prop:moment-tower}
Under quadratic dynamics,
\begin{equation}
    \langle O(t)\rangle
    =
    \sum_{\alpha_1,\ldots,\alpha_m}
    \sum_{\beta_1,\ldots,\beta_m}
    w_{\alpha_1\cdots\alpha_m}
    \left(
        \prod_{\ell=1}^{m}
        \widetilde S_{\alpha_\ell\beta_\ell}(t)
    \right)
    E^{\mathrm{in}}_{\beta_1\cdots\beta_m}.
\label{eq:moment-tower}
\end{equation}
Moreover, quadratic dynamics preserves the real polynomial
space
\begin{equation}
    \mathcal{P}_{\leq m}
    :=
    \operatorname{span}_{\mathbb{R}}
    \left\{
        \operatorname{Sym}(R^\alpha):|\alpha|\leq m
    \right\},
\end{equation}
and hence the dimension of the corresponding reachable operator module $\mathcal{V}(O)$ satisfies
\begin{equation}
    \dim_{\mathbb{R}}\mathcal{V}(O)
    \leq
    \dim_{\mathbb{R}}\mathcal{P}_{\leq m}
    =
    \binom{2n+m}{m}.
\end{equation}
\end{proposition}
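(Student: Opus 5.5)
The plan is to derive both claims from the affine symplectic action on the augmented vector, \cref{eq:augmentedS}, and then lift it multiplicatively to products of quadratures.

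\textbf{Evaluation formula.} Write the quadratic evolution as $U(t)=e^{-\ii tH}$ with $H$ of the form \cref{eq:Hquad}. Conjugation $A\mapsto U^\dagger A U$ is an algebra automorphism on a common invariant domain; for instance, finite-particle vectors or Schwartz functions are preserved by metaplectic and Weyl operators, and this is the same domain assumption made before \cref{thm:reach}. Because of this, the Heisenberg image of a product is the product of the Heisenberg images:
\[
U^\dagger \widetilde R_{\alpha_1}\cdots\widetilde R_{\alpha_m}U=\prod_{\ell=1}^m\Bigl(\sum_{\beta_\ell}\widetilde S_{\alpha_\ell\beta_\ell}(t)\widetilde R_{\beta_\ell}\Bigr).
\]
Here \cref{eq:augmentedS} supplies the single-factor map, including the trivial row $\widetilde R_0=\mathds{1}\mapsto\mathds{1}$. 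Expanding the product preserves operator order, since only the scalar coefficients $\widetilde S$ are commuted. Multiplying by $w_{\alpha_1\cdots\alpha_m}$, summing, and taking $\Tr[\,\cdot\,\rho_{\mathrm{in}}]$ then yields \cref{eq:moment-tower}. Linearity of the trace and finiteness of the sums are immediate; we only need the input moments $E^{\mathrm{in}}$ to exist, which is a hypothesis on $\rho_{\mathrm{in}}$ (finite $m$-th moments).

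\textbf{Invariance of $\mathcal P_{\le m}$.} Take any quadratic or linear generator $H\in\mathfrak q_n\oplus\h_n$. By the Leibniz rule, $\mathrm{ad}_H$ is a derivation:
\[
\mathrm{ad}_H(R_{i_1}\cdots R_{i_k})=\sum_j R_{i_1}\cdots\mathrm{ad}_H(R_{i_j})\cdots R_{i_k}.
\]
Each $\mathrm{ad}_H(R_{i_j})$ lies in $\h_n$ by $[\mathfrak q_n,\h_n]\subseteq\h_n$ and $[\h_n,\h_n]\subseteq\R\mathds{1}$. Every term is therefore a product of at most $k$ elements of $\spann\{\mathds{1},R_i\}$. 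It then remains to show that the span of all (possibly unsymmetrized) monomials of degree $\le m$ equals $\mathcal P_{\le m}$. I would prove this by induction on degree. Reordering any monomial via $[R_i,R_j]=\ii\Omega_{ij}\mathds{1}$ changes it only by lower-degree monomials, so $R^{\alpha}-\operatorname{Sym}(R^\alpha)\in$ the span of degree $\le|\alpha|-2$ monomials. The induction hypothesis then places that remainder in $\mathcal P_{\le m}$.

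One point needs care: the coefficients $\ii\Omega_{ij}$ are imaginary, while $\mathcal P_{\le m}$ is a real span. I would handle this by working in the complex span and then noting the following. $\mathrm{ad}_H$ maps Hermitian operators to Hermitian ones. The $\operatorname{Sym}(R^\alpha)$ are Hermitian and linearly independent, since they correspond to distinct Weyl symbols. So a Hermitian element of their complex span has real coefficients. I expect this reality and independence bookkeeping to be the main obstacle. The cleanest route is probably the Weyl correspondence, which maps $\operatorname{Sym}(R^\alpha)$ to the classical monomial $r^\alpha$ and turns $\mathrm{ad}_H$ for quadratic $H$ into the exact Poisson bracket, a first-order derivation preserving degree $\le m$. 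Invoking that correspondence makes both the invariance and the real-coefficient claims transparent.

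\textbf{Dimension bound.} $O\in\mathcal P_{\le m}$ and $\mathcal P_{\le m}$ is invariant under every $\mathrm{ad}_{H_k}$. Since \cref{def:reach} makes $\mathcal V(O)$ the smallest such invariant space, $\mathcal V(O)\subseteq\mathcal P_{\le m}$. The $\operatorname{Sym}(R^\alpha)$ with $|\alpha|\le m$ form a basis, by linear independence of the Weyl symbols $r^\alpha$. Counting monomials in $2n$ variables of degree $\le m$ gives $\binom{2n+m}{m}$, which proves the stated bound.
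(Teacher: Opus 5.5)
Your proposal is correct and follows essentially the paper's proof. You obtain the evaluation formula by applying the affine map \cref{eq:augmentedS} factor by factor, use the triangular relation between $\operatorname{Sym}(R^{\bm\alpha})$ and ordered monomials, and finish with the same binomial count. The differences are minor. You check invariance of $\mathcal P_{\le m}$ infinitesimally, through the derivation property of $\mathrm{ad}_H$, which matches \cref{def:reach} directly, whereas the paper proves invariance under the finite conjugation and then differentiates at $t=0$. You also spell out the real-coefficient bookkeeping that the paper leaves implicit.

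One correction to your domain remark: finite-particle vectors are not invariant under displacements or squeezers, since these map the vacuum to states supported on infinitely many Fock states. Of your two candidate domains, only the Schwartz space works.
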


\begin{proof}
See Appendix~\ref{prf:moment-tower}.
\end{proof}

Thus the unreduced contraction uses at most \((2n+1)^m\) tensor
entries. For \(m=1\), the proposition reproduces first-moment
propagation; for \(m=2\), it reproduces covariance propagation.
Higher-order input moments then encode non-Gaussian input states.

Indeed covariance matrices and Gaussian phase-space propagation follow as a corollary.  
For a state \(\rho\) with finite second moments, we may write
\begin{equation}
    \overline R_i
    :=
    \operatorname{Tr}[R_i\rho],
    \quad
    \sigma_{ij}
    :=
    \frac{1}{2}
    \operatorname{Tr}\!\left[
        \{R_i-\overline R_i,R_j-\overline R_j\}\rho
    \right],
\end{equation}
recovering the standard phase-space propagation equations.

\begin{corollary}[First and second moments]
\label{cor:1-2-moments}
Quadratic evolution propagates the mean and covariance as
\begin{equation}
    \overline{\boldsymbol R}(t)
    =
    S(t)\overline{\boldsymbol R}(0)-d(t),
    \quad
    \sigma(t)
    =
    S(t)\sigma(0)S(t)^\top .
\label{eq:1-2-moments}
\end{equation}
These identities hold without assuming that the input state is
Gaussian.
\end{corollary}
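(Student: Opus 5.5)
The plan is to obtain \cref{cor:1-2-moments} directly from \cref{prop:moment-tower} in the two lowest degrees $m=1$ and $m=2$, using the augmented propagator $\widetilde S(t)$ of \cref{eq:augmentedS}. Each step only tracks coefficients of the identity and the quadratures. The relation $\Rvec(t)=S(t)\Rvec(0)-\bm d(t)$ is an operator identity in the Heisenberg picture, so nothing about the input state enters except the finiteness of its first and second moments, which makes the traces defined.

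\textbf{Mean.} For $m=1$, take $O=R_i$. The first row of $\widetilde S$ fixes $\mathds 1$, so \cref{eq:moment-tower} gives
\[
\overline R_i(t)=\sum_j S_{ij}(t)\overline R_j(0)-d_i(t)\Tr[\rho].
\]
With $\Tr\rho=1$ this is the first identity in \cref{eq:1-2-moments}.

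\textbf{Covariance.} For $m=2$, take the symmetrized product $\tfrac12\{R_i,R_j\}$, which lies in $\mathcal P_{\le 2}$.
\begin{enumerate}[label=(\roman*)]
\item Substitute $R_k(t)=\sum_l S_{kl}R_l-d_k\mathds 1$ into both factors and expand. This gives
\[
\tfrac12\{R_i(t),R_j(t)\}=\sum_{k,l}S_{ik}S_{jl}\,\tfrac12\{R_k,R_l\}-d_i\sum_l S_{jl}R_l-d_j\sum_k S_{ik}R_k+d_id_j\mathds 1 .
\]
Take the expectation value in $\rho$.
\item Subtract $\overline R_i(t)\,\overline R_j(t)$, expanded using the mean result. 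The displacement terms cancel identically: the cross terms $-d_i(S\overline R)_j$ and $-d_j(S\overline R)_i$ and the constant $d_id_j$ all appear with matching coefficients.
\item What remains is
\[
\sum_{k,l}S_{ik}S_{jl}\left(\tfrac12\langle\{R_k,R_l\}\rangle-\overline R_k\overline R_l\right)=\bigl(S\sigma(0)S^\top\bigr)_{ij},
\]
since $\sigma_{kl}=\tfrac12\langle\{R_k,R_l\}\rangle-\overline R_k\overline R_l$.
\end{enumerate}

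The only point needing care is this displacement cancellation. It can be organized more cleanly by working with the centered operators $R_k-\overline R_k(0)\mathds 1$. These transform as
\[
R_k-\overline R_k(0)\mathds 1\;\longmapsto\;\sum_l S_{kl}\bigl(R_l-\overline R_l(0)\mathds 1\bigr)+\bigl(\overline R_k(t)\text{ terms}\bigr),
\]
that is, linearly under $S$ up to a scalar equal to minus the evolved mean. Centering at time $t$ therefore removes the scalar entirely.
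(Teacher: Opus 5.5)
Your proof is correct and is essentially the paper's argument: the mean follows from the affine Heisenberg map $R_i(t)=\sum_k S_{ik}(t)R_k-d_i(t)\mathds{1}$ together with trace cyclicity and $\Tr\rho=1$. The covariance follows by substituting that map into the symmetrized second moment, where the displacement cancels; your steps (i)--(iii) do this cancellation at the level of expectation values. The paper instead does it at the operator level, exactly as in your closing remark, through $R_i(t)-\overline R_i(t)\mathds{1}=\sum_k S_{ik}(t)\bigl(R_k-\overline R_k(0)\mathds{1}\bigr)$. One small correction to that aside: conjugating $R_k-\overline R_k(0)\mathds{1}$ leaves the scalar $\overline R_k(t)-\overline R_k(0)$, not minus the evolved mean, although centering at time $t$ still removes it as you say.
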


\begin{proof}
See Appendix~\ref{prf:1-2-moments}.
\end{proof}

Thus the module construction reproduces the conventional covariance-matrix method for
Gaussian inputs, and the same propagation framework can now naturally extend to fixed higher-order correlators, non-Gaussian inputs,
unequal-time correlators, and reverse-mode gradients. Appendix~\ref{app:num-gaussian} validates this propagation against analytic and
truncated-Fock references.

Given this result, we emphasize that there remains a clear boundary with boson sampling routines.  Under passive linear optics, every fixed-order photon-number correlator is contained in a fixed-degree moment module and is therefore covered by \cref{prop:moment-tower}. By contrast, collision-free transition probabilities between $N$-photon number states are squared permanents of the corresponding $N\times N$ interferometer submatrices~\cite{Aaronson2013}, with hardness arising when $N$ grows with system size. Similarly, evaluating photon-number moments and cumulants of Gaussian states becomes \#P-hard when their order is allowed to grow~\cite{Cardin2024}. Our construction therefore computes fixed-order expectation values and correlation functions, not complete output distributions. A concrete second-order formula, including the canonical example of the Hong--Ou--Mandel effect, is given in Appendix~\ref{app:bounded-order-boundary}.

The Gaussian family is the first bosonic instance of the reachable-module criterion. Here, degree truncation keeps $\mathcal V(O)$ polynomial for fixed observable degree. It does not, however, make generic non-Gaussian dynamics finite on the full Fock space. A Kerr Hamiltonian term, for example, has a small generator algebra, but the orbit of $\hat a$ under its adjoint action is infinite on the full bosonic Hilbert space. The next section therefore turns to a different finiteness mechanism, sector confinement, where number-conserving non-Gaussian Hamiltonians become finite matrices on fixed total-photon sectors. Symmetry-adapted refinements of the Gaussian module, which reduce the same adjoint map further under translation or permutation symmetry, are recorded in Appendix~\ref{app:gaussian-symmetry}.

\section{Exact bounded-photon dynamics and controlled sector leakage}
\label{sec:nongaussian}
In \cref{sec:heisenbergweyl}, Gaussian generators do not increase quadrature degree under commutation. Consequently, polynomials of degree at most $m$ form an invariant moment module for every fixed $m$.
We now introduce a mechanism that leverages confinement to a bounded set of photon-number sectors. This restriction enables non-Gaussian number-conserving interactions, including self-Kerr, cross-Kerr, and pair-hopping terms, while keeping the relevant photon-number subspace and reachable operator module finite. The construction is the bosonic analog of bounded-Hamming-weight simulation~\cite{Barligea.2026-EnablingLieAlgebraicClassical}. After establishing the exact result, we show how weak squeezing can be treated using systematically enlarged parity-resolved photon-number bands, with each increase in band depth suppressing the leading truncation error of number-conserving observables by two additional powers of the squeezing strength.

\subsection{Sector confinement and bounded-support simulation}

Let $\hat N=\sum_{k=1}^n \hat a_k^\dagger \hat a_k$ be the total photon-number operator. A Hamiltonian $H$ is number-conserving if $[H,\hat N]=0$. For the polynomial Hamiltonians considered here, this is equivalently the statement that, after normal ordering, every monomial contains the same number of creation and annihilation operators.

We consider finite linear combinations of the following number-conserving bosonic generators:
\begin{enumerate}[label=(\alph*),nosep]
  \item passive quadratic terms ${\sum_{k,l}h_{kl}\hat a_k^\dagger \hat a_l}$, with ${h=h^\dagger}$, so called because they transform annihilation operators only among themselves and do not create or annihilate photon pairs; these terms generate beamsplitters and phase shifters; 
  \item number-diagonal nonlinearities, including self-Kerr terms $\hat n_k(\hat n_k-1)$, cross-Kerr terms $\hat n_k\hat n_l$, and more generally polynomials $f(\hat n_1,\dots,\hat n_n)$ in the commuting number operators;
  \item interacting monomials of the form $\hat a_{k_1}^\dagger\cdots \hat a_{k_p}^\dagger \hat a_{l_1}\cdots \hat a_{l_p}$, such as pair hopping terms $\hat a_i^\dagger\hat a_j^\dagger\hat a_k\hat a_l$.
\end{enumerate}
The first family generates passive Gaussian dynamics, whereas the latter families include quartic and higher-order interaction terms outside $\g_{\mathrm{Gauss}}$.

Because $[H,\hat N]=0$, every such Hamiltonian preserves the eigenspaces of $\hat N$. Number conservation therefore decomposes the Fock space into invariant eigenspaces,
\begin{equation}
\begin{gathered}
  \mathcal H=\bigoplus_{N=0}^{\infty}\mathcal H_N,
  \\
  \mathcal H_N=\spann_{\mathbb C}\bigl\{\ket{\bm n}=\ket{n_1,\dots,n_n}:\sum_{k=1}^n n_k=N\bigr\}.
\end{gathered}
  \label{eq:sectordecomp}
\end{equation}

\begin{lemma}[Finite photon-number sectors]
\label{lem:finite-sector}
For every number-conserving Hamiltonian $\hat H$ and any $N\in\mathbb Z_{\ge0}$, the restriction $\hat H_N:=\hat H|_{\mathcal H_N}$ is a Hermitian matrix of size $d_{n,N}\times d_{n,N}$, where
\begin{equation}
  d_{n,N}=\dim\mathcal H_N=\binom{N+n-1}{N}.
  \label{eq:dimNsector}
\end{equation}
Thus, $e^{-\ii \hat Ht}$ restricts to a unitary ${e^{-\ii \hat H_Nt}\in U(d_{n,N})}$ on the $N$-photon sector $\mathcal H_N$.
\end{lemma}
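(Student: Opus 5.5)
The plan has three steps: show that $\mathcal H_N$ is invariant under $\hat H$, count the dimension of $\mathcal H_N$, and pass from the generator to the propagator.

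\emph{Invariance.} I will work with the generator families (a)--(c) written in normal order. Each normal-ordered monomial $\hat a_{k_1}^\dagger\cdots\hat a_{k_p}^\dagger\hat a_{l_1}\cdots\hat a_{l_p}$ with equally many creation and annihilation operators maps a Fock basis vector $\ket{\bm n}$ to a scalar multiple of another Fock vector $\ket{\bm n'}$ (possibly zero). Here $\sum_k n'_k=\sum_k n_k$, because each $\hat a_l$ lowers the total photon count by one and each $\hat a^\dagger_k$ raises it by one. Number-diagonal terms $f(\hat n_1,\dots,\hat n_n)$ act diagonally on $\ket{\bm n}$. Hence $\hat H$ maps the finite spanning set of $\mathcal H_N$ into $\mathcal H_N$.

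For a general number-conserving $\hat H$ I will also give the abstract version. If $\hat N\psi=N\psi$ and $[\hat H,\hat N]=0$ on the common domain, then $\hat N\hat H\psi=\hat H\hat N\psi=N\hat H\psi$, so $\hat H\psi\in\mathcal H_N$. This requires $\mathcal H_N\subset\operatorname{dom}\hat H$. That holds automatically for polynomial Hamiltonians, since finite-photon vectors lie in the domain of every polynomial in $\hat a_k,\hat a_k^\dagger$. This domain point is the only delicate step, given that the operators are unbounded; I will handle it by restricting to the finite-particle subspace, which is invariant and consists of analytic vectors.

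\emph{Dimension and Hermiticity.} The dimension count is stars and bars: the number of $\bm n\in\mathbb Z_{\ge0}^n$ with $\sum_k n_k=N$ equals the number of arrangements of $N$ stars and $n-1$ bars, namely $\binom{N+n-1}{N}$. The Fock vectors are orthonormal, so $d_{n,N}=\dim\mathcal H_N=\binom{N+n-1}{N}$. Hermiticity of $\hat H_N$ follows from symmetry of $\hat H$ on this domain: $\langle\phi,\hat H\psi\rangle=\langle\hat H\phi,\psi\rangle$ for $\phi,\psi\in\mathcal H_N$. The matrix elements $\langle\bm m|\hat H|\bm n\rangle$ therefore form a Hermitian $d_{n,N}\times d_{n,N}$ matrix.

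\emph{Propagator.} $\hat H$ is reduced by the orthogonal decomposition \cref{eq:sectordecomp}, acting as the Hermitian matrix $\hat H_N$ on each finite-dimensional block. It is therefore essentially self-adjoint on the finite-particle subspace, and its closure is the direct sum $\bigoplus_N\hat H_N$. By the functional calculus for direct sums, $e^{-\ii\hat Ht}=\bigoplus_N e^{-\ii\hat H_Nt}$. On each block the exponential of a Hermitian matrix is unitary, so $e^{-\ii\hat H_N t}\in U(d_{n,N})$, which is the claimed restriction. Alternatively, on the finite-dimensional space $\mathcal H_N$ the exponential series converges in norm, so one can verify directly that $e^{-\ii\hat Ht}\psi=e^{-\ii\hat H_Nt}\psi$ for $\psi\in\mathcal H_N$.
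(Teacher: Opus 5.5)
Your proposal is correct and follows essentially the same route as the paper. Invariance of $\mathcal H_N$ comes from $[\hat H,\hat N]=0$, the dimension from stars and bars, and Hermiticity of $\hat H_N$ from symmetry of $\hat H$ on the sector. Essential self-adjointness on the finite-particle domain then lets the unitary group restrict on each block to the convergent series $e^{-\ii\hat H_N t}$. Your explicit check on normal-ordered monomials and your remark that $\mathcal H_N\subset\operatorname{dom}\hat H$ for polynomial Hamiltonians make explicit a domain assumption that the paper uses only implicitly.
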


\begin{proof}
See Appendix~\ref{prf:finite-sector}.
\end{proof}

The dimension $d_{n,N}$ of this sector scales polynomially in the number of modes $n$ whenever $N$ is fixed, 
\begin{equation}
  d_{n,N}
  =\frac{n^N}{N!}\left(1+\mathcal O(n^{-1})\right)
  =\Theta(n^N).
  \label{eq:polyN}
\end{equation}
Here and below, $f(n)=\Theta(g(n))$ denotes asymptotically tight scaling.
Exact confinement, however, is not restricted to a single sector. For $\mathcal S\subseteq\{0,\ldots,N_{\max}\}$, let
\begin{equation}
\begin{gathered}
  \mathcal H_{\mathcal S}
  :=\bigoplus_{N\in\mathcal S}\mathcal H_N,
  \\
  P_{\mathcal S}:=\sum_{N\in\mathcal S}\Pi_N,
  \quad
  d_{n,\mathcal S}
  :=\sum_{N\in\mathcal S}d_{n,N},
\end{gathered}
  \label{eq:mixedcarrier}
\end{equation}
where $\Pi_N$ is the orthogonal projector onto the $N$-photon sector.
In particular,
\begin{equation}
  \dim\mathcal H_{\leq N_{\max}}
  =\sum_{N=0}^{N_{\max}}d_{n,N}
  =\binom{n+N_{\max}}{N_{\max}}
  =\Theta(n^{N_{\max}})
  \label{eq:boundedcarrier}
\end{equation}
for fixed $N_{\max}$. 
Assume henceforth that $N_{\max}=\mathcal O(1)$, and let $\{\hat H_k\}_{k=1}^{K}$ be number-conserving Hamiltonians whose restrictions to $\mathcal H_{\mathcal S}$ have efficiently computable matrix elements. We also assume that $\rho_{\mathrm{in}}=P_{\mathcal S}\rho_{\mathrm{in}}P_{\mathcal S}$.
This leads us to the following theorem.

\begin{theorem}[Exact simulation at bounded-photon number]
For every circuit $U$ generated by the $H_k$ and every observable $O$, 
\begin{equation}
\Tr[O\,U\rho_{\mathrm{in}}U^\dagger]
=
\Tr[P_{\mathcal S}OP_{\mathcal S}\,
U\rho_{\mathrm{in}}U^\dagger].
\label{eq:exact-sector-compression}
\end{equation}
Because $O_{\mathcal S}=P_{\mathcal S}OP_{\mathcal S}$ is Hermitian and the Hermitian bracket preserves Hermiticity, its reachable module satisfies $\mathcal V(O_{\mathcal S})\subseteq\operatorname{Herm}(\mathcal H_{\mathcal S})$, where $\operatorname{Herm}(\mathcal H_{\mathcal S})$ denotes the real vector space of Hermitian operators on $\mathcal H_{\mathcal S}$, with 
\begin{equation}
  \dim_{\mathbb R}\mathcal V(O_{\mathcal S})
  \leq d_{n,\mathcal S}^2
  =\mathcal O(n^{2N_{\max}}).
  \label{eq:boundedmodule}
\end{equation}
\label{thm:boundedN}
\end{theorem}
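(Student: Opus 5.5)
The argument splits into three steps: show that the evolved state stays inside $\mathcal H_{\mathcal S}$, deduce the trace identity~\eqref{eq:exact-sector-compression} from that invariance, and bound the module dimension by working on $\operatorname{Herm}(\mathcal H_{\mathcal S})$.

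\textbf{Step 1: the circuit preserves $\mathcal H_{\mathcal S}$.} For each generator, $[\hat H_k,\hat N]=0$. Hence $\hat H_k$ commutes with every spectral projector $\Pi_N$ of $\hat N$. By Lemma~\ref{lem:finite-sector}, $\hat H_k$ restricts to a finite Hermitian matrix $\hat H_{k,N}$ on each $\mathcal H_N$. On $\mathcal H_{\mathcal S}$ this restriction is the finite block-diagonal matrix $\bigoplus_{N\in\mathcal S}\hat H_{k,N}$.

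It follows that $e^{-\ii\theta\hat H_k}$ commutes with $P_{\mathcal S}$, and on $\mathcal H_{\mathcal S}$ it acts as the finite matrix exponential. The unbounded-operator subtleties disappear because every relevant object lives on the finite-dimensional invariant subspace $\mathcal H_{\mathcal S}$. Taking products, $[U,P_{\mathcal S}]=0$ for every circuit $U$.

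\textbf{Step 2: the trace identity.} Put $\rho_U:=U\rho_{\mathrm{in}}U^\dagger$. Using the assumption $\rho_{\mathrm{in}}=P_{\mathcal S}\rho_{\mathrm{in}}P_{\mathcal S}$ together with Step 1,
\[
\rho_U = UP_{\mathcal S}\rho_{\mathrm{in}}P_{\mathcal S}U^\dagger = P_{\mathcal S}\rho_U P_{\mathcal S}.
\]
Cyclicity of the trace then gives
\[
\Tr[O\rho_U]=\Tr[OP_{\mathcal S}\rho_UP_{\mathcal S}]=\Tr[P_{\mathcal S}OP_{\mathcal S}\,\rho_U],
\]
which is~\eqref{eq:exact-sector-compression}. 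Cyclicity is legitimate here because $\rho_U$ is trace class and $P_{\mathcal S}$ projects onto a finite-dimensional space. Consequently $P_{\mathcal S}OP_{\mathcal S}$ is a finite matrix, and no domain assumption on $O$ is needed beyond its sector matrix elements existing.

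\textbf{Step 3: the dimension bound.} The operator $O_{\mathcal S}$ is Hermitian and is supported on $\mathcal H_{\mathcal S}$, meaning $O_{\mathcal S}=P_{\mathcal S}O_{\mathcal S}P_{\mathcal S}$. For any such $A$, Step 1 gives
\[
\mathrm{ad}_{\hat H_k}(A)=\ii[\hat H_k,A]=\ii\bigl[P_{\mathcal S}\hat H_kP_{\mathcal S},A\bigr],
\]
which is again Hermitian and again supported on $\mathcal H_{\mathcal S}$. Therefore $\operatorname{Herm}(\mathcal H_{\mathcal S})$ is a real subspace containing $O_{\mathcal S}$ and invariant under every $\mathrm{ad}_{\hat H_k}$. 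Minimality in Definition~\ref{def:reach} then gives $\mathcal V(O_{\mathcal S})\subseteq\operatorname{Herm}(\mathcal H_{\mathcal S})$.

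The real dimension of the Hermitian $d\times d$ matrices is $d^2$, so $\dim_{\mathbb R}\mathcal V(O_{\mathcal S})\le d_{n,\mathcal S}^2$. Since $d_{n,\mathcal S}\le\dim\mathcal H_{\le N_{\max}}=\Theta(n^{N_{\max}})$ by~\eqref{eq:boundedcarrier}, this is $\mathcal O(n^{2N_{\max}})$.

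Finally, the invariant-domain hypothesis of Theorem~\ref{thm:reach} holds trivially on $\mathcal H_{\mathcal S}$. Theorem~\ref{thm:reach} therefore applies directly and yields efficient evaluation.

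\textbf{Main obstacle.} The delicate step is Step 1. One must justify that commutation of the unbounded generator $\hat H_k$ with $\hat N$ implies that $\hat H_k$ reduces on each sector and that its exponential commutes with $\Pi_N$. I would handle this by working on the dense domain of finite-photon vectors. Each normal-ordered number-conserving monomial maps $\mathcal H_N$ into itself, so $\hat H_k$ is a direct sum of finite Hermitian blocks. I would define $e^{-\ii\theta\hat H_k}$ blockwise, which is the self-adjoint closure consistent with the functional calculus, and from that block structure the commutation with $\Pi_N$ is immediate.
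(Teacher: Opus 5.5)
Your proposal is correct and follows the paper's proof essentially step for step. Both use the commutation of each $\hat H_k$, and hence of $U$, with $P_{\mathcal S}$; the identity $\rho_U=P_{\mathcal S}\rho_UP_{\mathcal S}$ plus cyclicity for the trace formula; and the invariance of $\operatorname{Herm}(\mathcal H_{\mathcal S})$ under the Hermitian bracket, followed by the $d_{n,\mathcal S}^2$ count and the bound $d_{n,\mathcal S}\le\binom{n+N_{\max}}{N_{\max}}$. The only cosmetic difference is that you write the restricted action as $\ii[P_{\mathcal S}\hat H_kP_{\mathcal S},A]$, where the paper writes $P_{\mathcal S}\,\ii[\hat H_k,B]\,P_{\mathcal S}$ together with its sector-pair blocks; your domain remarks mirror the essential self-adjointness argument in the proof of Lemma~\ref{lem:finite-sector}.
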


\begin{proof}
See Appendix~\ref{prf:boundedN}.
\end{proof}

If $[O,\hat N]=0$, every operator in the reachable module is block diagonal in photon number, and hence we may write
\begin{equation}
\begin{gathered}
  \mathcal V(O_{\mathcal S})
  \subseteq
  \bigoplus_{N\in\mathcal S}\operatorname{Herm}(\mathcal H_N),
  \\
  \dim\mathcal V(O_{\mathcal S})
  \leq \sum_{N\in\mathcal S}d_{n,N}^2
  \leq d_{n,\mathcal S}^2,
\end{gathered}
  \label{eq:diagonalmodule}
\end{equation}
by counting the dimensionality of each block along the diagonal.

The support condition $\rho_{\mathrm{in}}=P_{\mathcal S}\rho_{\mathrm{in}}P_{\mathcal S}$ of \cref{thm:boundedN} allows both block-diagonal mixtures and states with coherences between different photon-number sectors in $\mathcal S$.
For number-conserving observables, cross-sector coherences are invisible, and the diagonal modules of \cref{eq:diagonalmodule} suffice. Number-changing observables can instead probe these coherences through off-diagonal operator spaces $\operatorname{Hom}(\mathcal H_N,\mathcal H_M)$, whose elements map $\mathcal H_N$ to $\mathcal H_M$. For fixed-order products of such observables, the exact sector restriction must additionally contain any intermediate sectors visited between successive insertions. The construction remains polynomial whenever the largest sector reached along these nonzero paths has bounded photon number; if every insertion is number conserving, no enlargement beyond $\mathcal S$ is required. Thus, the relevant operator space remains the observable-dependent module rather than automatically the full space $\operatorname{End}(\mathcal H_{\mathcal S})$. The explicit sector-path decomposition, including coherent cross-sector inputs, is given in Appendix~\ref{app:bounded-sector-proofs}.

We demonstrate the distinction between diagonal and off-diagonal sector modules numerically in \cref{ssec:num-bounded}, together with the measured scaling of state propagation on bounded sector unions through $N_{\max}=5$. The same finite-sector construction enables the exact evaluation of a state-dependent squared commutator, independently validated at small system sizes and extended to $400$ modes in \cref{ssec:num-otoc}.

The Gaussian and bounded-photon constructions are complementary. This is because gaussian dynamics preserve operator degree while allowing arbitrary input states on the full Fock space, while sector confinement permits arbitrary number-conserving nonlinearities at the cost of bounded photon-number support. Interleaving such nonlinear dynamics with squeezing breaks exact sector confinement; the following subsection develops a systematically improvable approximation for this intermediate regime.

\subsection{Sector representation and nonlinear interactions}

Let us enumerate the Fock states in $\mathcal H_N$ as $\{\ket a\}_{a=1}^{d_{n,N}}$, with the matrix units $\hat E^{ab}=\ket a\!\bra b$ forming a basis of $\mathrm{End}(\mathcal H_N)$. Adapting the modified generalized Gell--Mann (MGGM) representation introduced for bounded-Hamming-weight, $U(1)$-equivariant dynamics in Ref.~\cite{Barligea.2026-EnablingLieAlgebraicClassical}, their symmetric and antisymmetric off-diagonal combinations, together with the diagonal projectors, form a Hermitian basis with sparse closed-form commutation rules. The explicit bosonic adaptation and its extension to sector-pair modules are given in Appendix~\ref{app:bounded-sector-proofs}.

The bosonic operators then become restricted generator matrix elements. For example,
\begin{equation}
\begin{gathered}
  \hat a_j^\dagger\hat a_l
  \ket{\ldots,n_j,\ldots,n_l,\ldots}
  \\=
  \sqrt{n_l(n_j+1)}
  \ket{\ldots,n_j+1,\ldots,n_l-1,\ldots},
\end{gathered}
  \label{eq:hopelt}
\end{equation}
with the result understood as zero when $n_l=0$. These bosonic factors are in direct analogy to the unit hopping amplitudes of Hamming-weight sectors. Number-diagonal nonlinearities are diagonal in the same basis, so Kerr terms preserve the closure property of the operators.

The distinction from passive Gaussian dynamics first becomes nontrivial at $N=2$. In the one-photon sector, passive hopping acts through the defining $n$-dimensional representation and Kerr terms contribute at most phases. With two or more photons, nonlinearities distinguish double occupancy from separated photons and generate interacting dynamics inaccessible to passive linear quantum optics~\cite{Knill.2001-SchemeEfficientQuantum}. We can understand this through the following example.

\begin{example}[Two-photon cross-Kerr sector]
\label{ex:crosskerr}
For $N=2$,
\begin{equation}
  d_{n,2}=\frac{n(n+1)}{2},
\end{equation}
with basis
\begin{equation}
  \left\{\ket{2_j}:1\leq j\leq n\right\}
  \cup
  \left\{\ket{1_j1_l}:1\leq j<l\leq n\right\}.
\end{equation}
The cross-Kerr Hamiltonian
\begin{equation}
  \hat H_{\chi}
  =\sum_{j<l}\chi_{jl}\hat n_j\hat n_l
\end{equation}
imprints the phase $e^{-\ii\chi_{jl}t}$ on $\ket{1_j1_l}$ while leaving double-occupancy components unaffected. Interleaving this interaction with passive hopping produces non-Gaussian dynamics within the two-photon sector $\mathcal H_2$ of dimension $\mathcal O(n^2)$, from which observables such as $\langle\hat n_i\hat n_j\rangle(t)$ can be evaluated exactly. This departure from the passive quadratic baseline is demonstrated in \cref{ssec:num-bounded}, while the formation and asymptotic co-tunnelling scale of repulsively bound pairs are shown in \cref{ssec:num-doublon}.
\end{example}

\subsection{Controlled leakage under squeezing}
\label{ssec:squeezing}

A squeezer is quadratic and therefore Gaussian, but it is not number conserving, since each pair-creation or pair-annihilation term changes the total photon number by $\pm2$. Interleaving squeezing with Kerr dynamics therefore destroys exact finite-sector confinement, however the resulting leakage is nevertheless banded and admits a controlled fixed-order perturbative treatment. To that end, consider
\begin{equation}
  \hat H(r,t)=\hat H_0(t)+r\hat V(t),
  \quad
  [\hat H_0(t),\hat N]=0,
  \label{eq:squeezedH}
\end{equation}
where $\hat V(t)$ is Hermitian and
\begin{equation}
  \Pi_M\hat V(t)\Pi_N=0
  \quad\text{unless}\quad
  |M-N|=2.
  \label{eq:squeeze-selection}
\end{equation}
For initial support $\mathcal S$, we define the parity-resolved depth-$k$ band
\begin{equation}
\begin{aligned}
  \mathcal S^{(k)}
  :=
  \{
    M\in\mathbb Z_{\geq0}:\ &\exists N\in\mathcal S,\ 
    M\equiv N\!\!\!\pmod 2,\\ & 
    |M-N|\leq2k
  \}.
\end{aligned}
  \label{eq:squeezeband}
\end{equation}
Let $U_r(t)$ and $U_{r,k}(t)$ denote the propagators generated by
$\hat H(r,t)$ and $\hat H_k(r,t)=P_k\hat H(r,t)P_k$, respectively. Here, $P_k:=\sum_{M\in\mathcal S^{(k)}}\Pi_M$ is the projector onto the retained sector space $P_k\mathcal H = \bigoplus_{M\in\mathcal S^{(k)}}\mathcal H_M$, while its reachable operator module is a subspace of $\operatorname{End}(P_k\mathcal H)$. For an input state supported on $\mathcal H_{\mathcal S}$ and a number-conserving observable $O$, we can define the following time-traces,
\begin{equation}
\begin{aligned}
F(r,t)
&:=
\Tr\!\left[
O\,U_r(t)\rho_{\mathrm{in}}U_r(t)^\dagger
\right],
\\
F_k(r,t)
&:=
\Tr\!\left[
P_kOP_k\,U_{r,k}(t)\rho_{\mathrm{in}}U_{r,k}(t)^\dagger
\right],
\end{aligned}
\label{eq:squeezing-expectations}
\end{equation}
and compare their weak-squeezing behavior by expanding the corresponding interaction-picture propagators perturbatively in $r$. To do so, we let $\max\mathcal S=N_{\max}=\mathcal O(1)$ and fix $k=\mathcal O(1)$, and let $\rho_{\mathrm{in}}$ be supported on $\mathcal H_{\mathcal S}$. Then under the sector-growth conditions stated in Appendix~\ref{app:squeezing}, the full and projected expectation values of \cref{eq:squeezing-expectations} have identical coefficients through order $2k+1$,
\begin{equation}
\left.\partial_r^qF(r,t)\right|_{r=0}
=
\left.\partial_r^qF_k(r,t)\right|_{r=0},
\quad
0\le q\le2k+1.
\label{eq:coefficientmatching}
\end{equation}
Consequently, for every finite time interval $[0,T]$, there are constants $r_\ast>0$ and $C_{T,k}<\infty$ such that the finite-time error is bounded
\begin{equation}
\sup_{0\leq t\leq T}
\left|F(r,t)-F_k(r,t)\right|
\leq
C_{T,k}|r|^{2(k+1)}
\quad
\text{for }|r|\leq r_\ast.
\label{eq:squeezeerror}
\end{equation}
Moreover, the projected module stays polynomial in $n$ at every fixed band depth,
\begin{equation}
\begin{aligned}
\dim(P_k\mathcal H)&=\mathcal O(n^{N_{\max}+2k}),
\\
\dim\operatorname{End}(P_k\mathcal H)
&=\mathcal O(n^{2(N_{\max}+2k)}).
\end{aligned}
\label{eq:dim-bound}
\end{equation}

We defer a detailed account of this perturbative expansion to Appendix~\ref{app:squeezing}, where we prove the coefficient matching under explicit sector-growth assumptions and derive the corresponding uniform error bound on finite time intervals.

In the expectation value $F(r,t)$, the squeezing interaction appears on both sides of the input state, through $U_r(t)$ and $U_r(t)^\dagger$. Reaching a sector outside the depth-$k$ band requires at least $k+1$ squeezing insertions. If $O$ conserves photon number, a nonzero contribution must connect equal photon-number sectors on the two sides. Any departure of the system outside the band must therefore either return or be matched by a departure on the other side, requiring at least $2(k+1)$ insertions in total. The first possible error consequently occurs at order $\mathcal O(r^{2(k+1)})$. A number-changing observable can itself connect different final sectors, so a single departure may contribute and the error can begin at order $\mathcal O(r^{k+1})$. At finite $k$, this calculation exactly solves the projected model, and increasing $k$ by one raises the first possible error order by two for number-conserving readouts.

For every fixed band depth $k$, the retained sector space remains polynomial in $n$; see \cref{eq:dim-bound}. 
For the explicit single- and two-mode squeezed-vacuum states considered in Appendix~\ref{app:squeezing}, the photon-number tails provide an additional non-perturbative estimate for bounded number-conserving observables. Denoting the squeezing parameter of these states by $s$, the two-mode squeezed vacuum has discarded probability $\tau_k^{(2)}(s)=|\tanh s|^{2(k+1)}$ above $k$ generated pairs~\cite{Weedbrook2012}. Fixed $s$ and fixed target accuracy therefore require a band depth independent of $n$, whereas an accuracy $\tau_k^{(2)}(s)\leq n^{-c}$ requires $k=\Theta(\log n)$ and produces a quasi-polynomial retained sector space of dimension $n^{\mathcal O(\log n)}$. These conclusions do not follow from coefficient matching for generic multimode or interleaved Kerr--squeezing dynamics.

In \cref{ssec:num-squeezing}, we test the parity-resolved construction, the successive-band reference certificate, and the predicted $\mathcal O(r^{2(k+1)})$ error orders for band depths through $k=3$ and systems of up to $n=12$ modes.

Finally, we emphasize that while number conservation gives an exact sector restriction at any photon number, polynomial scaling requires a bounded maximum photon number of occupied sectors. A single sector with $N=\nu n$ already has dimension $\binom{(1+\nu)n-1}{\nu n}=\exp(\Theta(n))$, so the exact family consists of bounded-photon, number-conserving dynamics. Squeezing therefore provides a controlled approximation around this family, but not an additional exact finite module. The infinite-orbit construction in \cref{sec:boundaries} makes this distinction explicit.

\section{Boundaries and further polynomial regimes}
\label{sec:boundaries}

The Gaussian construction of \cref{sec:heisenbergweyl} rests on degree truncation, in which quadratic Hamiltonians preserve the finite-degree polynomial modules of the quadratures. Meanwhile the bounded-photon construction of \cref{sec:nongaussian} rests on a different mechanism, sector confinement, where number-conserving Hamiltonians preserve each fixed total-photon sector. We now explain why these mechanisms are needed. In bosonic systems, neither a finite generator algebra nor a small set of non-Gaussian generators are enough to guarantee exact and efficient simulability. The controlling object remains the reachable operator module $\mathcal V(O)$ of~\cref{thm:reach}. Indeed, when we recognize this, it is apparent that there are other sets of observables whose reachable operator module remains polynomial. In this section, we identify these extra families after understanding why finite generator algebras are not sufficient.

\subsection{Finite generator algebras are not sufficient}
A first pitfall is to identify simulability with the dimension of the Lie algebra generated by the Hamiltonians. This is sufficient in the standard $\gsim$ setting when the observable lies inside the propagated algebra, but it is not necessary and, in bosonic systems, not even the right object to analyze for arbitrary observables. We can see why this is the case through the following example.

\begin{example}[A finite cubic algebra with an infinite observable orbit]
\label{ex:projective-cubic}
Consider the one-mode projective generators~\cite{Turbiner.1988,GonzalezLopez.1991-QuasiExactly}
\begin{equation}
\begin{gathered}
  K_-=\hat p,\qquad K_0=\frac12(\hat x\hat p+\hat p\hat x),\\ K_+=\frac13(\hat x^2\hat p+\hat x\hat p\hat x+\hat p\hat x^2).
\end{gathered}
  \label{eq:projective-generators}
\end{equation}
A direct calculation using $[\hat x,\hat p]=\ii$ gives
\begin{equation}
\begin{gathered}
\mathrm{ad}_{K_0}(K_-)=-K_-,
\qquad
\mathrm{ad}_{K_0}(K_+)=K_+,
\\
\mathrm{ad}_{K_-}(K_+)=2K_0.
\end{gathered}
\label{eq:projective-sl2}
\end{equation}
Thus the generators close exactly on an algebra isomorphic to
$\mathfrak{sl}(2,\mathbb R)$, the three-dimensional Lie algebra of real traceless $2\times2$ matrices. This is the standard projective realization by first-order differential operators. Nevertheless,
\begin{equation}
\mathrm{ad}_{K_+}^{\,r}(\hat x)
=
r!\,\hat x^{r+1},
\qquad
r\geq0.
\label{eq:projective-flow}
\end{equation}
Hence $\mathcal V(\hat x)$ contains linearly independent powers of arbitrarily high degree and is infinite-dimensional, even though the generator algebra is finite-dimensional. The calculation is detailed in \cref{app:projective-cubic-flow}.
\end{example}

The example is the converse of the Gaussian situation. For Gaussian dynamics, the ambient algebra has dimension $\mathcal O(n^2)$ while a linear observable lives in a much smaller $(2n+1)$-dimensional module. Here the generator algebra is tiny, but the observable orbit is infinite. Simulability is therefore a property of the pair consisting of the dynamics and the measured observable, through $\mathcal V(O)$, not of $\dim\g$.

The same situation appears in an even more familiar form for non-Gaussian generators. Indeed, a single Kerr Hamiltonian $H=\chi \hat n^2$ generates a one-dimensional Lie algebra, but the annihilation operator $\hat a$ has an infinite adjoint orbit on the full Fock space, since repeated commutators generate $\hat a$ times arbitrarily high powers of $\hat n$. Thus, a Kerr Hamiltonian becomes finite in our framework only after imposing an additional structure, namely fixed total photon number as in \cref{sec:nongaussian}. The elementary commutator calculation is recorded in \cref{app:kerr-full-fock-orbit}.

\subsection{Generic cubic dynamics and universality}
The Gaussian algebra is special because commutation with quadratic Hamiltonians preserves polynomial degree.  Suitable non-Gaussian resources added to Gaussian operations instead yield universal continuous-variable computation, with standard realizations including cubic or higher-order nonlinearities and GKP resources~\cite{Lloyd1999,Gottesman2001,Calcluth2024}. Correspondingly, the degree filtration used in \cref{prop:moment-tower} is no longer invariant under such general generator sets.
Universality does not imply that every individual observable has an infinite reachable module. In fact, special observables may remain tractable even under otherwise universal dynamics. Rather, under standard complexity assumptions, one should not expect a uniformly constructible family of polynomial-dimensional modules satisfying all accessibility conditions for a computationally complete collection of inputs and readouts. The families identified here should consequently be understood as sufficient polynomially tractable regimes.

\subsection{Nilpotent cubic-phase dynamics}
\label{ssec:nilpotentalg}
There is nevertheless a second non-Gaussian polynomial regime, distinct from bounded photon number. It contains cubic and higher phase gates, but only in a nilpotent, kinetic-free setting. Let us fix $q\ge3$ and define
\begin{equation}
  \g_{\mathrm{cph}}^{(q)}:=\spann_{\mathbb R}\{\hat p_1,\ldots,\hat p_n\}\oplus\mathcal P_{\le q}(\hat x_1,\ldots,\hat x_n),
  \label{eq:cph}
\end{equation}
where $\mathcal P_{\le q}(\hat x_1,\ldots,\hat x_n)$ denotes the real vector space of position polynomials of total degree at most $q$, including the constant.

\begin{proposition}[Nilpotent phase closure]
\label{prop:cph}
The space $\g_{\mathrm{cph}}^{(q)}$ is a nilpotent Lie algebra under the Hermitian bracket $\mathrm{ad}_H(A)=\ii[H,A]$. Its dimension is
\begin{equation}
  \dim\g_{\mathrm{cph}}^{(q)}=n+\binom{n+q}{q}=\mathcal O(n^q),
  \label{eq:cph-dimension}
\end{equation}
which is polynomial in $n$ for fixed $q$. In particular, it contains non-Gaussian generators such as $\hat x_k^3$ and $\hat x_k\hat x_l\hat x_m$.
\end{proposition}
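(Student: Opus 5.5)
The plan is to verify closure directly from the canonical commutation relations, then establish nilpotency via a grading, and finally count dimensions. Every element of $\g_{\mathrm{cph}}^{(q)}$ is Hermitian: the $\hat p_k$ are Hermitian, and real polynomials in the mutually commuting $\hat x_k$ are Hermitian, so no symmetrization is needed. Bilinearity of the bracket reduces closure to the three types of basis brackets.

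First, $\ii[\hat p_k,\hat p_l]=0$. Second, for a position polynomial $f\in\mathcal P_{\le q}$, the CCR together with the Leibniz rule give
\begin{equation}
\mathrm{ad}_{\hat p_k}(f)=\ii[\hat p_k,f]=\partial_k f,
\end{equation}
since $[\hat p_k,\hat x_k]=-\ii$. The result is a real polynomial of degree at most $q-1$, so it remains in $\mathcal P_{\le q}$. Third, $\ii[f,g]=0$ for $f,g\in\mathcal P_{\le q}$, because all $\hat x_k$ commute. Hence $\g_{\mathrm{cph}}^{(q)}$ is closed under the Hermitian bracket. It is the semidirect sum of the abelian algebra of momenta acting by derivations on the abelian ideal $\mathcal P_{\le q}$.

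For nilpotency, I would show that the lower central series terminates. Set $\g^{(1)}=\g_{\mathrm{cph}}^{(q)}$ and $\g^{(j+1)}=\ii[\g,\g^{(j)}]$. By the bracket rules above, $\g^{(2)}\subseteq\mathcal P_{\le q-1}$, since brackets only produce derivatives of polynomials. Inductively, $\g^{(j+1)}\subseteq\mathcal P_{\le q-j}$, because bracketing a polynomial with a momentum lowers its degree by one and bracketing it with another polynomial gives zero. The constant term is killed by one further derivative, so $\g^{(q+2)}=0$. The main subtlety is bookkeeping: one must make sure the momentum part never reappears in the derived terms, and it cannot, because no bracket produces a $\hat p$. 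One should also note that the constants, which include the identity, are central. Equivalently, one can argue that each $\mathrm{ad}$ is nilpotent and invoke Engel's theorem, but the explicit filtration argument is cleaner.

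The dimension count follows from linear independence. The monomials $\hat x^\alpha$ with $|\alpha|\le q$ are independent, for instance as multiplication operators on the position representation, and they number $\binom{n+q}{q}$. The $\hat p_k$ are independent of these and of each other, since they act as derivatives rather than multiplications. This gives $n+\binom{n+q}{q}=\mathcal O(n^q)$ for fixed $q$. Finally, $\hat x_k^3$ and $\hat x_k\hat x_l\hat x_m$ lie in $\mathcal P_{\le q}$ for $q\ge3$, and these are non-Gaussian because they are cubic.
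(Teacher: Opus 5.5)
Your proposal is correct and follows essentially the same route as the paper. Closure comes from $\ii[\hat p_j,f(\hat{\bm x})]=\partial_j f$, nilpotency from the lower central series descending through $\mathcal P_{\le q-j}$ to the central constants, and the dimension from counting $\binom{n+q}{q}$ monomials plus $n$ momenta; the paper proves the equality $\gamma_r=\mathcal P_{\le q-r+1}$ where you prove only the inclusion, which is all that nilpotency needs.
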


\begin{proof}
    See Appendix~\ref{prf:cph}.
\end{proof}

The intuition is that position polynomials commute with one another, the momenta commute with one another, and $\ii[\hat p_k,\hat f(\bm x)]=\partial_{x_k}\hat f(\bm x)$ lowers polynomial degree. Hence no commutator produces products such as $\hat x\hat p$, $\hat p^2$, or higher momentum powers.

Hence the dynamics generated by $\g_{\mathrm{cph}}^{(q)}$ consist of position-diagonal phase gates, shears, and displacements. They include cubic phase gates of the form $e^{\ii\gamma\hat x_k\hat x_l\hat x_m}$, but the nilpotent algebra itself excludes kinetic terms $\hat p_k^2$ and quadratic dilation terms. Adding kinetic terms to cubic position polynomials can destroy the triangular closure and produce unbounded degree growth. Some restricted quadratic additions, such as a pure dilation, instead yield finite solvable extensions, so the nilpotent family above is sufficient but not maximal. Its tractability follows from a third closure mechanism, distinct from Gaussian degree preservation and photon-number confinement, namely a weighted polynomial filtration adapted to the triangular action of the phase dynamics.

To define this filtration, for multi-indices $\bm\alpha,\bm\beta\in\mathbb Z_{\ge0}^n$, fix the normal ordering
\begin{equation}
\hat{\bm x}^{\bm\alpha}\hat{\bm p}^{\bm\beta}
:=
\prod_{j=1}^n\hat x_j^{\alpha_j}
\prod_{j=1}^n\hat p_j^{\beta_j}.
\label{eq:cph-normal-order}
\end{equation}
For $q\ge3$, define
\begin{equation}
\begin{aligned}
\mathcal F_{m,q}
:=
\spann_{\mathbb C}
\{
\hat{\bm x}^{\bm\alpha}\hat{\bm p}^{\bm\beta}:
|\bm\alpha|+(q-1)|\bm\beta|
\le m(q-1)
\}.
\end{aligned}
\label{eq:cph-filtration}
\end{equation}

\begin{proposition}[Reachable observables for nilpotent phase dynamics]
\label{prop:cph-reach}
Let $H_1,\ldots,H_K\in\g_{\mathrm{cph}}^{(q)}$, and let $O$ be a polynomial of total degree at most $m$, written in the fixed normal ordering of \cref{eq:cph-normal-order}. Then
\begin{equation}
\dim_{\mathbb R}\mathcal V(O)
\leq
\dim_{\mathbb C}\mathcal F_{m,q}
=
\Theta\!\left(n^{m(q-1)}\right)
\label{eq:cph-reachable-dimension}
\end{equation}
for fixed $q$ and $m$.
\end{proposition}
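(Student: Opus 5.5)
The plan is to show that the filtration space $\mathcal F_{m,q}$ of \cref{eq:cph-filtration} is invariant under $\mathrm{ad}_H$ for every $H\in\g_{\mathrm{cph}}^{(q)}$ and contains $O$. The definition of $\mathcal V(O)$ as the smallest $\mathrm{ad}$-invariant space containing $O$ (\cref{def:reach}) then gives $\mathcal V(O)\subseteq\mathcal F_{m,q}$. After that, two things remain: compare real with complex dimension, and count monomials. Write $w(\bm\alpha,\bm\beta):=|\bm\alpha|+(q-1)|\bm\beta|$ for the weight of $\hat{\bm x}^{\bm\alpha}\hat{\bm p}^{\bm\beta}$. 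Since $q-1\ge1$, every normal-ordered monomial with $|\bm\alpha|+|\bm\beta|\le m$ has weight at most $(q-1)(|\bm\alpha|+|\bm\beta|)\le m(q-1)$. Hence $O\in\mathcal F_{m,q}$.

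\textbf{Invariance.} By linearity it suffices to treat the basis generators $\hat p_k$ and position monomials $f(\hat{\bm x})$ of degree at most $q$, acting on a single monomial $\hat{\bm x}^{\bm\alpha}\hat{\bm p}^{\bm\beta}$.
\begin{itemize}[nosep]
\item For $\hat p_k$: the momenta commute among themselves, so $\ii[\hat p_k,\hat{\bm x}^{\bm\alpha}\hat{\bm p}^{\bm\beta}]=(\partial_{x_k}\hat{\bm x}^{\bm\alpha})\hat{\bm p}^{\bm\beta}$. This is already normal ordered and has weight lowered by one.
\item For $f(\hat{\bm x})$: it commutes with $\hat{\bm x}^{\bm\alpha}$, so $[f,\hat{\bm x}^{\bm\alpha}\hat{\bm p}^{\bm\beta}]=\hat{\bm x}^{\bm\alpha}[f,\hat{\bm p}^{\bm\beta}]$. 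I would use the standard reordering identity $\hat{\bm p}^{\bm\beta}f(\hat{\bm x})=\sum_{\bm\gamma\le\bm\beta}\binom{\bm\beta}{\bm\gamma}(-\ii)^{|\bm\gamma|}(\partial^{\bm\gamma}f)(\hat{\bm x})\,\hat{\bm p}^{\bm\beta-\bm\gamma}$, proved by induction from $[\hat p_k,g(\hat{\bm x})]=-\ii\,\partial_k g$. With it, $[f,\hat{\bm p}^{\bm\beta}]$ is a sum over $|\bm\gamma|=j\ge1$ of terms $(\partial^{\bm\gamma}f)\hat{\bm p}^{\bm\beta-\bm\gamma}$.
\item Each resulting normal-ordered term $\hat{\bm x}^{\bm\alpha}(\partial^{\bm\gamma}f)\hat{\bm p}^{\bm\beta-\bm\gamma}$ has weight at most $|\bm\alpha|+q-j+(q-1)(|\bm\beta|-j)=w(\bm\alpha,\bm\beta)+q(1-j)\le w(\bm\alpha,\bm\beta)$. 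So the weight never increases.
\end{itemize}
This reordering step, and checking that the cubic-and-higher degree gain is exactly compensated by the weight $q-1$ on each lost momentum, is where I expect the real work to lie. Everything else is bookkeeping.

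\textbf{Real versus complex dimension.} $\mathcal V(O)$ consists of Hermitian operators, since $O$ is Hermitian and the Hermitian bracket preserves Hermiticity. I will check that $\mathcal F_{m,q}$ is closed under the adjoint. The adjoint of $\hat{\bm x}^{\bm\alpha}\hat{\bm p}^{\bm\beta}$ is $\hat{\bm p}^{\bm\beta}\hat{\bm x}^{\bm\alpha}$. The same reordering identity rewrites it as normal-ordered terms $\hat{\bm x}^{\bm\alpha-\bm\gamma}\hat{\bm p}^{\bm\beta-\bm\gamma}$ with non-increasing weight. Because $\mathcal F_{m,q}$ is $*$-closed, it splits as $\mathcal F_{m,q}=\mathcal F^{\mathrm{h}}\oplus\ii\mathcal F^{\mathrm{h}}$, where $\mathcal F^{\mathrm{h}}$ is its Hermitian part. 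This gives $\dim_{\mathbb R}\mathcal V(O)\le\dim_{\mathbb R}\mathcal F^{\mathrm{h}}=\dim_{\mathbb C}\mathcal F_{m,q}$.

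\textbf{Counting.} Normal-ordered monomials are linearly independent, by the PBW basis of the Weyl algebra. So $\dim_{\mathbb C}\mathcal F_{m,q}$ equals the number of pairs $(\bm\alpha,\bm\beta)$ with $w(\bm\alpha,\bm\beta)\le m(q-1)$.
\begin{itemize}[nosep]
\item Lower bound: the pairs with $\bm\beta=0$ alone give $\binom{n+m(q-1)}{n}=\Theta(n^{m(q-1)})$.
\item Upper bound: fix $|\bm\beta|=j\le m$. There are $O(n^{j})$ choices of $\bm\beta$ and $O(n^{(m-j)(q-1)})$ choices of $\bm\alpha$. Since $q-1\ge1$, each product is $O(n^{m(q-1)})$, and summing over the $m+1$ values of $j$ gives $\Theta(n^{m(q-1)})$ for fixed $q,m$.
\end{itemize}
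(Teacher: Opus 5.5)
Your proposal is correct and follows the paper's proof essentially step for step: the same weight $|\bm\alpha|+(q-1)|\bm\beta|$, the same two invariance checks (momenta lower the weight by one, and position polynomials are handled via the multi-index reordering identity, giving weight at most $\mathrm{wt}_q(\bm\alpha,\bm\beta)+q-q|\bm\gamma|$), the same membership argument for $O$, and the same monomial count. The only cosmetic difference is in comparing real and complex dimension: you show $\mathcal F_{m,q}$ is $*$-closed and split off its Hermitian part, whereas the paper uses the complexification $\mathcal V(O)\otimes_{\mathbb R}\mathbb C\subseteq\mathcal F_{m,q}$; both work, and in fact Hermiticity of $\mathcal V(O)$, which gives $\mathcal V(O)\cap\ii\,\mathcal V(O)=\{0\}$, is all that is needed, so your $*$-closure check is sound but optional.
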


\begin{proof}
See Appendix~\ref{prf:cph-reach}.
\end{proof}
Hence expectation values, correlation functions, and gradients are exactly computable whenever the required generator actions and input moments are efficiently available. Using the fixed-order product construction of \cref{cor:module-correlators}, we demonstrate exact single- and multimode cubic-phase propagation numerically in Appendix~\ref{app:num-cph} through a non-Gaussian fourth momentum cumulant and a cross-mode momentum correlator, both validated against independent wavefunction-grid calculations in the one- and two-mode cases. We benchmark coefficient propagation separately for systems of up to several hundred modes.

For fixed $q$ and fixed observable degree, \cref{prop:cph-reach} covers arbitrary polynomial quadrature observables. These include fixed-degree photon-number observables such as ${\hat n_k=\frac12(\hat x_k^2+\hat p_k^2-\mathds{1})}$ and fixed-order products of number operators. Although their ordinary polynomial degree may increase under cubic-phase evolution, their weighted degree remains bounded within the filtration of \cref{eq:cph-filtration}. The construction does not cover observables whose degree grows with the number of modes, complete photon-counting distributions, or generator sets that destroy this filtration.

The underlying finite-dimensional algebras are related to known mathematical structures. For $n=1$, the nilpotent phase algebra lies within the single-mode classification of finite-dimensional bosonic Lie-algebras in Ref.~\cite{Heib2025}. Closely related finite-dimensional algebras of first-order differential operators in one and two variables were classified in the quasi-exact-solvability literature~\cite{Turbiner.1988,GonzalezLopez.1991-QuasiExactly}, including realizations related to the nilpotent and projective structures considered here. These works do not formulate an arbitrary-mode reachable-module simulability theorem. Likewise, the cubic phase gate and its role as a non-Gaussian resource for universal continuous-variable computation are standard~\cite{Lloyd1999,Gottesman2001,Calcluth2024}. Our contribution is therefore not a new classification of the underlying algebras, but their simulability interpretation and the explicit separation between finite generator algebras with polynomial observable modules and finite algebras with infinite observable orbits.

Our results identify three distinct mechanisms that produce finite, polynomially accessible reachable modules, i.e., preservation of polynomial degree under Gaussian dynamics, confinement to bounded photon-number sectors under number-conserving nonlinear dynamics, and weighted degree-lowering nilpotence under phase dynamics. Outside such protected regimes, non-Gaussian evolution on the full Fock space can generate infinite-dimensional observable orbits, even when the generator algebra itself is finite-dimensional. Bosonic simulability is therefore governed by the closure and accessibility of the observable's reachable module.

\section{Numerical demonstrations}
\label{sec:numerics}
We next numerically test the simulation guarantees identified above and demonstrate the interaction quantities they make accessible on five illustrative problems. The calculations in \cref{ssec:num-bounded,ssec:num-squeezing} test exact sector-union propagation, its fixed-$N_{\max}$ scaling, and the predicted leakage and readout-error orders under squeezing. In \cref{ssec:num-doublon,ssec:num-otoc}, we apply the same finite-sector construction to repulsively bound-pair dynamics and a state-dependent OTOC on chains of up to $400$ modes. Finally, \cref{ssec:num-edge} combines a non-Abelian Chern calculation with flux-reversed edge propagation for an interacting doublon multiplet. Appendices~\ref{app:num-gaussian}--\ref{app:num-cph} provide Gaussian and non-Gaussian-input benchmarks, finite-difference gradient validation and Kerr-assisted control, and independent tests of nilpotent polynomial-phase propagation.

For the following numerical results, we set $\hbar=1$ throughout. Whenever a hopping scale $J$ is present, we express energies in units of $J$ and times in units of $1/J$. All calculations were implemented in double-precision Python using NumPy~\cite{Harris.2020-ArrayProgrammingNumPy} and SciPy~\cite{Virtanen.2020-SciPyFundamentalAlgorithms}. 

\subsection{Exact dynamics on bounded sector unions}
\label{ssec:num-bounded}

We first test the exact bounded-sector construction on four modes in the two-photon sector based on the number-conserving Kerr Hamiltonian,
\begin{equation}
\hat H
=
\sum_{j,k=0}^{3}h_{jk}\hat a_j^\dagger\hat a_k
+\chi_{01}\hat n_0\hat n_1
+\chi_{23}\hat n_2\hat n_3,
\label{eq:num-cross-kerr}
\end{equation}
where the first term describes passive mode mixing generated by the one-particle matrix $h$, while the cross-Kerr terms introduce quartic interactions. We generate the complex Hermitian one-particle matrix ${h=(A+A^\dagger)/2}$ by drawing ${A_{jk}=0.6(X_{jk}+\ii Y_{jk})}$, with independent
$X_{jk},Y_{jk}\sim\mathcal N(0,1)$. We choose $\chi_{01}=1$ and $\chi_{23}=0.8$, and initialize the system in $|1,1,0,0\rangle$. Setting both Kerr couplings to zero while retaining $h$ defines the quadratic baseline.

The observable we follow is the fourth-order correlator $C_{01}(t)=\langle\hat n_0(t)\hat n_1(t)\rangle$ over $0\leq t\leq2.5$ at $80$ equally spaced times, shown in
Panel~\ref{fig:num-bounded}(a). Exact propagation in the $d_{4,2}=10$ dimensional sector $\mathcal H_2$ agrees with an independent tensor-product Fock calculation using three local Fock levels to floating point precision.
We see also that the cross-Kerr curve separates visibly from the baseline, demonstrating exact quartic dynamics beyond the closed first- and second-moment evolution of the covariance formalism~\cite{Weedbrook2012} (see also \cref{ssec:moment-modules}). 

Next, we give the diagonal and off-diagonal sector blocks an operational test by comparing states with identical photon-number populations but different cross-sector coherences. For this, we evolve an open five-site Bose--Hubbard chain~\cite{Fisher.1989-BosonLocalizationSuperfluid,Jaksch.1998-ColdBosonicAtoms} in the self-Kerr convention, 
\begin{equation}
\hat H_{\mathrm{mix}}
=
-J\sum_{j=0}^{3}
\left(
\hat a_j^\dagger\hat a_{j+1}
+\mathrm{h.c.}
\right)
+\frac{U}{2}\sum_{j=0}^{4}\hat n_j^2
\label{eq:num-mixed-sector-hamiltonian}
\end{equation}
with $J=1$ and $U=1.5$. Since $\hat n_j^2=\hat n_j(\hat n_j-1)+\hat n_j$, this convention differs from the standard Bose--Hubbard interaction by the term $U\hat N/2$. It therefore leaves the dynamics within each fixed-photon sector unchanged up to a global phase, while producing relative phases between different total-photon sectors. We label the sites of the five-mode chain by $j=0,\ldots, 4$, and place the photons at the central site $c=2$. The two input states are
\begin{equation}
\begin{aligned}
|\psi_{\mathrm{coh}}\rangle
&=
\frac{|0\rangle+|1_c\rangle+|2_c\rangle}{\sqrt3},
\\
\rho_{\mathrm{deph}}
&=
\frac13
\left(
|0\rangle\langle0|
+|1_c\rangle\langle1_c|
+|2_c\rangle\langle2_c|
\right),
\end{aligned}
\label{eq:num-mixed-inputs}
\end{equation}
where $|N_c\rangle$ denotes $N$ photons localized at site $c$. Both inputs have identical populations in the sectors $N=0,1,2$, but only $|\psi_{\mathrm{coh}}\rangle$ contains coherences between them, as $\rho_{\mathrm{deph}}$ is block diagonal in the photon-number decomposition. Their comparison therefore isolates which observables can detect these cross-sector coherences.

The invariant sector union $\mathcal H_0\oplus\mathcal H_1\oplus\mathcal H_2$ has dimension $1+5+15=21$. We evaluate the dynamics at $100$ times over $0\leq t\leq4$. As shown in \cref{fig:num-bounded}(b), the local photon number density curves $\langle\hat n_c(t)\rangle$ agree to numerical precision, whereas the quadrature expectations $\langle\hat x_c(t)\rangle$ differ by as much as $1.14$. This distinction follows directly from the sector-block structure, as number-conserving observables evolve only within the diagonal blocks
$\bigoplus_N\operatorname{End}(\mathcal H_N)$ and therefore cannot detect cross-sector coherences~\cite{Bartlett.2007-ReferenceFramesSuperselection,Descamps.2024-SuperselectionRulesBosonic}. By contrast, a phase-referenced measurement of $\hat x_c$, which changes the photon number by one, probes the adjacent off-diagonal blocks
$\operatorname{Hom}(\mathcal H_N,\mathcal H_{N\pm1})$.

Panel~\ref{fig:num-bounded}(c) measures the cost of sparse state propagation on
${
\mathcal H_{\leq N_{\max}}
=
\bigoplus_{N=0}^{N_{\max}}\mathcal H_N}
$
under the open-chain Hamiltonian in \cref{eq:num-mixed-sector-hamiltonian}, now with $J=1$ and $U=0.7$. At each pair $(n,N_{\max})$, we draw coefficients $z_\alpha=X_\alpha+\ii Y_\alpha$ with independent $X_\alpha,Y_\alpha\sim\mathcal N(0,1)$ and propagate the normalized state $\ket{\psi}=z/\|z\|$. This construction samples unit vectors uniformly with respect to Haar measure.
The largest mode counts we test for $N_{\max}=1,\ldots,5$ are $1024, 512, 256, 64, 40$, respectively. Each timing measures the application of $e^{-\ii\hat Ht}$ at $t=0.5$ and excludes construction of the basis and Hamiltonian.

The polynomial reduction compared with a tensor-product Fock representation that does not resolve total photon number is substantial. For example, a local cutoff capable of representing five photons on each of $40$ modes requires $6^{40}\simeq 2^{103}$ basis states, whereas the exact subspace with at most five photons in total has dimension $\dim\mathcal H_{\leq5}=\binom{45}{5}=1\,221\,759$. This comparison illustrates the gains we make from resolving total photon number.
\cref{fig:num-bounded}(c) likewise benchmarks sparse state propagation on this invariant Hilbert space of dimension $\binom{n+N_{\max}}{N_{\max}}$. We emphasize the distinction of this with propagation in the full endomorphism space, whose dimension is the square of the Hilbert-space dimension, or with a potentially smaller observable-dependent reachable module. Indeed, our comparison demonstrates polynomial scaling in $n$, and establishes that the polynomial Hilbert spaces entering \cref{thm:boundedN} remain computationally accessible across the reported mode ranges.

\begin{figure*}[htbp]
\centering
\includegraphics[width=\textwidth]{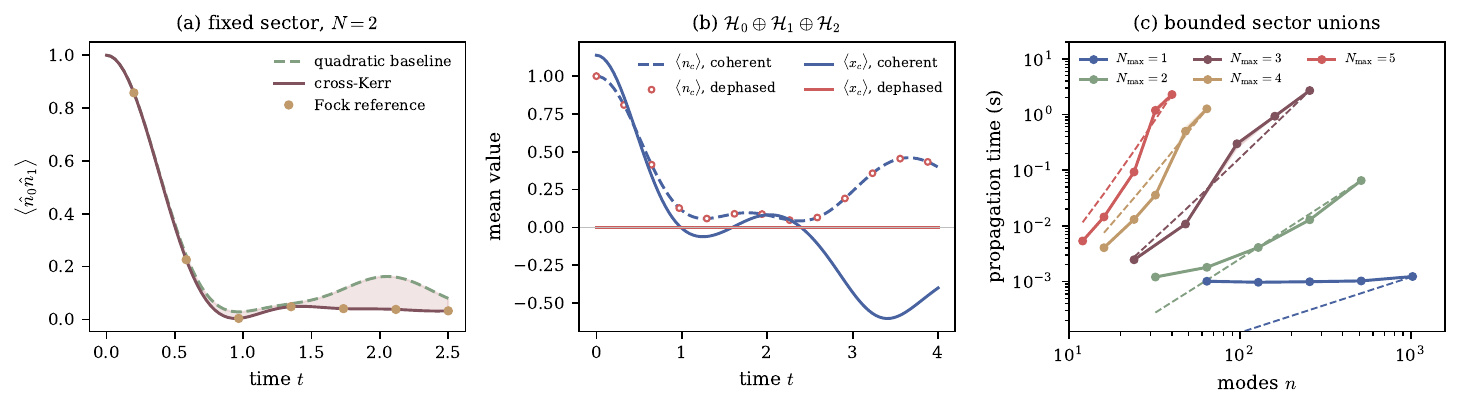}
\caption{Exact dynamics on bounded sector unions. (a) Two-photon correlator under a quadratic Hamiltonian and the same Hamiltonian after adding cross-Kerr interactions; markers are independent truncated-Fock calculations for the interacting dynamics. (b) A coherent superposition across $N=0,1,2$ and its sector-dephased counterpart give identical number readouts but different quadrature readouts. (c) Median sparse-propagation time, with interquartile ranges over seven repetitions after warm-up, for states in $\mathcal H_{\leq N_{\max}}$ with $N_{\max}=1,\ldots,5$. The sector union Hilbert space has exact dimension $\binom{n+N_{\max}}{N_{\max}}$. Dashed curves show this exact
finite-$n$ dependence, normalized at the largest mode count for each
$N_{\max}$; the $N_{\max}=1$ timings are dominated by fixed propagation
overhead.}
\label{fig:num-bounded}
\end{figure*}

\subsection{Repulsively bound photon pairs}
\label{ssec:num-doublon}

We next use exact two-photon propagation of~\cref{sec:nongaussian} to reproduce the formation and strong-coupling motion of repulsively bound pairs in an open Bose--Hubbard chain~\cite{Hartmann.2006-StronglyInteractingPolaritons,Greentree.2006-QuantumPhaseTransitions},
\begin{equation}
H_{\mathrm{BH}}=-J\sum_j\left(\hat a_j^\dagger\hat a_{j+1}+\mathrm{h.c.}\right)+\frac{U}{2}\sum_j\hat n_j(\hat n_j-1).
\label{eq:num-bose-hubbard}
\end{equation}
This is the conventional Bose--Hubbard form of the same quartic on-site interaction used in \cref{eq:num-mixed-sector-hamiltonian}. Indeed, $\sum_j\hat n_j^2=\sum_j\hat n_j(\hat n_j-1)+\hat N$, and the additional term is relevant to relative phases in the mixed-sector calculation above, but contributes only a global phase within the fixed two-photon sector considered here. 
For $n=41$, we place both photons on the central site $c=20$, preparing the doublon $|2_c\rangle$, and propagate the state exactly within $\mathcal H_2$.

Figures~\ref{fig:num-doublon}(a,b) show the local density $\langle\hat n_j(t)\rangle$ for $U=0J$ and $U=8J$ over $0\leq t\leq10$ at $120$ times, and panel (c) shows the doublon fraction correlator,
\begin{equation}
D(t)
=
\frac12\sum_j
\left\langle
\hat n_j(t)\bigl(\hat n_j(t)-1\bigr)
\right\rangle
\label{eq:num-doublon-fraction}
\end{equation}
for $U/J=0,2,8$. We see the noninteracting pair spreads rapidly, whereas strong repulsion retains the photons in a slowly propagating composite state, consistent with repulsively bound-pair physics~\cite{Winkler2006,Morvan.2022-FormationRobustBound}.
A separate $n=4$, $U=8J$ calculation at six times up to $t=2$ agrees with an independent tensor-product Fock-space calculation to within $7\times10^{-14}$. The local basis
$\{|0\rangle,|1\rangle,|2\rangle\}$ represents the invariant
two-photon sector exactly, so the residual is consistent with
accumulated floating-point round-off across the two implementations.

We quantify the slow pair motion by placing the same model on periodic chains with $n=101,\,201$, and $401$ modes. For $U/J=4,6,8,12,16,24,32,48$, we isolate the repulsive bound band to eigenvalue tolerance $10^{-10}$ and define $J_{\mathrm{eff}}=W/4$ from its bandwidth $W$. The factor four is the bandwidth of a nearest-neighbor cosine band with hopping $J_{\mathrm{eff}}$. The results for all three system sizes are indistinguishable at the resolution of \cref{fig:num-doublon}(d). 
A log--log fit over $U/J\geq8$ at $n=401$ gives
\begin{equation}
\frac{J_{\mathrm{eff}}}{J}
\propto
\left(\frac{U}{J}\right)^{-0.9711\pm0.0068}.
\end{equation}
At the largest interaction, $U/J=48$, we find $UJ_{\mathrm{eff}}/J^2=1.9965$, approaching the strong-coupling
limit $2$. This agrees with the prediction $J_{\mathrm{eff}}\simeq2J^2/U$ for the magnitude of the effective
dimer hopping~\cite{Valiente2008,Folling.2007-DirectObservationSecondorder}.

\begin{figure*}[htbp]
\centering
\includegraphics[width=\textwidth]{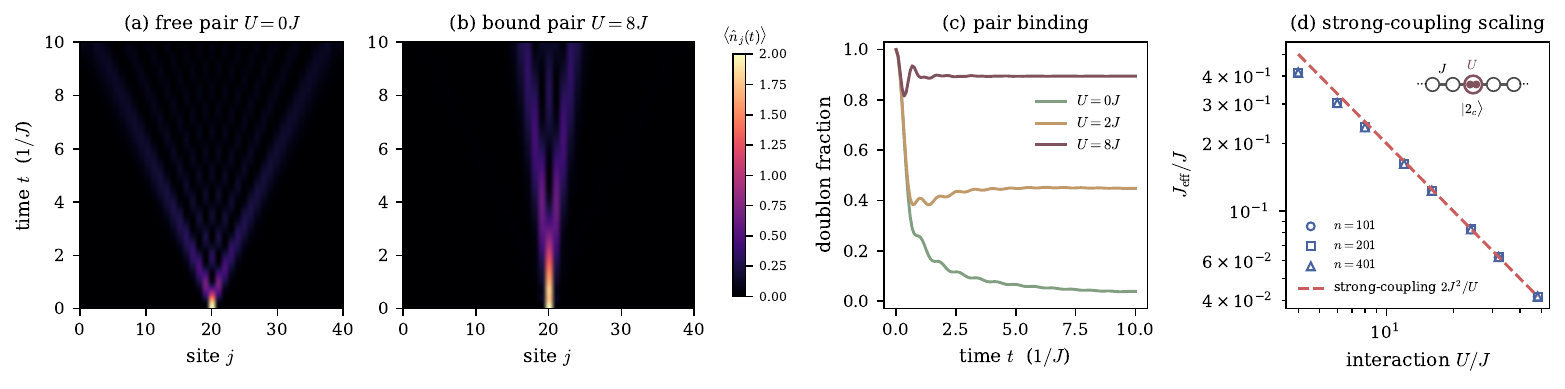}
\caption{Repulsively bound photon pairs in the two-photon Bose--Hubbard chain. (a,b) Local density $\langle\hat n_j(t)\rangle$ after initializing both photons at the central site of an open $n=41$ chain, shown for $U=0J$ and $U=8J$ on a shared color scale. (c) Doublon fraction, equal to the probability that both photons occupy the same site, for $U/J=0,2,8$. (d) Effective doublon hopping $J_{\mathrm{eff}}=W/4$, extracted from the width $W$ of the isolated repulsive bound band on periodic chains with $n=101,201,401$. The dashed line shows the strong-coupling prediction $2J^2/U$~\cite{Valiente2008}. A log--log fit for $n=401$ over $U/J\geq8$ gives the exponent $-0.9711\pm0.0068$; at $U/J=48$, $UJ_{\mathrm{eff}}/J^2=1.9965$. The inset sketches the local hopping $J$, onsite interaction $U$, and initial doublon $|2_c\rangle$.}
\label{fig:num-doublon}
\end{figure*}

\subsection{Controlled photon-sector leakage under squeezing}
\label{ssec:num-squeezing}

Squeezing breaks photon-number conservation but preserves photon-number parity. We test the corresponding leakage and readout-error orders derived in \cref{ssec:squeezing} using
\begin{equation}
\begin{aligned}
\hat H(r)
=
&-J\sum_{j=0}^{n-2}
\left(
\hat a_j^\dagger\hat a_{j+1}
+\mathrm{h.c.}
\right)\\
&+U\hat n_0\hat n_1
+\frac{r}{2}
\left(
\hat a_0\hat a_1
+\hat a_0^\dagger\hat a_1^\dagger
\right).
\end{aligned}
\label{eq:num-squeezing-hamiltonian}
\end{equation}
The cross-Kerr interaction makes the dynamics non-Gaussian, while the squeezing term changes total photon number by $\pm2$. Starting from $\mathcal H_2$, the depth-$k$ calculation evolves under the projected Hamiltonian $\hat H_k(r)=P_k\hat H(r)P_k$, where $P_k$ retains the parity-compatible sectors reachable in at most $k$ steps. For fixed $k$, the dimension of this subspace scales as $d_k=\mathcal O(n^{2+2k})$.

To test convergence, we use $n=2$, $U/J=2$, and the input $\ket{\psi_{\mathrm{in}}}=\ket{1,1}$, so that the system allows for tightly converged numerical reference values. \cref{fig:num-squeezing}(a) shows
$F_k(r,t)
=
\langle\hat n_0\hat n_1\rangle_k(t)$
for $k=0,1,2$ at $r/J=0.8$ and $0\leq Jt\leq 2$. The $k=8$ reference agrees with the next band within $1.48\times10^{-14}$ over the reported parameter range. It thus provides a converged finite-band reference for this observable, with an exact infinite-Fock-space result being naturally out of reach due to its infinite-dimensional basis.

We define the maximum readout error by
\begin{equation}
\epsilon_k(r)
=
\max_{0\leq t\leq2.5}
\left|
F_k(r,t)-F_{\mathrm{ref}}(r,t)
\right|.
\label{eq:num-squeezing-error}
\end{equation}
Panel \ref{fig:num-squeezing}(b) evaluates this error at 14 logarithmically spaced squeezing strengths between $r/J=0.02$ and $0.89$. 
Using the nominal window $0.04\leq r/J\leq0.50$ and excluding errors below $10^{-13}$ gives exponents $2.002$, $3.991$, and $5.984$ for $k=0,1,2$. Allowing each endpoint of the fit window to move by one sampled squeezing strength gives the respective ranges $2.001$--$2.004$, $3.985$--$3.995$, and $5.971$--$5.991$. These values agree with the predicted exponents $2$, $4$, and $6$, supporting $\epsilon_k(r)=\mathcal O((r/J)^{2(k+1)})$ in the weak-squeezing regime.

Panel \ref{fig:num-squeezing}(c) benchmarks propagation times at $r/J=0.4$ and $Jt=0.25$, where the calculations reach $n=96$, $24$, and $12$ for $k=0$, $1$, and $2$, respectively, and follow the increasing projected-subspace dimensions $d_k=\Theta(n^{2+2k})$. 

Panel \ref{fig:num-squeezing}(d) compares the two perturbative scales directly for $n=2,4,6,8,12$ at $Jt=0.5$. We compute both the readout error and the norm of the reference state outside the depth-$k$ subspace over ten squeezing strengths between $r/J=0.02$ and $0.70$. The $k=6$ and $k=7$ references agree in the correlator within $3.61\times10^{-15}$, while their phase-insensitive state distance remains below $1.03\times10^{-7}$, beneath the $10^{-6}$ leakage-fit floor. For $k=0,1,2,3$, the fitted leakage and readout exponents differ from $k+1$ and $2(k+1)$ by at most $0.018$ and $0.027$, respectively. At $k=4$, too few values remain above the respective numerical floors to support either fit.

Hence each calculation solves its projected finite-dimensional model exactly up to numerical precision. The comparison with the converged references quantifies the approximation to the unbounded squeezed dynamics: at fixed band depth the cost remains polynomial in $n$, while increasing $k$ systematically postpones both state leakage and number-conserving readout errors. 

\begin{figure*}[t]
\centering
\includegraphics[width=\textwidth]{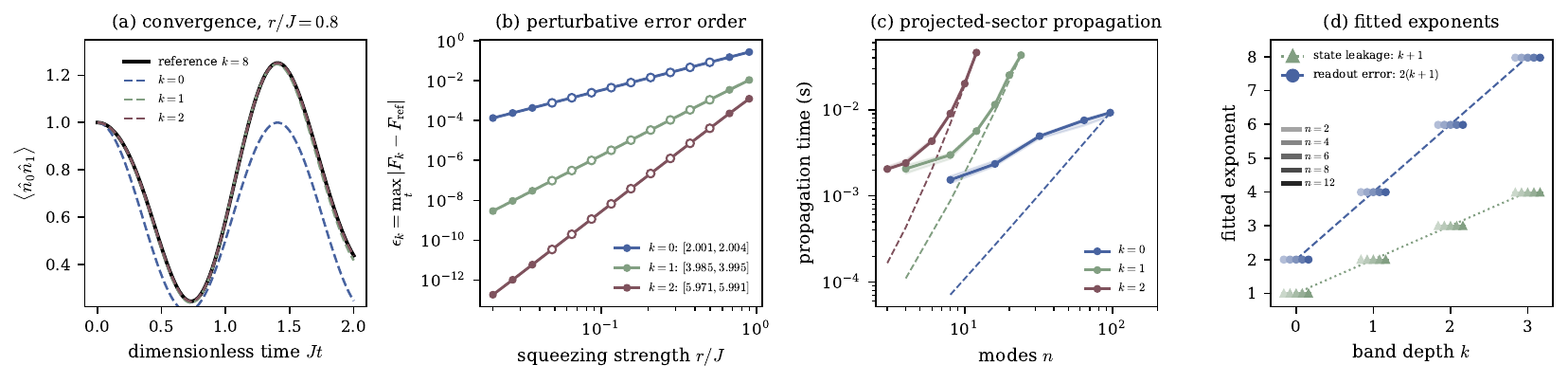}
\caption{Controlled photon-sector leakage under squeezing. (a) Convergence of the number-conserving correlator $\langle\hat n_0\hat n_1\rangle$ with band depth $k$ for $n=2$, $U/J=2$, and $r/J=0.8$. The $k=8$ reference agrees with the next band within $1.48\times10^{-14}$. (b) Maximum readout error versus squeezing strength; open markers indicate the points included in the power-law fits. (c) Median propagation times and interquartile ranges over seven repetitions. Dashed curves show the exact retained-sector dimensions $d_k(n)$,
normalized at the largest mode count for each $k$; their asymptotic
degrees are $2+2k$. (d) Fitted exponents for the leakage norm (triangles) and number-conserving readout error (circles), compared with the predictions $k+1$ and $2(k+1)$, respectively. Shading encodes system size from $n=2$ (lightest) to $n=12$ (darkest); the $k=4$ fits are omitted because the corresponding errors fall below the numerical floor.}
\label{fig:num-squeezing}
\end{figure*}

\subsection{Exact operator spreading}
\label{ssec:num-otoc}
Operator spreading is the standard analysis for how local disturbances propagate under interacting dynamics, and OTOCs have become its usual probe~\cite{Larkin.1969-QuasiclassicalMethodTheory,Maldacena.2016-BoundChaos}, with front structure and its relation to Lieb--Robinson bounds studied extensively in lattice models~\cite{Roberts.2016-LiebRobinsonBoundButterfly,Nahum.2018-OperatorSpreadingRandom,vonKeyserlingk.2018-OperatorHydrodynamicsOTOCs} and measured on quantum hardware~\cite{Mi.2021-InformationScramblingQuantum}. The fixed-order product construction of \cref{cor:module-correlators} covers these operator-level quantities, while the bounded two-photon sector allows us to evaluate their action without constructing the full $d_{n,2}^2$-dimensional operator representation.

For the open two-photon Bose--Hubbard chain of \cref{eq:num-bose-hubbard}, we compute the state-dependent squared commutator
\begin{equation}
C_{\psi}(i,t)=\left\|[\hat n_i(t),\hat n_c]|\psi\rangle\right\|^2,
\quad
\hat n_i(t)=e^{\ii Ht}\hat n_i\, e^{-\ii Ht},
\label{eq:num-otoc}
\end{equation}
where $c=\lfloor n/2\rfloor$ is the reference site for sites labelled ${0,\ldots,n-1}$. Expanding the squared commutator gives an out-of-time-ordered four-point correlation function~\cite{Larkin.1969-QuasiclassicalMethodTheory,Maldacena.2016-BoundChaos}.

Here, we draw $|\psi\rangle\in\mathcal H_2$ by normalizing a complex Gaussian vector, the resulting states of which are Haar distributed on the two-photon sector and satisfy
\begin{equation}
\mathbb E_\psi C_\psi(i,t)
=
\frac{1}{d_{n,2}}
\Tr_{\mathcal H_2}
\left(
[\hat n_i(t),\hat n_c]^\dagger
[\hat n_i(t),\hat n_c]
\right).
\end{equation}
This means that each realization is a stochastic estimate of the normalized sector trace~\cite{Steinigeweg.2014-SpincurrentAutocorrelations,Goldstein.2006-CanonicalTypicality,Popescu.2006-EntanglementFoundationsStatistical}, while retaining the state dependence needed to assess sample-to-sample variation. We evaluate the action of the commutator on $|\psi\rangle$ through sparse Schr\"odinger propagation in $\mathcal H_2$ and never construct the corresponding $d_{n,2}^2$ operator matrix as its size rapidly becomes infeasible with $n$.

For an independent small-system validation, we take $n=5$, $U=8J$, and evaluate nine times over $0\leq t\leq4$. The fixed-sector calculation agrees with an exact tensor-product Fock calculation on the local basis $\{|0\rangle,|1\rangle,|2\rangle\}$ to within $2.30\times10^{-13}$. 
For the deterministic realization shown in \cref{fig:num-otoc}(a), the $U=0J$ quadratic baseline differs from the interacting Fock-space reference by as much as $1.24$. This order-one difference reflects the omitted quartic interaction, but its magnitude is state dependent.

Panels~\ref{fig:num-otoc}(b,c) show $\log_{10}C_\psi(i,t)$ for a single seeded random state on $n=400$ sites, where $d_{400,2}=80\,200$ and the corresponding dense operator representation would contain approximately $6.43\times10^9$ entries. We test $U=0J$ and $U=8J$, and sample $48$ times over $0\leq t\leq48$. In both cases, the threshold advances approximately linearly over the displayed interval~\cite{Nahum.2018-OperatorSpreadingRandom,vonKeyserlingk.2018-OperatorHydrodynamicsOTOCs}. The interaction thus has little effect on this threshold but substantially redistributes the commutator weight behind it. 

Panel~\ref{fig:num-otoc}(d) examines the finite-size and state dependence of this observation for $n=120,200,300,400$, and interactions $U/J=0,2,4,8,12,16$, for three independent trials of random initial states. The corresponding final times are $20,32,40,48$. At each time, we define the threshold-front position by 
\begin{equation}
r_\eta(t)
=
\max\left\{
|i-c|:C_\psi(i,t)\geq\eta
\right\},
\qquad
\eta=10^{-5},
\end{equation}
and extract its slope from the points satisfying
$0<r_\eta(t)<0.4n$, thereby excluding the initial plateau and boundary-affected data. The resulting slope depends only weakly on $U$ but varies with system size and with the threshold used to define the front. We therefore use it only as a numerical diagnostic of the leading front and do not identify it with a Lieb--Robinson velocity~\cite{Lieb.1972-FiniteGroupVelocity}, whose existence and value for lattice bosons with unbounded on-site occupation is itself delicate and has been established only under restrictions such as bounded
density~\cite{Schuch.2011-InformationPropagationInteracting,Yin.2022-FiniteSpeedQuantum,Faupin.2022-MaximalSpeedMacroscopic}.

\begin{figure*}[htbp]
\centering
\includegraphics[width=\textwidth]{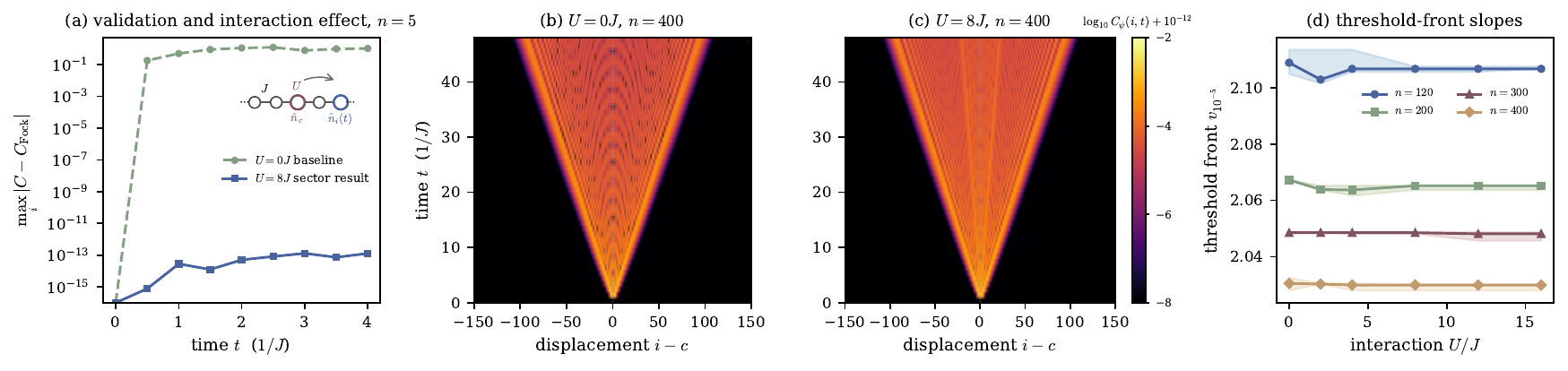}
\caption{State-dependent operator spreading in the two-photon Bose--Hubbard chain. Time is measured in units of $1/J$. (a) Maximum sitewise deviation from an independent truncated-Fock calculation at $n=5$ and $U=8J$. The interacting fixed-sector result agrees to numerical precision, whereas the $U=0J$ noninteracting quadratic baseline shows the effect of omitting the interaction. The inset sketches the hopping $J$, onsite interaction $U$, and operators $\hat n_c$ and $\hat n_i(t)$ entering the commutator. (b,c) $\log_{10}[C_\psi(i,t)+10^{-12}]$ for $n=400$ at $U=0J$ and $U=8J$, respectively, with the central site $c$ marked by the dotted line and a shared color scale. (d) Fitted velocity of the $C_\psi=10^{-5}$ threshold front versus interaction strength for $n=120,200,300,400$. Curves show medians and shaded ranges over three seeded states.}
\label{fig:num-otoc}
\end{figure*}

\subsection{Topological certification and edge dynamics of bound pairs}
\label{ssec:num-edge}

Interacting photons in synthetic gauge fields are now experimentally accessible~\cite{Roushan.2017-ChiralGroundstateCurrents}, and Hofstadter bands together with their Chern numbers have been realized and measured with ultracold atoms~\cite{Aidelsburger.2013-RealizationHofstadterHamiltonian,Aidelsburger.2015-MeasuringChernNumber}. Two-body bound states can themselves inherit nontrivial topology~\cite{Gorlach.2017-TopologicalEdgeStates,Okuma.2023-RelationshipTwoparticleTopology}, and the two-particle Hofstadter problem has been studied since~\cite{Barelli.1996-DoubleButterflySpectrum,Barelli.1997-TwoInteractingHofstadter}. Here we extend these studies to the real-time dynamics of bound pairs under open boundary conditions. Our framework gives direct access to observables of boundary-localized states, allowing us to connect the topological bound-pair band to its edge dynamics.

As a specialized two-dimensional application, we therefore consider two photons on a Hofstadter--Hubbard lattice~\cite{Hofstadter.1976-EnergyLevelsWave,Barelli.1996-DoubleButterflySpectrum,Barelli.1997-TwoInteractingHofstadter}. In the Landau gauge used here,
\begin{equation}
\begin{aligned}
\hat H
=&
-J\sum_{x,y}
\left(
\hat a_{x,y}^\dagger\hat a_{x+1,y}
+
e^{2\pi\ii\phi x}
\hat a_{x,y}^\dagger\hat a_{x,y+1}
+\mathrm{h.c.}
\right)\\
& +\frac{U}{2}\sum_j\hat n_j^2 ,
\end{aligned}
\label{eq:num-hofstadter-hubbard}
\end{equation}
where $\phi$ denotes the signed magnetic flux parameter in units of the flux quantum, with its positive direction fixed by the hopping convention in \cref{eq:num-hofstadter-hubbard}. Within $\mathcal H_2$, replacing $\hat n_j^2$ by $\hat n_j(\hat n_j-1)$ changes only
\begin{equation}
\frac{U}{2}\sum_j\hat n_j^2
=
\frac{U}{2}\sum_j\hat n_j(\hat n_j-1)
+U\mathds{1},
\end{equation}
so the interaction differs from the conventional Bose--Hubbard form only by an overall energy shift.

We first examine the spectrum on an open $12\times12$ lattice with $M=144$ modes, dimension $d_{144,2}=10\,440$, and $U=10J$. Sparse shift-invert diagonalization samples $120$ eigenpairs nearest $E=2U$ for fluxes $\phi=0$ and $\phi=1/4$. Writing $\ket{2_j}
=
\frac{(\hat a_j^\dagger)^2}{\sqrt{2}}\ket{0}$,
we characterize an eigenstate $|\psi_\alpha\rangle$ by its total doublon fraction and the fraction of that doublon weight carried by boundary sites,
\begin{equation}
D_\alpha
=
\sum_j
\left|
\langle2_j|\psi_\alpha\rangle
\right|^2,
\qquad
\eta_\alpha
=
\frac{
\sum_{j\in\partial\Lambda}
\left|
\langle2_j|\psi_\alpha\rangle
\right|^2
}{
D_\alpha
}.
\label{eq:num-edge-measures}
\end{equation}
Panel~\ref{fig:num-edge-transport}(a) shows the sampled doublon-like states with $D_\alpha>0.5$, which we classify as edge-localized when $\eta_\alpha>0.6$~\cite{Winkler2006,Valiente2008,Strohmaier.2010-ObservationElasticDoublon}. At nonzero flux, the boundary-weight distribution develops a pronounced upper tail, where $31$ of the $120$ sampled states satisfy $D_\alpha>0.5$ and $\eta_\alpha>0.6$, compared with one at zero flux, even though the median $\eta_\alpha$ decreases from $0.343$ to $0.236$. This concentration of boundary weight into a subset of the doublon spectrum provides finite-size evidence of boundary localization~\cite{Hatsugai.1993-ChernNumberEdge}.

We also determine the topology of the interacting multiplet on an $L\times L$ torus using twisted boundary conditions~\cite{Niu.1985-QuantizedHallConductance},
\begin{equation}
\hat a_{x+L,y}
=
e^{\ii\theta_x}\hat a_{x,y},
\qquad
\hat a_{x,y+L}
=
e^{\ii\theta_y}\hat a_{x,y}.
\end{equation}
At each twist $\bm\theta=(\theta_x,\theta_y)$, we identify the $M$-state doublon multiplet from $M+8$ eigenvectors near $2U$, order the selected states by energy, and retain the lower subband of rank $M/2$. Let $\Psi(\bm\theta)$ be an orthonormal frame for the selected subspace. We evaluate its non-Abelian lattice Chern number \cite{Thouless.1982-QuantizedHallConductance} using the determinant links
\begin{equation}
\mathcal U_\mu(\bm\theta)
=
\frac{
\det[
\Psi(\bm\theta)^\dagger
\Psi(\bm\theta+\Delta_\mu)
]
}{
\left|
\det[
\Psi(\bm\theta)^\dagger
\Psi(\bm\theta+\Delta_\mu)
]
\right|
},
\end{equation}
and the plaquette curvature
\begin{equation}
\begin{aligned}
\mathcal F_{xy}(\bm\theta)
=
\arg\!\left[
\frac{
\mathcal U_x(\bm\theta)
\mathcal U_y(\bm\theta+\Delta_x)
}{
\mathcal U_x(\bm\theta+\Delta_y)
\mathcal U_y(\bm\theta)
}
\right],\quad
C
=
\frac{1}{2\pi}
\sum_{\bm\theta}
\mathcal F_{xy}(\bm\theta).
\end{aligned}
\label{eq:num-doublon-chern}
\end{equation}
This is the determinant-link construction of Ref.~\cite{Fukui2005}, applied to an interacting two-body band as in Ref.~\cite{Iskin.2021-TwobodyProblemMultiband,Iskin.2023-TopologicalTwobodyBands,Alyuruk.2024-ChernNumbersTwoBody}. The ordering in \cref{eq:num-doublon-chern} fixes the positive orientation of the $(\theta_x,\theta_y)$ grid.

Panel~\ref{fig:num-edge-transport}(b) shows the results at $U=10J$ for $L=8,12,16$, $\phi=\pm1/4$, and an $8\times8$ twist grid. Every displayed lower subband remains isolated with minimum gap $\Delta_{\min}\geq0.088J$ and has $|C|=4$, with opposite signs under flux reversal. Calculations on $6\times6$ and $10\times10$ twist grids at $L=8$ give the same integer, and a broader sweep over $U/J=8,10,16$ produces the same result whenever the selected subband remains isolated; we assign no invariant when the separating gap closes.

We finally test whether the topological multiplet supports directional boundary motion~\cite{Bello.2017-SublatticeDynamicsQuantum}. On the open $12\times12$ lattice at $U=10J$ and $\phi=\pm1/4$, we retain eigenstates satisfying $D_\alpha>0.5$ and $\eta_\alpha>0.7$ within $|E_\alpha-E_{\mathrm{med}}|<0.12J$, where $E_{\mathrm{med}}$ is the median energy of the selected edge-localized states. Projecting the local doublon at the midpoint of the left boundary onto this subspace and normalizing gives the initial wave packet. We propagate it over $0\leq Jt\leq260$ and record the doublon density
$\rho_j^{(d)}(t)=|\langle2_j|\psi(t)\rangle|^2$. Its center of mass is
\begin{equation}
\bm r_{\mathrm{cm}}(t)
=
\frac{
\sum_j\bm r_j\rho_j^{(d)}(t)
}{
\sum_j\rho_j^{(d)}(t)
}.
\end{equation}
Defining $z_{\mathrm{cm}}(t)=[x_{\mathrm{cm}}(t)-x_0]+\ii[y_{\mathrm{cm}}(t)-y_0]$ relative to the lattice center and denoting the unwrapped phase of $z_{\mathrm{cm}}(t)$ by $\widetilde\vartheta(t)$, we measure the accumulated winding
\begin{equation}
\nu(t)
=
\frac{
\widetilde\vartheta(t)-\widetilde\vartheta(0)
}{
2\pi
}.
\label{eq:num-edge-winding}
\end{equation}

The doublon fraction remains above $0.917$ for both flux orientations. Figures~\ref{fig:num-edge-transport}(c,d) show that reversing the flux reverses the boundary motion. The final windings are approximately $+1.77$ and $-0.99$ turns for $\phi=+1/4$ and $\phi=-1/4$, respectively. Their magnitudes differ because we reverse the flux while holding the localized initial packet fixed. Thus flux reversal fixes the direction of motion, but not the winding magnitude. Together with the independently computed Chern number, this associates the directional finite-size edge motion with the isolated topological doublon band. Edge-confined doublon dynamics under a magnetic flux has been demonstrated in driven two-dimensional lattices~\cite{Bello.2017-SublatticeDynamicsQuantum}, where chirality was inferred from a ribbon dispersion. Here the invariant is computed directly for the isolated multiplet on the twist torus, and the circulation of an open-lattice wave packet is shown to reverse with the sign of the flux. 

\begin{figure*}[htbp]
\centering
\includegraphics[width=\textwidth]{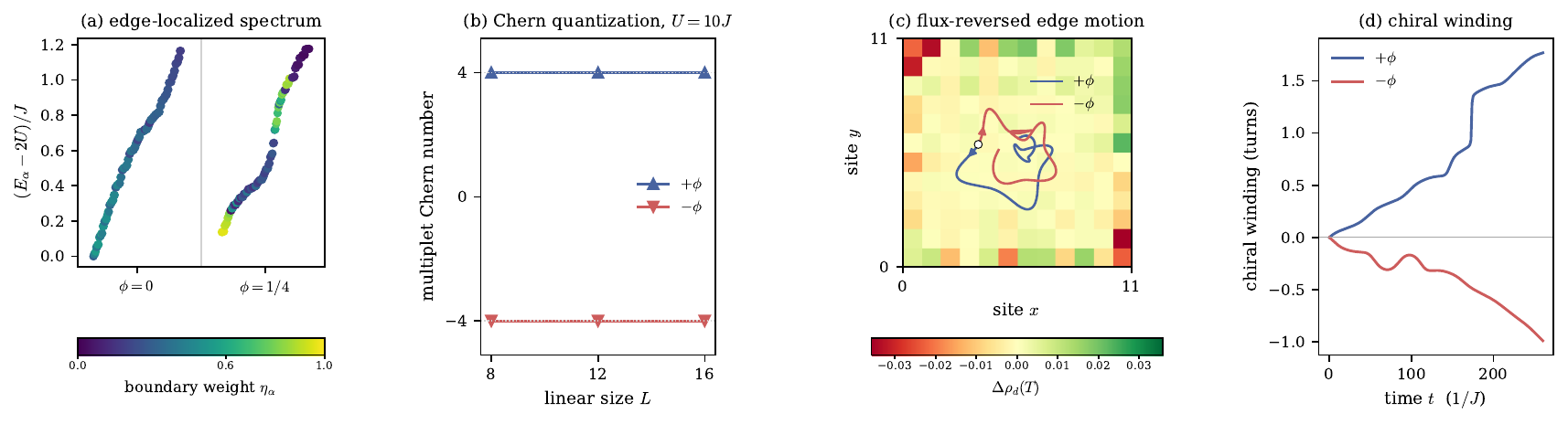}
\caption{ Topological doublon edge dynamics in the two-photon Hofstadter--Hubbard model. (a) Energy-ordered doublon spectra at $\phi=0$ and $\phi=1/4$, colored by their boundary weight $\eta_\alpha$. (b) Chern number of the isolated doublon multiplet for $L=8,12,16$ and $\phi=\pm1/4$. All displayed multiplets remain gapped, with $\Delta_{\min}\geq0.088J$. (c) Difference $\Delta\rho_d(T)=\rho_d^{(+)}(T)-\rho_d^{(-)}(T)$ between the final doublon densities for opposite flux orientations. The curves trace the corresponding center-of-mass motion, the arrows indicate propagation direction, and the white marker denotes the initial packet. (d) Accumulated winding for the two flux orientations.}
\label{fig:num-edge-transport}
\end{figure*}

\section{Discussion and outlook}
\label{sec:discussion}
This work extends Lie-algebraic classical simulation to bosonic systems by identifying the object that actually governs forward mean-value simulation: the reachable operator module of the requested observable. The resulting criterion separates the complexity of observable propagation from both the dimension of the physical Hilbert space and the dimension of the full dynamical Lie algebra. This distinction is particularly consequential for bosons, where the physical representation is infinite-dimensional, yet the Heisenberg orbit of a relevant observable may remain finite and polynomially tractable. Conversely, our boundary examples show that even a finite generator algebra may induce an infinite observable orbit. Exact simulability is therefore a property of the dynamics-observable pair rather than of the Hamiltonians alone.

Several individual ingredients used here are established techniques. Symplectic propagation is the standard language of Gaussian optics~\cite{Bartlett2002,Wang.2007-QIGaussian,Weedbrook2012}, and restriction to a conserved few-particle sector is familiar from many-body physics. Their role in the present work is not as isolated constructions, but as instances of a common simulation principle. Gaussian dynamics closes because it preserves polynomial degree, whilst number-conserving non-Gaussian dynamics closes because it preserves bounded photon-number sectors. Meanwhile nilpotent phase dynamics closes on suitable bounded-degree quadrature observables. The reachable-module criterion places these mechanisms on the same footing, states their computational requirements explicitly, and places their expectation values, correlation functions, and gradients within a common observable-module criterion. In particular, it identifies exact polynomial-time mean-value simulation under non-Gaussian Hamiltonian flow at bounded photon number, including arbitrary-strength Kerr and pair-hopping interactions and coherent support across finitely many bounded sectors.

Numerically, we verify the exact reductions against independent references, confirm the squeezing-band error orders, and apply the finite-sector construction to doublon co-tunnelling, state-dependent OTOCs on up to $400$ modes, and Chern-certified bound-pair edge dynamics.

Several directions follow naturally from these results. 
The perturbative squeezing construction suggests extending exact closure to \emph{approximately closed reachable modules}, equipped with computable a priori or a posteriori leakage estimates and adaptive truncation rules determined by a target accuracy, evolution time, or interaction strength. Recent work has shown how Lie-algebraic commutator structure controls truncation errors in finite-order Magnus approximations to time-dependent Liouville-space dynamics~\cite{ding2026liealgebraicapproachnonmarkovianquantum}. Adapting such algebra-aware error analyses to this projection error, particularly for unbounded bosonic generators, could turn successive-band convergence into a certified accuracy criterion for broader classes of non-Gaussian dynamics without relying on arbitrary local Fock-space cutoffs.

The efficient gradient propagation available within this framework also motivates applications to variational bosonic control, circuit and pulse optimization, and Hamiltonian learning. In these settings, the module structure may help determine which parameters are identifiable from a chosen set of observables and which models remain efficiently differentiable and classically verifiable. Together with recent results linking model concentration and classical simulability in passive linear optics~\cite{monbroussou2026classicalsimulationmodelconcentration}, our reachable-module and gradient constructions provide a natural starting point for asking whether analogous trainability--simulability relations, including barren-plateau behavior, persist for active Gaussian circuits and the non-Gaussian families identified here.

A central theoretical challenge is to classify generator-observable pairs with polynomial-dimensional reachable modules. A concrete step in this direction is to determine the maximal solvable extensions of the nilpotent phase family, including which additional quadratic generators preserve polynomial-dimensional reachable modules. More generally, such a classification should encompass the mechanisms identified here, including degree preservation, conserved-sector decompositions, nilpotent adjoint actions, and symmetry-adapted representations, while also accounting for the computational accessibility of module bases, generator actions, and input overlaps. This could lead to algorithms that discover tractable operator spaces directly from generators and observables without first constructing the full DLA. Finally, the same perspective may extend beyond unitary dynamics by replacing commutator actions with adjoint Liouvillian or quantum-channel actions, opening a route toward exact or controlled simulation of driven-dissipative bosonic systems.

To conclude, we have shown that reachable operator modules provide a problem-adapted criterion for exact bosonic mean-value simulation. They resolve the obstruction posed by infinite-dimensional representations, unify Gaussian and sector-based simulation mechanisms, and identify non-Gaussian regimes with exact correlator dynamics. Beyond the specific families constructed here, this work establishes a basis for discovering, differentiating, and systematically approximating further classes of classically tractable quantum dynamics.

\section*{Code and data availability}
The reference implementation for reproducing all experiment configurations and figures in this paper is openly available at \url{https://github.com/TimothyHeightman/bosonic-gsim}.

\section*{Acknowledgements}
The authors thank the organizers and participants of the 7th Seefeld Workshop on Quantum Information (in particular Isabel Moreno), where some of the central ideas of this project were developed. We would also like to thank Jose Martinez for his help proof-reading the manuscript.
T.H.\ and A.A.\ acknowledge support from the Government of Spain (Severo Ochoa CEX2019-000910-S, FUNQIP, and European Union NextGenerationEU PRTR-C17.I1), the European Union (PASQuanS2.1), Fundaci\'o Cellex, Fundaci\'o Mir-Puig, Generalitat de Catalunya (CERCA program), and the AXA Chair in Quantum Information Science.
A.B.\ acknowledges funding from the  Federal Ministry of Research, Technology and Space of Germany (BMFTR) under grant 13N17231 (HoliQC2), and generous support from the Munich Quantum Valley, which is funded by the Bavarian state government through the Hightech Agenda Bavaria.
Part of the numerical experiments and runtime benchmarks were carried out on the LICCA HPC cluster of the University of Augsburg, co-funded by the Deutsche Forschungsgemeinschaft (DFG, German Research Foundation), Project ID 499211671.


\clearpage
\onecolumngrid
\appendix

\section{Lie-algebra modules and reachable operator spaces}
\label{app:module-background}

The adjoint-space formulation of $\gsim$ represents Heisenberg evolution as a linear action of the dynamical Lie algebra on a space of observables~\cite{Goh.2025-LiealgebraicClassicalSimulations}. We briefly recall the corresponding module terminology and explain why the reachable operator module need not itself be an operator algebra.

Let $\mathfrak g$ be the real DLA generated by the Hermitian operators $H_1,\ldots,H_K$, with the Hermitian bracket $ [X,Y]_{\mathfrak g}:=\ii[X,Y]$ of \cref{eq:hermitian-ad}.
Let $\mathcal A$ denote a real vector space of Hermitian operators on which the relevant commutators are defined. The adjoint action of \cref{eq:adphi-definition} associates with every $X\in\mathfrak g$ the linear map
\begin{equation}
\rho(X):\mathcal A\longrightarrow\mathcal A,
\qquad
\rho(X)(A):=\mathrm{ad}_X(A)=\ii[X,A].
\label{eq:module-adjoint-action}
\end{equation}
The Jacobi identity gives
\begin{equation}
[\rho(X),\rho(Y)]
=
\rho\!\left([X,Y]_{\mathfrak g}\right),
\label{eq:module-representation}
\end{equation}
where the bracket on the left is the commutator of linear maps. Thus, $\rho$ is a representation of $\mathfrak g$ on $\mathcal A$. In representation-theoretic language, $\mathcal A$ is a $\mathfrak g$-module.
A subspace $\mathcal W\subseteq\mathcal A$ is a $\mathfrak g$-submodule if $\mathrm{ad}_X(\mathcal W)\subseteq\mathcal W$ for every $X\in\mathfrak g$.
It suffices to verify this condition for the circuit generators $H_k$. Indeed, invariance under their adjoint maps implies invariance under linear combinations and commutators of these maps, and hence under the action of every element of the generated Lie algebra. The reachable operator module $\mathcal V(O)$ of \cref{def:reach} is therefore the smallest $\mathfrak g$-submodule containing $O$. It is sometimes called the cyclic submodule generated by $O$.

This construction resembles the nested-commutator definition of the DLA~\eqref{eq:DLA}, but the two objects have different seeds and closure requirements. In \cref{eq:DLA}, the adjoint map is applied to all circuit generators $H_\beta$. By contrast, $\mathcal V(O)$ starts from the target observable and need not contain any generator. If $O\in\mathfrak g$, then $\mathcal V(O)$ is the ideal of $\mathfrak g$ generated by $O$; it equals the full DLA only under additional conditions. If $O\notin\mathfrak g$, the reachable module is instead an invariant subspace of the ambient operator representation.

A Lie-algebra module is not generally a Lie algebra. Module invariance requires $\ii[H_k,A]\in\mathcal V(O)$ for $A\in\mathcal V(O)$, but it imposes no condition on $\ii[A,B]$ for two elements $A,B\in\mathcal V(O)$. Consequently, $\mathcal V(O)$ need not possess internal structure constants. For a basis $\{B_\alpha\}_{\alpha=1}^{D}$, the required data are instead the representation matrices of \cref{eq:module-action-matrices}.
These matrices describe how the generators act on the module and provide the finite-dimensional propagators used in the main text.

A Lie-algebra module should also not be confused with an associative bimodule. If $\mathcal B$ is the associative algebra generated by the Hamiltonians, a $\mathcal B$-bimodule must be closed under separate left and right multiplication,
\begin{equation}
\mathcal B\mathcal W\subseteq\mathcal W,
\qquad
\mathcal W\mathcal B\subseteq\mathcal W.
\end{equation}
The reachable-module construction only requires the differences $\ii(H_kA-AH_k)$ to remain in the space. It therefore imposes a weaker and more directly problem-specific closure condition.

For example, consider one bosonic mode with $
H=\frac12\left(\hat x^2+\hat p^2\right)$, and 
$O=\hat x$.
The canonical commutation relation gives
$
\mathrm{ad}_H(\hat x)=\hat p,
$ and $
\mathrm{ad}_H(\hat p)=-\hat x,
$
and hence
\begin{equation}
\mathcal V(\hat x)
=
\spann_{\mathbb R}\{\hat x,\hat p\}.
\end{equation}
This two-dimensional space is invariant under the dynamics and yields
\begin{equation}
e^{t\,\mathrm{ad}_H}(\hat x)
=
\cos(t)\hat x+\sin(t)\hat p.
\end{equation}
Nevertheless, it differs from the one-dimensional DLA $\spann_{\mathbb R}\{H\}$, is not a Lie algebra because $\ii[\hat x,\hat p]=-\mathds{1}\notin\mathcal V(\hat x)$, and is not a bimodule because products such as $H\hat x$ do not remain linear in $\hat x$ and $\hat p$.

The reachable-module criterion therefore asks whether the dynamics--observable pair generates a tractable finite-dimensional representation.

\section{Proofs}
\subsection{Proof of \cref{thm:reach}}
\label{prf:reach}

\begin{proof}
Let $\{B_\alpha\}_{\alpha=1}^{D}$ be the chosen basis of $\mathcal V(O)$ given by \cref{def:reach}. For every spanning element
\begin{equation}
X=
\mathrm{ad}_{H_{k_1}}\cdots
\mathrm{ad}_{H_{k_\ell}}(O)
\end{equation}
and every generator $H_j$, we have
\begin{equation}
\mathrm{ad}_{H_j}(X)
=
\mathrm{ad}_{H_j}
\mathrm{ad}_{H_{k_1}}\cdots
\mathrm{ad}_{H_{k_\ell}}(O)
\in\mathcal V(O).
\end{equation}
Hence each $\mathrm{ad}_{H_j}$ restricts to a linear endomorphism of $\mathcal V(O)$. Its exponential therefore also preserves the module.
Define the action matrices $\mathrm{ad}_{H_k}(B_\alpha)$ as in \cref{eq:module-action-matrices}.
For $X=\sum_\alpha c_\alpha B_\alpha$, let $\bm c^\top$ denote its row vector of coefficients. The Heisenberg orbit generated by one layer satisfies
\begin{equation}
\frac{\dd}{\dd t}X(t)
=
\mathrm{ad}_{H_k}(X(t)),
\qquad
X(0)=X.
\end{equation}
Substituting the basis expansion and using \cref{eq:module-action-matrices} gives
\begin{equation}
\frac{\dd}{\dd t}\bm c^\top(t)
=
\bm c^\top(t)A_k,
\end{equation}
and therefore
\begin{equation}
\bm c^\top(t)
=
\bm c^\top(0)e^{tA_k}.
\label{eq:proof-single-layer}
\end{equation}
Thus the restricted matrix exponential represents the exact Heisenberg action of the layer on $\mathcal V(O)$.
Write $M_\ell:=e^{\theta_\ell A_{k_\ell}}$. With the circuit ordering of \cref{eq:circuit-unitary}, the layers act on the observable coefficients in the order
\begin{equation}
\bm w^\top
\longmapsto
\bm w^\top M_L\cdots M_1.
\end{equation}
Consequently,
\begin{equation}
U(\bm\theta)^\dagger O\,U(\bm\theta)
=
\sum_{\beta=1}^{D}
\left(
\bm w^\top M_L\cdots M_1
\right)_\beta B_\beta.
\end{equation}
Contracting this expansion with the input state and using
$
e_\beta^{\mathrm{in}}
=
\Tr[B_\beta\rho_{\mathrm{in}}]
$
yields
\begin{equation}
\langle O(\bm\theta)\rangle
=
\bm w^\top M_L\cdots M_1\bm e^{\mathrm{in}},
\end{equation}
which is \cref{eq:module-eval}. This reduction is exact because every Heisenberg-evolved observable remains in the invariant module; no operator-space truncation has been introduced.
\end{proof}

\subsection{Proof of \cref{cor:module-correlators}}
\label{prf:module-correlators}

\begin{proof}
For each $j$, the reachable-module expansion of \cref{eq:reach-mod-expansion} is exact.
Expanding the ordered operator product by multilinearity gives
\begin{equation}
\begin{aligned}
O_1^{(1)}\cdots O_m^{(m)}
=
\left(
\sum_{\alpha_1=1}^{D_1}
\widetilde w_{\alpha_1}^{(1)}
B_{\alpha_1}^{(1)}
\right)
\cdots
\left(
\sum_{\alpha_m=1}^{D_m}
\widetilde w_{\alpha_m}^{(m)}
B_{\alpha_m}^{(m)}
\right)
=
\sum_{\alpha_1=1}^{D_1}
\cdots
\sum_{\alpha_m=1}^{D_m}
\left(
\prod_{j=1}^{m}
\widetilde w_{\alpha_j}^{(j)}
\right)
B_{\alpha_1}^{(1)}
\cdots
B_{\alpha_m}^{(m)}.
\end{aligned}
\label{eq:proof-correlator-expansion}
\end{equation}
Only the scalar coefficients have been collected in the product. The order of the basis operators remains unchanged, so \cref{eq:proof-correlator-expansion} also applies when the factors correspond to unequal circuit times or form an out-of-time-ordered product.
Taking the expectation value and using linearity of the trace yields
\begin{equation}
\begin{aligned}
\left\langle
O_1^{(1)}\cdots O_m^{(m)}
\right\rangle
=
\Tr\!\left[
O_1^{(1)}\cdots O_m^{(m)}
\rho_{\mathrm{in}}
\right]
&=
\sum_{\alpha_1=1}^{D_1}
\cdots
\sum_{\alpha_m=1}^{D_m}
\left(
\prod_{j=1}^{m}
\widetilde w_{\alpha_j}^{(j)}
\right)
\Tr\!\left[
B_{\alpha_1}^{(1)}
\cdots
B_{\alpha_m}^{(m)}
\rho_{\mathrm{in}}
\right]
\\
&=
\sum_{\alpha_1,\ldots,\alpha_m}
\left(
\prod_{j=1}^{m}
\widetilde w_{\alpha_j}^{(j)}
\right)
E^{\mathrm{in}}_{\alpha_1\cdots\alpha_m}.
\end{aligned}
\end{equation}
This proves the stated identity without assuming that the basis operators commute or that the input moments factorize.
The contraction contains $\prod_{j=1}^{m}D_j$ terms. If each $D_j$ is polynomial in the system size and $m=O(1)$, then
\begin{equation}
\prod_{j=1}^{m}D_j
=
\operatorname{poly}(n).
\end{equation}
The coefficient vectors are obtained by the individual module propagations, and the remaining quantities are precisely the ordered input moments. Under the stated accessibility assumptions, evaluating all required coefficients and carrying out the contraction therefore takes polynomially many arithmetic operations.
\end{proof}

\subsection{Proof of \cref{prop:moment-tower}}
\label{prf:moment-tower}
\begin{proof}
Let $\widetilde R_0=\mathds{1}$ denote the identity component of the augmented quadrature vector. The affine symplectic evolution of \cref{eq:augmentedS} reads componentwise
\begin{equation}
U(t)^\dagger\widetilde R_\alpha U(t)
=
\sum_{\beta=0}^{2n}
\widetilde S_{\alpha\beta}(t)\widetilde R_\beta.
\label{eq:proof-augmented-component}
\end{equation}
Since unitary conjugation preserves ordered products,
\begin{equation}
U(t)^\dagger
\left(
\widetilde R_{\alpha_1}\cdots
\widetilde R_{\alpha_m}
\right)
U(t)
=
\prod_{\ell=1}^{m}
\left(
U(t)^\dagger
\widetilde R_{\alpha_\ell}
U(t)
\right),
\end{equation}
where the factors retain their original order. Substituting \cref{eq:proof-augmented-component} into every factor gives
\begin{equation}
\begin{aligned}
U(t)^\dagger
\left(
\widetilde R_{\alpha_1}\cdots
\widetilde R_{\alpha_m}
\right)
U(t)
=
\prod_{\ell=1}^{m}
\left(
\sum_{\beta_\ell=0}^{2n}
\widetilde S_{\alpha_\ell\beta_\ell}(t)
\widetilde R_{\beta_\ell}
\right)
=
\sum_{\beta_1,\ldots,\beta_m}
\left(
\prod_{\ell=1}^{m}
\widetilde S_{\alpha_\ell\beta_\ell}(t)
\right)
\widetilde R_{\beta_1}\cdots
\widetilde R_{\beta_m}.
\end{aligned}
\end{equation}
Consequently,
\begin{equation}
\begin{aligned}
U(t)^\dagger O\,U(t)
=
\sum_{\alpha_1,\ldots,\alpha_m}
\sum_{\beta_1,\ldots,\beta_m}
w_{\alpha_1\cdots\alpha_m}
\left(
\prod_{\ell=1}^{m}
\widetilde S_{\alpha_\ell\beta_\ell}(t)
\right)
\widetilde R_{\beta_1}\cdots
\widetilde R_{\beta_m}.
\end{aligned}
\end{equation}
Taking the trace against $\rho_{\mathrm{in}}$ and using the definition of the ordered input moments yields \cref{eq:moment-tower}.

It remains to establish the invariant polynomial space and its dimension. Choose an ordering of the $2n$ quadratures. The canonical commutation relations allow any product to be brought into this order by repeated use of
\begin{equation}
R_iR_j
=
R_jR_i+\ii\Omega_{ij}\mathds{1}.
\label{eq:proof-ccr-reordering}
\end{equation}
Each exchange in \cref{eq:proof-ccr-reordering} produces an ordered term of the same degree and a correction whose degree is lower by two. The ordered monomials
$R_1^{\alpha_1}\cdots R_{2n}^{\alpha_{2n}}$, with $\bm\alpha\in\mathbb N_0^{2n}$,
therefore form the standard Poincaré--Birkhoff--Witt basis of the polynomial Weyl algebra.
For a fixed multi-index $\bm\alpha$, reordering every term in the Weyl-symmetrized monomial $\operatorname{Sym}(R^{\bm\alpha})$ gives
\begin{equation}
\operatorname{Sym}(R^{\bm\alpha})
=
R_1^{\alpha_1}\cdots R_{2n}^{\alpha_{2n}}
+
\text{terms of degree at most }|\bm\alpha|-2.
\label{eq:proof-weyl-triangular}
\end{equation}
Thus the change of basis from ordered monomials to Weyl-symmetrized monomials is triangular with unit diagonal when the monomials are ordered by degree. It is therefore invertible. The operators
\begin{equation}
\left\{
\operatorname{Sym}(R^{\bm\alpha})
:
|\bm\alpha|\leq m
\right\}
\end{equation}
form a basis of the degree-at-most-$m$ polynomial space. Since the quadratures are Hermitian, these symmetrized monomials are Hermitian, and their real span is precisely $\mathcal P_{\leq m}$.
Equation~\eqref{eq:proof-augmented-component} maps each quadrature to a real affine-linear combination of the identity and the quadratures. The conjugate of a product of $q\leq m$ quadratures is therefore a sum of products containing at most $q$ quadratures. Hence
\begin{equation}
U(t)^\dagger\mathcal P_{\leq m}U(t)
\subseteq
\mathcal P_{\leq m}.
\end{equation}
Equivalently, differentiating this finite-dimensional action at $t=0$ shows that the adjoint action of every at-most-quadratic generator preserves $\mathcal P_{\leq m}$. Since $O\in\mathcal P_{\leq m}$, all nested adjoint actions generated from $O$ remain in this space, and therefore $\mathcal V(O)\subseteq\mathcal P_{\leq m}$.
Finally, the number of multi-indices in $2n$ variables with total degree $q$ is
\begin{equation}
\#\left\{
\bm\alpha\in\mathbb N_0^{2n}:|\bm\alpha|=q
\right\}
=
\binom{2n+q-1}{q}.
\end{equation}
Summing over $q=0,\ldots,m$ and applying the hockey-stick identity gives
\begin{equation}
\dim_{\mathbb R}\mathcal P_{\leq m}
=
\sum_{q=0}^{m}
\binom{2n+q-1}{q}
=
\binom{2n+m}{m}.
\end{equation}
The claimed bound on $\dim_{\mathbb R}\mathcal V(O)$ follows from the inclusion above.
\end{proof}

\subsection{Proof of \cref{cor:1-2-moments}}
\label{prf:1-2-moments}

\begin{proof}
Let $\rho(t)=U(t)\rho(0)U(t)^\dagger$  and write the affine Heisenberg evolution componentwise as
\begin{equation}
R_i(t)
:=
U(t)^\dagger R_iU(t)
=
\sum_{k=1}^{2n}S_{ik}(t)R_k-d_i(t)\mathds{1}.
\label{eq:proof-affine-quadratures}
\end{equation}
Using cyclicity of the trace and $\Tr[\rho(0)]=1$, we obtain
\begin{equation}
\begin{aligned}
\overline R_i(t)
=
\Tr[R_i\rho(t)]
=
\Tr[R_i(t)\rho(0)]
=
\sum_{k=1}^{2n}
S_{ik}(t)\Tr[R_k\rho(0)]
-d_i(t)
=
\sum_{k=1}^{2n}
S_{ik}(t)\overline R_k(0)-d_i(t).
\end{aligned}
\end{equation}
Collecting the components gives $\overline{\bm R}(t)$ of \cref{eq:1-2-moments}.
Define the centered Heisenberg quadratures by
\begin{equation}
\delta R_i(t)
:=
R_i(t)-\overline R_i(t)\mathds{1}.
\end{equation}
Substituting the affine transformation and the mean update yields
\begin{equation}
\begin{aligned}
\delta R_i(t)
&=
\sum_{k=1}^{2n}S_{ik}(t)R_k
-d_i(t)\mathds{1}
-
\left[
\sum_{k=1}^{2n}S_{ik}(t)\overline R_k(0)
-d_i(t)
\right]\mathds{1}
\\
&=
\sum_{k=1}^{2n}
S_{ik}(t)
\left(
R_k-\overline R_k(0)\mathds{1}
\right)
=
\sum_{k=1}^{2n}
S_{ik}(t)\delta R_k(0).
\end{aligned}
\label{eq:proof-centered-quadratures}
\end{equation}
The displacement therefore cancels after centering.
Using cyclicity of the trace once more, the covariance at time $t$ can be evaluated in the initial state:
\begin{equation}
\sigma_{ij}(t)
=
\frac12
\Tr\!\left[
\left\{
\delta R_i(t),\delta R_j(t)
\right\}
\rho(0)
\right].
\end{equation}
Substitution of \cref{eq:proof-centered-quadratures} gives
\begin{equation}
\begin{aligned}
\sigma_{ij}(t)
&=
\frac12
\sum_{k,l=1}^{2n}
S_{ik}(t)S_{jl}(t)
\Tr\!\left[
\left\{
\delta R_k(0),\delta R_l(0)
\right\}
\rho(0)
\right]
=
\sum_{k,l=1}^{2n}
S_{ik}(t)\sigma_{kl}(0)S_{jl}(t)
=
\left[
S(t)\sigma(0)S(t)^\top
\right]_{ij},
\end{aligned}
\end{equation}
and hence $\sigma(t)$ of \cref{eq:1-2-moments}.
The derivation uses only the affine Heisenberg action and the existence of the first and second moments; it does not require $\rho(0)$ to be Gaussian.
\end{proof}

\subsection{Proof of \cref{lem:finite-sector}}
\label{prf:finite-sector}

\begin{proof}
Let $|\psi\rangle\in\mathcal H_N$, so that $\hat N|\psi\rangle=N|\psi\rangle$.
Using $[\hat H,\hat N]=0$, we obtain
\begin{equation}
\hat N\hat H|\psi\rangle
=
\hat H\hat N|\psi\rangle
=
N\hat H|\psi\rangle.
\end{equation}
Hence $\hat H|\psi\rangle$ is again an eigenvector of $\hat N$ with eigenvalue $N$, and therefore $\hat H\mathcal H_N\subseteq\mathcal H_N$.
Thus the restriction $\hat H_N:=\hat H|_{\mathcal H_N}$ is well defined.
The occupation-number states
\begin{equation}
|\bm n\rangle
=
|n_1,\ldots,n_n\rangle,
\qquad
n_k\in\mathbb Z_{\geq0},
\qquad
\sum_{k=1}^{n}n_k=N,
\end{equation}
form an orthonormal basis of $\mathcal H_N$. Their number equals the number of non-negative integer solutions of
\begin{equation}
n_1+\cdots+n_n=N.
\end{equation}
By the stars-and-bars argument, such a solution is represented by a sequence of $N$ identical stars and $n-1$ separators. Choosing the separator positions among the $N+n-1$ available positions gives
\begin{equation}
d_{n,N}
=
\binom{N+n-1}{n-1}
=
\binom{N+n-1}{N}.
\end{equation}
Consequently, $\mathcal H_N$ is finite-dimensional and $\hat H_N$ is represented in the occupation-number basis by a $d_{n,N}\times d_{n,N}$ matrix.
The restriction remains Hermitian. Indeed, for any $|\phi\rangle,|\psi\rangle\in\mathcal H_N$,
\begin{equation}
\begin{aligned}
\langle\phi|\hat H_N|\psi\rangle
=
\langle\phi|\hat H|\psi\rangle=
\langle\hat H\phi|\psi\rangle
= \langle\hat H_N\phi|\psi\rangle,
\end{aligned}
\end{equation}
where invariance of $\mathcal H_N$ ensures that $\hat H|\phi\rangle=\hat H_N|\phi\rangle$ remains in the sector. Hence $\hat H_N^\dagger=\hat H_N$.
Because $\hat H=\bigoplus_N\hat H_N$ acts as a bounded Hermitian block on every invariant sector, it is essentially self-adjoint on the finite-particle domain, and $e^{-\ii\hat Ht}$ denotes the unitary group generated by its closure. Since every power of $\hat H$ preserves $\mathcal H_N$, this group restricts on the finite-dimensional sector to the convergent exponential series,
\begin{equation}
\left.e^{-\ii\hat Ht}\right|_{\mathcal H_N}
=
\sum_{r=0}^{\infty}
\frac{(-\ii t)^r}{r!}
\left.\hat H^r\right|_{\mathcal H_N}
=
\sum_{r=0}^{\infty}
\frac{(-\ii t)^r}{r!}
\hat H_N^r
=
e^{-\ii\hat H_Nt}.
\end{equation}
Finally, Hermiticity of $\hat H_N$ gives
\begin{equation}
\left(e^{-\ii\hat H_Nt}\right)^\dagger
e^{-\ii\hat H_Nt}
=
e^{\ii\hat H_Nt}
e^{-\ii\hat H_Nt}
=
\mathds{1}_{\mathcal H_N},
\end{equation}
so $e^{-\ii\hat H_Nt}\in U(d_{n,N})$.
\end{proof}

\subsection{Proof of \cref{thm:boundedN}}
\label{prf:boundedN}

\begin{proof}
Let $P_{\mathcal S}$ be the projector onto $\mathcal H_{\mathcal S}$ of \cref{eq:mixedcarrier}. Number conservation implies $[\hat H_k,\Pi_N]=0$ for every $N$ and $k$, and therefore 
\begin{equation}
[\hat H_k,P_{\mathcal S}]=0.
\label{eq:proof-sector-projector-commutation}
\end{equation}
Consequently, every circuit layer commutes with $P_{\mathcal S}$, i.e., $\left[
e^{-\ii\theta\hat H_k},
P_{\mathcal S}
\right]
=
0$,
and so does any circuit $U$ generated by these layers.

Define the evolved state
\begin{equation}
\rho_U
:=
U\rho_{\mathrm{in}}U^\dagger.
\end{equation}
Using the support condition
$\rho_{\mathrm{in}}=P_{\mathcal S}\rho_{\mathrm{in}}P_{\mathcal S}$
and $[U,P_{\mathcal S}]=0$, we obtain
\begin{equation}
\begin{aligned}
\rho_U
=
UP_{\mathcal S}
\rho_{\mathrm{in}}
P_{\mathcal S}U^\dagger
=
P_{\mathcal S}
U\rho_{\mathrm{in}}U^\dagger
P_{\mathcal S}
=
P_{\mathcal S}\rho_U P_{\mathcal S}.
\end{aligned}
\label{eq:proof-evolved-sector-support}
\end{equation}
Cyclicity of the trace then gives
\begin{equation}
\begin{aligned}
\Tr[O\rho_U]
=
\Tr[
O P_{\mathcal S}\rho_U P_{\mathcal S}
]
=
\Tr[
P_{\mathcal S}OP_{\mathcal S}\rho_U
],
\end{aligned}
\end{equation}
which proves \cref{eq:exact-sector-compression}.

We next show that the compressed operator space is invariant under every generator action. Let $B=P_{\mathcal S}BP_{\mathcal S}$.
Using \cref{eq:proof-sector-projector-commutation}, we find
\begin{equation}
\begin{aligned}
\mathrm{ad}_{\hat H_k}(B)
=
\ii
\left(
\hat H_kP_{\mathcal S}BP_{\mathcal S}
-
P_{\mathcal S}BP_{\mathcal S}\hat H_k
\right)
=
P_{\mathcal S}
\ii[\hat H_k,B]
P_{\mathcal S}.
\end{aligned}
\label{eq:proof-compressed-adjoint}
\end{equation}
Hence $\mathrm{ad}_{\hat H_k}(B)$ is again supported on $\mathcal H_{\mathcal S}$. More explicitly, for the sector block $B_{MN}:=\Pi_MB\Pi_N$, with $M,N\in\mathcal S$, we have
\begin{equation}
\begin{aligned}
\Pi_M\mathrm{ad}_{\hat H_k}(B)\Pi_N
=
\ii\Pi_M
\left(
\hat H_kB-B\hat H_k
\right)
\Pi_N
=
\ii
\left(
\hat H_{k,M}B_{MN}
-
B_{MN}\hat H_{k,N}
\right),
\end{aligned}
\label{eq:sectorpairaction}
\end{equation}
where $\hat H_{k,N}:=\Pi_N\hat H_k\Pi_N$.
Thus each sector-pair space
$\operatorname{Hom}(\mathcal H_N,\mathcal H_M)$ is invariant under the adjoint action.
Because $O_{\mathcal S}=P_{\mathcal S}OP_{\mathcal S}$ is supported on $\mathcal H_{\mathcal S}$, repeated application of \cref{eq:proof-compressed-adjoint} shows that every element generated in its reachable module has the same support. Moreover, since $O_{\mathcal S}$ and the generators are Hermitian,
\begin{equation}
\left(
\ii[\hat H_k,B]
\right)^\dagger
=
\ii[\hat H_k,B]
\end{equation}
whenever $B$ is Hermitian. Therefore
$
\mathcal V(O_{\mathcal S})
\subseteq
\operatorname{Herm}(\mathcal H_{\mathcal S})
\subseteq
\operatorname{End}(\mathcal H_{\mathcal S})$.
Let
\begin{equation}
d_{n,\mathcal S}
=
\dim\mathcal H_{\mathcal S}
=
\sum_{N\in\mathcal S}d_{n,N}.
\end{equation}
The real vector space of Hermitian
$d_{n,\mathcal S}\times d_{n,\mathcal S}$ matrices has dimension
$d_{n,\mathcal S}^2$: it has $d_{n,\mathcal S}$ real diagonal entries and two real parameters for each of its
$\binom{d_{n,\mathcal S}}{2}$ off-diagonal pairs. Hence
\begin{equation}
\dim_{\mathbb R}\mathcal V(O_{\mathcal S})
\leq
d_{n,\mathcal S}^2.
\label{eq:proof-bounded-module-dimension}
\end{equation}

Finally,
\begin{equation}
\begin{aligned}
d_{n,\mathcal S}
=
\sum_{N\in\mathcal S}
\binom{N+n-1}{N}
\leq
\sum_{N=0}^{N_{\max}}
\binom{N+n-1}{N}
=
\binom{n+N_{\max}}{N_{\max}},
\end{aligned}
\end{equation}
where the last equality is the hockey-stick identity. Since
$N_{\max}=\mathcal O(1)$,

$\tbinom{n+N_{\max}}{N_{\max}}
=
\mathcal O(n^{N_{\max}})$,
and therefore
\begin{equation}
\dim_{\mathbb R}\mathcal V(O_{\mathcal S})
=
\mathcal O(n^{2N_{\max}}).
\end{equation}
The restricted generator matrices act on this polynomial-dimensional space, and the reduction is exact because the dynamics never leaves $\mathcal H_{\mathcal S}$. Whenever the compressed observable and input data satisfy the accessibility conditions of \cref{thm:reach}, that theorem gives the corresponding polynomial-time expectation-value simulation.
\end{proof}

\subsection{Proof of \cref{prop:cph}}
\label{prf:cph}

\begin{proof}
Let $\mathfrak g_{\mathrm{cph}}^{(q)}$ be defined as in \cref{eq:cph}.
Since the position operators commute, $\ii[f(\hat{\bm x}),g(\hat{\bm x})]=0$ for all $f,g\in\mathcal P_{\leq q}(\hat{\bm x})$. Likewise, $\ii[\hat p_j,\hat p_k]=0$.
The only nonzero brackets between the two summands follow from the canonical commutation relations:
\begin{equation}
\ii[\hat p_j,f(\hat{\bm x})]
=
\partial_j f(\hat{\bm x}).
\label{eq:proof-cph-derivative}
\end{equation}
Indeed, for a position monomial
$\hat{\bm x}^{\bm\alpha}
=\prod_{k=1}^{n}\hat x_k^{\alpha_k}$,
the Leibniz rule gives
\begin{equation}
\begin{aligned}
\ii[\hat p_j,\hat{\bm x}^{\bm\alpha}]
=
\ii
\left(
\prod_{k\neq j}\hat x_k^{\alpha_k}
\right)
[\hat p_j,\hat x_j^{\alpha_j}]
=
\alpha_j
\hat{\bm x}^{\bm\alpha-\bm e_j}.
\end{aligned}
\end{equation}
Linearity then proves \cref{eq:proof-cph-derivative} for every position polynomial. Since differentiation lowers total degree by one,
\begin{equation}
\partial_j f\in\mathcal P_{\leq q-1}(\hat{\bm x})
\subseteq
\mathcal P_{\leq q}(\hat{\bm x}).
\end{equation}
Thus $\mathfrak g_{\mathrm{cph}}^{(q)}$ is closed under the Hermitian bracket. Bilinearity, antisymmetry, and the Jacobi identity are inherited from the operator commutator, so it is a real Lie algebra.
To establish nilpotency, let
\begin{equation}
\gamma_1
=
\mathfrak g_{\mathrm{cph}}^{(q)},
\qquad
\gamma_{r+1}
=
[\mathfrak g_{\mathrm{cph}}^{(q)},\gamma_r]_{\mathrm H}
\end{equation}
denote its lower central series, where
$[X,Y]_{\mathrm H}:=\ii[X,Y]$. The commutation relations above imply $\gamma_2
=
\mathcal P_{\leq q-1}(\hat{\bm x})$.
The equality holds because differentiating degree-at-most-$q$ polynomials produces every monomial of degree at most $q-1$: for any $\bm\alpha$ with $|\bm\alpha|\leq q-1$,
\begin{equation}
\hat{\bm x}^{\bm\alpha}
=
\frac{1}{\alpha_j+1}
\partial_j
\left(
\hat{\bm x}^{\bm\alpha+\bm e_j}
\right)
\end{equation}
for any chosen $j$. Repeated use of \cref{eq:proof-cph-derivative} therefore gives
\begin{equation}
\gamma_r
=
\mathcal P_{\leq q-r+1}(\hat{\bm x}),
\qquad
2\leq r\leq q+1.
\end{equation}
In particular, $\gamma_{q+1}
=
\mathcal P_{\leq0}(\hat{\bm x})$, and $\gamma_{q+2}=0$, 
because constant operators commute with every element. Hence
$\mathfrak g_{\mathrm{cph}}^{(q)}$ is nilpotent, with nilpotency class at most $q+1$.

It remains to count its dimension. The number of monomials in $n$ commuting variables with total degree exactly $r$ is $\tbinom{n+r-1}{r}$.
Consequently,
\begin{equation}
\begin{aligned}
\dim\mathcal P_{\leq q}(\hat{\bm x})
=
\sum_{r=0}^{q}
\binom{n+r-1}{r}
=
\binom{n+q}{q}.
\end{aligned}
\end{equation}
The momentum operators are linearly independent of the position-polynomial sector, so
\begin{equation}
\dim\mathfrak g_{\mathrm{cph}}^{(q)}
=
n+\binom{n+q}{q}.
\end{equation}
For fixed $q$, the binomial coefficient scales as $\mathcal O(n^q)$. When $q\geq3$, the position-polynomial summand contains the cubic generators $\hat x_k^3$ and $\hat x_k\hat x_l\hat x_m$.
\end{proof}

\subsection{Proof of \cref{prop:cph-reach}}
\label{prf:cph-reach}

\begin{proof}
For multi-indices
$\bm\alpha,\bm\beta\in\mathbb N_0^n$, write the normally ordered monomials as
$
M_{\bm\alpha,\bm\beta}
:=
\hat{\bm x}^{\bm\alpha}
\hat{\bm p}^{\bm\beta}$.
Introduce the weighted degree
\begin{equation}
\mathrm{wt}_q(\bm\alpha,\bm\beta)
:=
|\bm\alpha|+(q-1)|\bm\beta|.
\end{equation}
By definition,
\begin{equation}
\mathcal F_{m,q}
=
\spann_{\mathbb C}
\left\{
M_{\bm\alpha,\bm\beta}:
\mathrm{wt}_q(\bm\alpha,\bm\beta)
\leq m(q-1)
\right\}
\end{equation}
from \cref{eq:cph-filtration}.
We prove that this space is invariant under the two types of generators spanning $\mathfrak g_{\mathrm{cph}}^{(q)}$.
For a momentum generator, the canonical commutation relations and the Leibniz rule give
\begin{equation}
\begin{aligned}
\mathrm{ad}_{\hat p_j}
\left(
M_{\bm\alpha,\bm\beta}
\right)
=
\ii
\left[
\hat p_j,
\hat{\bm x}^{\bm\alpha}
\hat{\bm p}^{\bm\beta}
\right]
=
\alpha_j
\hat{\bm x}^{\bm\alpha-\bm e_j}
\hat{\bm p}^{\bm\beta}.
\end{aligned}
\label{eq:proof-cph-momentum-action}
\end{equation}
Whenever this expression is nonzero, its weighted degree is
\begin{equation}
\mathrm{wt}_q(\bm\alpha-\bm e_j,\bm\beta)
=
\mathrm{wt}_q(\bm\alpha,\bm\beta)-1.
\end{equation}
Thus momentum generators preserve $\mathcal F_{m,q}$.

Now let $V(\hat{\bm x})\in\mathcal P_{\leq q}(\hat{\bm x})$. Repeatedly using
\begin{equation}
\hat p_j f(\hat{\bm x})
=
f(\hat{\bm x})\hat p_j
-\ii\,\partial_jf(\hat{\bm x})
\end{equation}
gives the multi-index normal-ordering identity
\begin{equation}
\hat{\bm p}^{\bm\beta}V(\hat{\bm x})
=
\sum_{\bm 0\leq\bm\gamma\leq\bm\beta}
\binom{\bm\beta}{\bm\gamma}
(-\ii)^{|\bm\gamma|}
\bigl(\partial^{\bm\gamma}V\bigr)(\hat{\bm x})
\hat{\bm p}^{\bm\beta-\bm\gamma}.
\label{eq:proof-cph-normal-ordering}
\end{equation}
Here $\tbinom{\bm\beta}{\bm\gamma}
=
\prod_{j=1}^{n}
\binom{\beta_j}{\gamma_j}$.
\cref{eq:proof-cph-normal-ordering} follows by induction on $|\bm\beta|$: the case $|\bm\beta|=0$ is immediate, and multiplying by one additional $\hat p_j$ produces the two terms required by the multi-index Pascal identity.
Since $V(\hat{\bm x})$ commutes with $\hat{\bm x}^{\bm\alpha}$, subtracting the $\bm\gamma=\bm0$ term in \cref{eq:proof-cph-normal-ordering} yields
\begin{equation}
\begin{aligned}
\mathrm{ad}_{V(\hat{\bm x})}
\left(
M_{\bm\alpha,\bm\beta}
\right)
=
-\ii
\sum_{\substack{
\bm0<\bm\gamma\leq\bm\beta
}}
\binom{\bm\beta}{\bm\gamma}
(-\ii)^{|\bm\gamma|}
\hat{\bm x}^{\bm\alpha}
\bigl(\partial^{\bm\gamma}V\bigr)(\hat{\bm x})
\hat{\bm p}^{\bm\beta-\bm\gamma}.
\end{aligned}
\label{eq:cph-normal-commutator}
\end{equation}
Because $\deg V\leq q$,
\begin{equation}
\deg\bigl(\partial^{\bm\gamma}V\bigr)
\leq
q-|\bm\gamma|
\end{equation}
for every nonzero derivative. Each summand in
\cref{eq:cph-normal-commutator} therefore has weighted degree at most
\begin{equation}
\begin{aligned}
|\bm\alpha|
+q-|\bm\gamma|
+(q-1)
\left(
|\bm\beta|-|\bm\gamma|
\right)
=
\mathrm{wt}_q(\bm\alpha,\bm\beta)
+q-q|\bm\gamma|
\leq
\mathrm{wt}_q(\bm\alpha,\bm\beta),
\end{aligned}
\end{equation}
because $|\bm\gamma|\geq1$. Position-polynomial generators therefore also preserve $\mathcal F_{m,q}$. By linearity, the same holds for every generator in $\mathfrak g_{\mathrm{cph}}^{(q)}$.

We next verify that the initial observable lies in this filtration. Normal ordering uses
\begin{equation}
\hat p_j\hat x_k
=
\hat x_k\hat p_j-\ii\delta_{jk}\mathds{1}.
\end{equation}
Each exchange either preserves the ordinary total degree or produces a correction of degree lower by two. Hence normal ordering a polynomial of total degree at most $m$ produces only monomials satisfying $|\bm\alpha|+|\bm\beta|\leq m$.
For $q\geq2$,
\begin{equation}
\begin{aligned}
\mathrm{wt}_q(\bm\alpha,\bm\beta)
=
|\bm\alpha|+(q-1)|\bm\beta|
\leq
(q-1)
\left(
|\bm\alpha|+|\bm\beta|
\right)
\leq
m(q-1).
\end{aligned}
\end{equation}
Thus $O\in\mathcal F_{m,q}$. Since this space is invariant under every generator action, all nested adjoint actions generated from $O$ remain in it, proving $
\mathcal V_{\mathbb C}(O)
\subseteq
\mathcal F_{m,q}$, with $\mathcal V_{\mathbb C}(O):=\mathcal V(O)\otimes_{\mathbb R}\mathbb C$.

Finally, the normally ordered monomials are linearly independent. Fixing
$b=|\bm\beta|$, there are
$
\tbinom{n+b-1}{b}
$
possible momentum multi-indices. The weighted-degree constraint then requires
\begin{equation}
|\bm\alpha|
\leq
(m-b)(q-1),
\end{equation}
which allows
\begin{equation}
\binom{n+(m-b)(q-1)}{(m-b)(q-1)}
\end{equation}
position multi-indices. Hence
\begin{equation}
\dim_{\mathbb C}\mathcal F_{m,q}
=
\sum_{b=0}^{m}
\binom{n+b-1}{b}
\binom{n+(m-b)(q-1)}{(m-b)(q-1)}.
\label{eq:cph-filtration-dimension}
\end{equation}
For fixed $m$ and $q$, the summand indexed by $b$ scales as
\begin{equation}
\mathcal O\!\left(
n^{m(q-1)-b(q-2)}
\right).
\end{equation}
Since $q\geq3$, the leading term is $b=0$, and therefore
$
\dim_{\mathbb C}\mathcal F_{m,q}
=
\Theta\!\left(n^{m(q-1)}\right)$.
Together with
$\dim_{\mathbb R}\mathcal V(O)
=
\dim_{\mathbb C}\mathcal V_{\mathbb C}(O)$,
this proves \cref{eq:cph-reachable-dimension}.
\end{proof}

\section{Product observables and gradients for reachable modules}
\label{app:module-products-gradients}

We spell out the fixed-order correlator and gradient formulas used in \cref{sec:finiteopertormodules}. The equal-time product-observable and reverse-mode constructions below follow the invariant-subspace formulation of Lie-algebraic simulation in Ref.~\cite{Goh.2025-LiealgebraicClassicalSimulations}. We restate them in reachable-module notation and extend the product construction to operator insertions at unequal circuit times.

\subsection{Equal-time product observables}
For fixed $m=\mathcal O(1)$, let $\mathcal V_j=\mathcal V(O_j)$ have basis $\{B^{(j)}_{\alpha}\}_{\alpha=1}^{D_j}$ and module-action matrices $A^{(j)}_k$ defined by
\begin{equation}
  \mathrm{ad}_{H_k}(B^{(j)}_{\alpha})
  =
  \sum_{\beta=1}^{D_j}
  (A^{(j)}_k)_{\alpha\beta}B^{(j)}_{\beta}.
\end{equation}
Writing $O_j=\sum_\alpha w^{(j)}_\alpha B^{(j)}_\alpha$, the Heisenberg coefficients after the circuit are
\begin{equation}
\widetilde{\bm w}^{(j)\top}
=
\bm w^{(j)\top}
e^{\theta_LA^{(j)}_{k_L}}\cdots
e^{\theta_1A^{(j)}_{k_1}}.
\end{equation}
Therefore
\begin{equation}
  \langle O_1(\bm\theta)\cdots O_m(\bm\theta)\rangle
  =
  \sum_{\alpha_1,\ldots,\alpha_m}
  \widetilde w^{(1)}_{\alpha_1}\cdots
  \widetilde w^{(m)}_{\alpha_m}
  E^{\mathrm{in}}_{\alpha_1\cdots\alpha_m},
  \label{eq:module-product-eval}
\end{equation}
where
\begin{equation}
  E^{\mathrm{in}}_{\alpha_1\cdots\alpha_m}
  =
  \Tr[
    B^{(1)}_{\alpha_1}\cdots B^{(m)}_{\alpha_m}
    \rho_{\mathrm{in}}
  ] .
\end{equation}
For fixed $m$, this evaluation is polynomial whenever all $D_j=\poly(n)$, the individual module propagations are efficient, and the required contraction with $E^{\mathrm{in}}$ can be performed in polynomial time.

\subsection{Unequal-time correlation functions}
The same construction applies when the factors are inserted at different circuit times. Let $U(t_j)$ denote the circuit prefix up to $t_j$ and define
\begin{equation}
O_j(t_j)=U(t_j)^\dagger O_jU(t_j)
=\sum_{\alpha_j}\widetilde w_{\alpha_j}^{(j)}(t_j)B_{\alpha_j}^{(j)}.
\end{equation}
Each coefficient vector is propagated only through the corresponding circuit prefix. Consequently,
\begin{equation}
\left\langle O_1(t_1)\cdots O_m(t_m)\right\rangle
=\sum_{\alpha_1,\ldots,\alpha_m}
\prod_{j=1}^m\widetilde w_{\alpha_j}^{(j)}(t_j)
E_{\alpha_1\cdots\alpha_m}^{\mathrm{in}}.
\end{equation}
The operator order is retained in the input tensor, so the same formula covers out-of-time-ordered products without chronological reordering.

\subsection{Reverse-mode gradients}
The following recursion is the reachable-module form of the adjoint or reverse-mode differentiation used in Refs.~\cite{Goh.2025-LiealgebraicClassicalSimulations,Khaneja.2005-OptimalControlCoupled}.

For gradients, consider first a single observable with module propagators
\begin{equation}
  R_\ell(\theta_\ell)=e^{\theta_\ell A_{k_\ell}},
  \qquad
  F(\bm\theta)
  =
  \bm w^{\top}R_L\cdots R_1\bm e^{\mathrm{in}} .
\end{equation}
Define forward vectors and backward covectors by
\begin{equation}
  \bm v_0=\bm e^{\mathrm{in}},
  \qquad
  \bm v_\ell=R_\ell\bm v_{\ell-1},
\end{equation}
and
\begin{equation}
  \bm \lambda_L=\bm w,
  \qquad
  \bm \lambda_{\ell-1}=R_\ell^{\top}\bm\lambda_\ell .
\end{equation}
Assume that $\bm w$ and $\bm e^{\mathrm{in}}$ are independent of $\bm\theta$.
Then
\begin{equation}
  \partial_{\theta_\ell}F
  =
  \bm\lambda_\ell^{\top}
  A_{k_\ell}
  \bm v_\ell,
  \label{eq:module-gradient}
\end{equation}
using $\partial_{\theta_\ell}R_\ell=A_{k_\ell}R_\ell$. If the same parameter occurs in several circuit layers, its derivative is the sum of the corresponding layer contributions. Thus all first derivatives are obtained by one forward pass and one backward pass through the same finite-dimensional module propagators. For a product correlator, the derivative is the sum of the contributions obtained by inserting $A_{k_\ell}^{(j)}$ on each active tensor leg; equivalently, the generator on the product space is the corresponding Kronecker sum. For an unequal-time product, differentiation with respect to a circuit parameter acts only on those tensor legs whose circuit prefix contains the corresponding layer. Thus a parameter appearing after the insertion time of a factor does not contribute to the derivative of that factor.

\section{Details for the Gaussian degree-truncation family}
\label{app:gaussian-degree-details}

\subsection{Standard Gaussian generators}
\label{app:gaussian-gates}

The following standard Gaussian generators fix our phase and squeezing conventions~\cite{Weedbrook2012} and provide elementary tests for \cref{sec:heisenbergweyl}. 
For a single mode, the phase rotation
\begin{equation}
  H=\omega\,\hat a^\dagger\hat a
\end{equation}
gives
\begin{equation}
  \hat a(t)=e^{-\ii\omega t}\hat a,
  \qquad
  \hat a^\dagger(t)=e^{\ii\omega t}\hat a^\dagger,
\end{equation}
corresponding to a compact $U(1)$ block of the Bogoliubov matrix. Single-mode squeezing,
\begin{equation}
  H=\frac{\ii}{2}\left(\xi^*\hat a^2-\xi\hat a^{\dagger 2}\right),
  \qquad
  \xi=|\xi|e^{\ii\phi},
\end{equation}
gives
\begin{equation}
  \hat a(t)
  =
  \cosh(|\xi|t)\hat a
  -
  e^{\ii\phi}\sinh(|\xi|t)\hat a^\dagger,
\end{equation}
a non-compact hyperbolic symplectic transformation. The beamsplitter Hamiltonian
\begin{equation}
H_{\mathrm{BS}}
=
g\left(\hat a^\dagger\hat b+\hat a\hat b^\dagger\right)
\end{equation}
generates passive mode mixing. Together with phase rotations, such generators span the passive subgroup
$\Sp(2n,\mathbb R)\cap O(2n)\cong U(n)$.

\subsection{Cumulants, product inputs, and fixed-order cost}
\label{app:gaussian-cumulants}

The moment tensor $E^{\mathrm{in}}$ in \eqref{eq:input-moment-tensor} is computed once for the input state. For product inputs $\rho_{\mathrm{in}}=\bigotimes_k\rho_k$, these moments factor over modes. For example, for a single-mode Fock state $\ket m$,
\begin{equation}
  \langle m|\hat a^{\dagger p}\hat a^q|m\rangle
  =
  \delta_{pq}\frac{m!}{(m-p)!},
  \qquad
  p\le m,
  \label{eq:fock-moments}
\end{equation}
where the displayed value applies for $p=q\leq m$, and the moment vanishes otherwise. Coherent, thermal, squeezed, cat, and finite-energy approximate GKP inputs can be treated analogously whenever their required fixed-order quadrature moments can be computed in polynomial time.

For contraction, it is often preferable not to store the full tensor. With a fixed operator ordering, decompose moments into connected cumulants by the partition relation
\begin{equation}
  E^{\mathrm{in}}_{i_1\dots i_m}
  =
  \sum_{\pi\in\mathfrak P_m}
  \prod_{b\in\pi}
  \kappa^{\mathrm{in}}_{|b|}
  (i_j:j\in b),
  \label{eq:moment-cumulant}
\end{equation}
where $\mathfrak P_m$ is the set of partitions of $\{1,\dots,m\}$. Because each tensor leg is propagated by the same linear map, the cumulants propagate blockwise:
\begin{equation}
  \kappa^{(t)}_{|b|}(j_\ell:\ell\in b)
  =
  \sum_{i_\ell:\ell\in b}
  \prod_{\ell\in b}
  \widetilde S_{j_\ell i_\ell}(t)
  \kappa^{\mathrm{in}}_{|b|}(i_\ell:\ell\in b).
  \label{eq:cumulant-propagation}
\end{equation}
For a block $b=\{\ell_1<\cdots<\ell_r\}$, the arguments of
$\kappa_r$ retain their original operator order. Thus
\eqref{eq:moment-tower} can be written as
\begin{equation}
\langle O(t)\rangle
=
\sum_{i_1,\ldots,i_m}
w_{i_1\cdots i_m}
\sum_{\pi\in\mathfrak P_m}
\prod_{b\in\pi}
\kappa^{(t)}_{|b|}
(i_\ell:\ell\in b).
\label{eq:linked}
\end{equation}

For Gaussian input states all cumulants of order $p\ge3$ vanish, so \eqref{eq:linked} reduces to Wick's theorem with propagated mean and covariance. Higher cumulants therefore capture deviations from Gaussian statistics. A fixed-order calculation need not distinguish every non-Gaussian state, but any nonzero cumulant of order $p\geq3$ certifies non-Gaussianity.
For fixed $m$, the number of partition terms is the Bell number $B_m=\mathcal O(1)$, and product structure or sparsity in $w$ avoids storing the full $(2n+1)^m$ tensor.

\subsection{Multi-time correlators and gradients}
\label{app:gaussian-multitime-gradients}

The same tensor-leg formula gives unequal-time correlators. For linear factors,
\begin{equation}
  \left\langle
    \widetilde{\Rvec}_{i_1}(t_1)\cdots
    \widetilde{\Rvec}_{i_m}(t_m)
  \right\rangle
  =
  \sum_{j_1,\dots,j_m}
  \prod_{\ell=1}^m
  \widetilde S_{i_\ell j_\ell}(t_\ell)
  E^{\mathrm{in}}_{j_1\dots j_m}.
  \label{eq:multitime}
\end{equation}
Equal-time moments are the special case $t_1=\cdots=t_m$. Out-of-time-order correlators are handled in the same way, provided the operator order in the input tensor is kept fixed.

For variational Gaussian circuits, let
\begin{equation}
  \widetilde S(\bm\theta)
  =
  \widetilde S_L(\theta_L)\cdots \widetilde S_1(\theta_1).
\end{equation}
Then a parameter derivative of \cref{eq:moment-tower} inserts one differentiated transfer matrix into the corresponding tensor product:
\begin{equation}
  \partial_{\theta_r}\langle O\rangle
  =
  \left\langle
    w,
    \widetilde S_L^{\otimes m}\cdots
    \partial_{\theta_r}\!\left(\widetilde S_r^{\otimes m}\right)
    \cdots
    \widetilde S_1^{\otimes m}
    E^{\mathrm{in}}
  \right\rangle,
  \label{eq:gradient-chain}
\end{equation}
with
\begin{equation}
  \partial_\theta\left(\widetilde S^{\otimes m}\right)
  =
  \sum_{\ell=1}^m
  \widetilde S^{\otimes(\ell-1)}
  \otimes
  \partial_\theta\widetilde S
  \otimes
  \widetilde S^{\otimes(m-\ell)}.
  \label{eq:tensor-gradient}
\end{equation}
Here, $\langle\cdot,\cdot\rangle$ denotes the bilinear contraction over all tensor indices.
For fixed $m$, reverse accumulation evaluates all layer derivatives with the same asymptotic scaling as the forward contraction, provided the differentiated transfer-matrix actions are efficiently computable.

\subsection{Bounded-order correlators and the boson-sampling boundary}
\label{app:bounded-order-boundary}

For passive linear optics, write the output modes as
\begin{equation}
  \hat b_i=\sum_k W_{ik}\hat a_k,
  \qquad
  W\in U(n).
\end{equation}
On a product number input $\ket{\bm n}$, the first-order photon-number moment is
\begin{equation}
  \langle \hat n_i\rangle
  =
  \sum_k n_k |W_{ik}|^2 .
  \label{eq:first-number-moment}
\end{equation}
For $i\ne j$, the second-order correlator is
\begin{equation}
\begin{aligned}
  \langle \hat n_i\hat n_j\rangle
  =
  \left(\sum_k n_k|W_{ik}|^2\right)
  \left(\sum_l n_l|W_{jl}|^2\right)
  +
  \left|\sum_k n_k W_{ik}^*W_{jk}\right|^2
  -
  \sum_k n_k(n_k+1)|W_{ik}|^2|W_{jk}|^2 .
\end{aligned}
\label{eq:ninj-general}
\end{equation}
For the single-photon-per-mode input $\ket{1}^{\otimes n}$ this reduces to
\begin{equation}
  \langle \hat n_i\hat n_j\rangle
  =
  \left(\sum_k |W_{ik}|^2\right)
  \left(\sum_l |W_{jl}|^2\right)
  +
  \left|\sum_k W_{ik}^*W_{jk}\right|^2
  -
  2\sum_k |W_{ik}|^2|W_{jk}|^2 .
  \label{eq:ninj}
\end{equation}
For a two-mode beam splitter of angle $\theta$, \eqref{eq:ninj} gives the Hong--Ou--Mandel coincidence~\cite{Hong.1987}
\begin{equation}
  \langle \hat n_1\hat n_2\rangle
  =
  \cos^2(2\theta),
\end{equation}
which vanishes at a $50/50$ beam splitter. Higher fixed-order photon-number correlators are obtained by the same moment or cumulant contraction and remain polynomial in $n$ for fixed order.

For collision-free input and output patterns $S$ and $T$ containing the same number of photons, the transition probability is
\begin{equation}
P(T|S)
=
\left|\operatorname{Per}(W_{T,S})\right|^2,
\end{equation}
where $W_{T,S}$ is the corresponding interferometer submatrix. For repeated input or output occupations, rows or columns are repeated and the probability includes the usual product of input and output factorial normalizations. For fixed photon number $N=\mathcal O(1)$, these permanents have fixed size and remain efficiently computable. The boson-sampling boundary arises when $N$, and hence the relevant correlation order, grows with system size. Fixed-order correlation evaluation and growing-$N$ output-probability or sampling are therefore distinct problems~\cite{Aaronson2013}.

\section{Symmetry-adapted Gaussian reductions}
\label{app:gaussian-symmetry}

The reductions below are the bosonic analogs of the translation- and permutation-adapted representations developed for practical $\gsim$ preprocessing in Ref.~\cite{Barligea.2026-EnablingLieAlgebraicClassical}. We write the homogeneous holomorphic evolution as $\frac{\dd}{\dd t}\rvec=\mathcal M\rvec$, so that $e^{t\mathcal M}$ is the Bogoliubov transfer matrix of \cref{eq:bogoliubov}.

\subsection{Translation-invariant Gaussian systems}
\label{app:translation-gaussian}

For a translation-invariant quadratic Hamiltonian on a ring, the passive and pairing blocks in the holomorphic generator are circulant. Let $F$ denote the discrete Fourier transform which diagonalizes the passive block. The pairing block then transforms by congruence and therefore couples momentum $q$ only to $-q$. After ordering the holomorphic basis into these momentum pairs, the Bogoliubov generator decomposes into independent momentum sectors, pairing $q$ only with $-q$ when squeezing terms are present. Equivalently,
\begin{equation}
  \mathcal M
  \cong
  \bigoplus_q \mathcal M_q,
  \label{eq:momentum-blocks}
\end{equation}
where each $\mathcal M_q$ acts on the two-dimensional span of $(\hat a_q,\hat a_{-q}^\dagger)$, where momenta are understood modulo $n$. At the self-conjugate momenta $q=0$ and, for even $n$, $q=n/2$, the block acts on $(\hat a_q,\hat a_q^\dagger)$ and remains two dimensional when squeezing is present. It splits into one-dimensional blocks only when the pairing term vanishes. Hence, the full transfer matrix is obtained by exponentiating these independent low-dimensional blocks and transforming the annihilation and creation components back to the position basis.

This is the bosonic analog of using momentum sectors in translation-invariant free-fermion or Pauli-orbit simulations. It is a representation-adapted speedup of the same Gaussian module, not a separate simulability principle.

\subsection{Permutation-invariant Gaussian systems}
\label{app:permutation-gaussian}

For permutation-invariant Gaussian generators, every passive block has the form
\begin{equation}
  A=\alpha\,\mathds{1}_n+\beta(\bm 1\bm 1^\top-\mathds{1}_n),
\end{equation}
and every pairing block has the analogous form
\begin{equation}
  B=\gamma\,\mathds{1}_n+\delta(\bm 1\bm 1^\top-\mathds{1}_n).
\end{equation}
An orthogonal change of basis separates the collective mode
\begin{equation}
  \hat a_{\mathrm{col}}
  =
  \frac1{\sqrt n}
  \sum_{k=1}^n \hat a_k
\end{equation}
from the $(n-1)$-dimensional orthogonal subspace. The collective eigenvalues of $A$ and $B$ are
$\alpha+(n-1)\beta$ and $\gamma+(n-1)\delta$, respectively, whereas every relative mode has eigenvalues $\alpha-\beta$ and $\gamma-\delta$. The Bogoliubov generator therefore splits into one collective Gaussian block and one degenerate relative block:
\begin{equation}
  \mathcal M
  \cong
  \mathcal M_{\mathrm{col}}
  \oplus
  \left(
    \mathcal M_{\mathrm{rel}}
  \right)^{\oplus(n-1)} .
  \label{eq:permutation-blocks}
\end{equation}
The one-particle transfer is therefore determined by one collective and one relative Bogoliubov block and fixed-order propagation uses tensor products of these blocks. When the input moments and observable are also permutation invariant, equivalent tensor components can be grouped by their permutation-orbit multiplicities. This is the bosonic counterpart of orbit-compressed permutation-equivariant simulation~\cite{Barligea.2026-EnablingLieAlgebraicClassical} and uses the same adjoint-space propagation in a symmetry-adapted basis.

\section{Bounded-sector modules and representation}
\label{app:bounded-sector-proofs}

\subsection{Sector-pair decomposition}

Let $\Pi_N$ denote the projector onto $\mathcal H_N$. For any operator $B$, define
\begin{equation}
  B_{MN}:=\Pi_MB\,\Pi_N
  \in\operatorname{Hom}(\mathcal H_N,\mathcal H_M).
\end{equation}
If $[H,\hat N]=0$, then $H=\bigoplus_NH_N$, where $H_N=\Pi_NH\,\Pi_N$.

\begin{lemma}[Sector-pair invariance]
\label{lem:sectorpairs}
For every number-conserving Hamiltonian $H$,
\begin{equation}
  \mathrm{ad}_{H}(B_{MN})
  =
  \ii\left(H_MB_{MN}-B_{MN}H_N\right)
  \in\operatorname{Hom}(\mathcal H_N,\mathcal H_M).
  \label{eq:sectorpairappendix}
\end{equation}
\end{lemma}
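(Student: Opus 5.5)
The plan is to reduce the claim to the block structure of $H$ with respect to the photon-number decomposition, and then compute the commutator directly, exactly as in the derivation of \cref{eq:sectorpairaction} in the proof of \cref{thm:boundedN}.

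The first step is to show that number conservation forces $H$ to be block diagonal. Since $[H,\hat N]=0$, the argument of \cref{lem:finite-sector} gives $H\mathcal H_N\subseteq\mathcal H_N$ for every $N$, and hence $[H,\Pi_N]=0$. From this we get $\Pi_MH\Pi_N=\delta_{MN}H_N$, together with the two identities
\begin{equation}
H\Pi_M=\Pi_MH\Pi_M=H_M,
\qquad
\Pi_NH=\Pi_NH\Pi_N=H_N,
\end{equation}
where the operators are read as acting on the full Fock space and supported on the corresponding sector.

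The second step is to substitute $B_{MN}=\Pi_MB\Pi_N$ into $\mathrm{ad}_H(B_{MN})=\ii(HB_{MN}-B_{MN}H)$. Using $H\Pi_M=H_M\Pi_M$ in the first term and $\Pi_NH=\Pi_NH_N$ in the second gives
\begin{equation}
\mathrm{ad}_H(B_{MN})=\ii\left(H_MB_{MN}-B_{MN}H_N\right).
\end{equation}
To see that this lies in $\operatorname{Hom}(\mathcal H_N,\mathcal H_M)$, note that both terms annihilate $\mathcal H_N^\perp$ (they end in $\Pi_N$) and have range in $\mathcal H_M$ (they begin with $\Pi_M$). Equivalently, one checks $\Pi_M X\Pi_N=X$ for $X=\mathrm{ad}_H(B_{MN})$.

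The one real obstacle is domain. $H$ is unbounded on Fock space, so the products $HB_{MN}$ and $B_{MN}H$ need justification. I would handle this by working on the finite-particle domain, where, by \cref{lem:finite-sector}, each $H_N$ is a finite Hermitian matrix. On that domain $B_{MN}H=B_{MN}\Pi_NH=B_{MN}H_N$ is well defined and bounded, because $\Pi_N$ kills all other sectors. The product $H_MB_{MN}$ only involves $H$ on the finite-dimensional space $\mathcal H_M$, so it is also a well-defined bounded operator. This means the identity can be read as an equality of operators on finite-dimensional sector spaces, with no further analytic subtlety.
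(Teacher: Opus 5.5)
Your proposal is correct and follows the paper's proof essentially verbatim. Number conservation gives $H\Pi_N=\Pi_NH=\Pi_NH\Pi_N$. Substituting $B_{MN}=\Pi_MB\Pi_N$ into the commutator then yields $\ii(H_MB_{MN}-B_{MN}H_N)$, and both terms begin with $\Pi_M$ and end with $\Pi_N$. Your finite-particle-domain remark is a sensible addition that the paper leaves implicit here; it is covered only by the finite-sector lemma's discussion of the restricted Hamiltonian.
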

\begin{proof}
Number conservation implies $H\Pi_N=\Pi_NH=\Pi_NH\Pi_N$. Hence
\begin{equation}
\ii[H,B_{MN}]
=
\ii\left(H_MB_{MN}-B_{MN}H_N\right).
\end{equation}
Both terms map $\mathcal H_N$ into $\mathcal H_M$, so every sector-pair projection other than $(M,N)$ vanishes.
\end{proof}

For an observable $O$, define its sector-pair support by
\begin{equation}
  \operatorname{supp}_{\mathrm{sec}}(O)
  :=
  \left\{
    (M,N):
    \Pi_MO\Pi_N\neq0
  \right\}.
\end{equation}
For an observable compressed to a finite sector union, $O_{\mathcal S}=P_{\mathcal S}OP_{\mathcal S}$, this support is contained in $\mathcal S\times\mathcal S$.
For the dimension count, let
$\mathcal V_{\mathbb C}(O)
=
\mathcal V(O)\otimes_{\mathbb R}\mathbb C$
denote the complexified reachable module. Repeated adjoint actions preserve each sector pair, so
\begin{equation}
  \mathcal V_{\mathbb C}(O)
  \subseteq
  \bigoplus_{(M,N)\in\operatorname{supp}_{\mathrm{sec}}(O)}
  \operatorname{Hom}(\mathcal H_N,\mathcal H_M),
  \label{eq:sectorsupportmodule}
\end{equation}
and therefore
\begin{equation}
  \dim_{\mathbb R}\mathcal V(O)
  =
  \dim_{\mathbb C}\mathcal V_{\mathbb C}(O)
  \leq
  \sum_{(M,N)\in\operatorname{supp}_{\mathrm{sec}}(O)}
  d_{n,M}d_{n,N},
  \label{eq:sectorpairdim}
\end{equation}
yielding~\eqref{eq:diagonalmodule}.

On a finite sector union $\mathcal H_{\mathcal S}$, number-conserving evolution has the form
$U_{\mathcal S}=\bigoplus_{N\in\mathcal S}U_N$, and
\begin{equation}
  O_{MN}(t)=U_M^\dagger O_{MN}U_N.
\end{equation}
Writing $\rho_{MN}=\Pi_M\rho \Pi_N$, the readout becomes
\begin{equation}
  \Tr[O(t)\rho]
  =
  \sum_{M,N\in\mathcal S}
  \Tr_{\mathcal H_M}
  \left[
    O_{MN}(t)\rho_{NM}
  \right].
  \label{eq:mixedsectorreadout}
\end{equation}
Equation~\eqref{eq:mixedsectorreadout} shows that number-conserving observables depend only on $\rho_{NN}$, whereas number-changing observables can access coherent off-diagonal blocks.

\subsection{Matrix-unit and MGGM basis}
We adapt the MGGM basis introduced for bounded-Hamming-weight subspaces in Ref.~\cite{Barligea.2026-EnablingLieAlgebraicClassical} to bosonic photon-number sectors, translated here from the skew-Hermitian convention to the Hermitian bracket convention used throughout this paper.

Enumerate the occupation vectors in $\mathcal H_N$ as $\{\bm n^{(1)},\ldots,\bm n^{(d)}\}$, where $d=d_{n,N}$, and write $\ket a=\ket{\bm n^{(a)}}$. The matrix units
\begin{equation}
  \hat E^{ab}:=\ket a\!\bra b
\end{equation}
satisfy
\begin{equation}
  [\hat E^{ab},\hat E^{cd}]
  =
  \delta_{bc}\hat E^{ad}
  -
  \delta_{da}\hat E^{cb}.
  \label{eq:matrixunitbracket}
\end{equation}
An orthonormal real basis of $\operatorname{Herm}(\mathcal H_N)$, with respect to the Hilbert--Schmidt inner product, is
\begin{equation}
\begin{aligned}
  \Lambda_{\mathrm S}^{ab}
  &=
  \frac{1}{\sqrt2}
  \left(
    \hat E^{ab}+\hat E^{ba}
  \right),\\
  \Lambda_{\mathrm A}^{ab}
  &=
  \frac{\ii}{\sqrt2}
  \left(
    \hat E^{ab}-\hat E^{ba}
  \right),
  \qquad a<b,\\
  \Lambda_{\mathrm D}^{a}
  &=\hat E^{aa}.
\end{aligned}
  \label{eq:mggm}
\end{equation}
For each unordered pair of distinct sectors, fix the orientation $M<N$ and define
\begin{equation}
E_{MN}^{ab}:=\ket{a,M}\!\bra{b,N}, \quad 
1\leq a\leq d_{n,M},\quad 1\leq b\leq d_{n,N},
\qquad
(E_{MN}^{ab})^\dagger=E_{NM}^{ba}.
\end{equation}
An orthonormal real basis, with respect to the Hilbert--Schmidt inner product, of the Hermitian operators supported on the paired off-diagonal blocks between $\mathcal H_M$ and $\mathcal H_N$ is
\begin{equation}
X_{MN}^{ab}
=
\frac{E_{MN}^{ab}+E_{NM}^{ba}}{\sqrt2},
\qquad
Y_{MN}^{ab}
=
\frac{\ii(E_{MN}^{ab}-E_{NM}^{ba})}{\sqrt2}.
\end{equation}
Equivalently, each oriented complex sector-pair space $\operatorname{Hom}(\mathcal H_N,\mathcal H_M)$ may be represented directly by the rectangular matrix units $E_{MN}^{ab}$.
These matrix-unit identities reduce every restricted adjoint action to left and right multiplication by the finite generator matrices, whose entries contain the model-dependent bosonic amplitudes.

\section{Controlled photon-sector leakage under squeezing}
\label{app:squeezing}

\subsection{Regularity and coefficient matching}

The coefficient-matching argument of \cref{ssec:squeezing} expands the dynamics in the overall squeezing strength and associates each power of this parameter with one pair-creation or pair-annihilation process. Since bosonic creation and annihilation operators are unbounded, we must first state conditions under which this perturbative expansion is well defined on a finite time interval.

As in \cref{eq:squeezedH}, we decompose the Hamiltonian as $\hat H(r,t)=\hat H_0(t)+r\hat V(t)$, where $\hat H_0(t)$ conserves the total photon number and $r$ controls the overall squeezing strength. We take $\hat V(t)$ to be the general finite-mode quadratic squeezing interaction,
\begin{equation}
\hat V(t)
=
\frac12\sum_{j,l=1}^n
\left[
\xi_{jl}(t)\hat a_j^\dagger\hat a_l^\dagger
+
\xi_{jl}(t)^*\hat a_j\hat a_l
\right],
\qquad
\xi(t)=\xi(t)^\top.
\label{eq:general-squeezing-generator}
\end{equation}
Here, $\xi(t)=[\xi_{jl}(t)]_{j,l=1}^n$ is the complex symmetric matrix of time-dependent squeezing amplitudes. Its diagonal entries $\xi_{jj}(t)$ generate single-mode squeezing, while the off-diagonal entries $\xi_{jl}(t)$ couple modes $j$ and $l$ through two-mode squeezing. The magnitude and phase of each entry determine the strength and phase of the corresponding pair process. The symmetry of $\xi(t)$ reflects the symmetry of the bosonic pair operators, while the factor $1/2$ prevents off-diagonal mode pairs from being counted twice.

Fix a finite time interval $t\in [0,T]$. We assume that $\hat H(r,t)$ generates a unitary propagator $U_r(t,s)$ for every real $r$ in a neighborhood of the origin and all $0\leq s\leq t\leq T$, with
\begin{equation}
\ii\partial_tU_r(t,s)
=
\hat H(r,t)U_r(t,s),
\qquad
U_r(s,s)=\mathds{1}.
\end{equation}
At $r=0$, this propagator reduces to the number-conserving evolution $U_0(t,s)$ generated by $\hat H_0(t)$.

To separate the squeezing events from this number-conserving background evolution, we pass to the interaction picture and define
\begin{equation}
\hat V_{\mathrm I}(t)
:=
U_0(t,0)^\dagger\hat V(t)U_0(t,0).
\label{eq:squeezing-interaction-picture}
\end{equation}
Because $U_0(t,0)$ preserves every photon-number sector, this change of picture does not alter the defining selection rule of the squeezing interaction, as $\hat V_{\mathrm I}(t)$ still changes the total photon number only by $\pm2$.

Let
\begin{equation}
\mathcal D_{\mathrm{fin}}
:=
\bigcup_{N_{\max}<\infty}
\bigoplus_{N=0}^{N_{\max}}\mathcal H_N
\end{equation}
denote the dense domain of states with finite photon-number support. To make the time-ordered integrals below well defined, we assume that $\hat V_{\mathrm I}(t)$ is strongly measurable on $\mathcal D_{\mathrm{fin}}$, meaning that $t\mapsto\hat V_{\mathrm I}(t)|\psi\rangle$ is measurable for every $|\psi\rangle\in\mathcal D_{\mathrm{fin}}$. We further require, uniformly for $t\in[0,T]$,
\begin{equation}
\Pi_M\hat V_{\mathrm I}(t)\Pi_N=0
\quad\text{unless}\quad |M-N|=2,
\qquad
\|\hat V_{\mathrm I}(t)\Pi_N\|
\leq C_T(N+1),
\label{eq:squeezing-regularity}
\end{equation}
where $C_T<\infty$ may depend on the time interval and the squeezing amplitudes, but not on $N$. The first condition states that every squeezing insertion changes the photon number by exactly two. The second bounds the corresponding bosonic enhancement: the norm of a quadratic pair process grows at most linearly with the photon number. Together, these conditions ensure that every finite-order perturbative contribution is well defined and allow us to control the growth of the Dyson coefficients.
We also assume that $\rho_{\mathrm{in}}$ has bounded photon-number support and require that the number-conserving $O$ grows at most polynomially across photon-number sectors,
\begin{equation}
\|\Pi_NO\Pi_N\|
\leq C_O(N+1)^{p_O}
\label{eq:squeezing-observable-growth}
\end{equation}
for fixed $C_O<\infty$ and $p_O\geq0$. Physically, this excludes readouts whose sensitivity grows so rapidly with photon number that an arbitrarily small population in a high-photon sector could produce an uncontrolled contribution. Bounded observables satisfy this condition with $p_O=0$, while fixed-degree number-conserving polynomials in the mode operators satisfy it for some finite $p_O$.

The preceding conditions hold for the standard finite-mode squeezing interaction in \cref{eq:general-squeezing-generator} whenever its time-dependent coefficients remain bounded on $[0,T]$. Pair creation and annihilation change the total photon number by two and satisfy $\|\hat a_j^\dagger\hat a_l^\dagger\Pi_N\|\leq N+2$ and $\|\hat a_j\hat a_l\Pi_N\|\leq N$. The finite sum over modes and the uniform bound on $\xi_{jl}(t)$ can therefore be absorbed into the constant $C_T$ in \cref{eq:squeezing-regularity}. Moreover, $U_0(t,0)$ acts unitarily within each photon-number sector. Conjugating by $U_0(t,0)$ consequently preserves both the $\pm2$ selection rule and the norm of every sector block, so the same estimates hold for $\hat V_{\mathrm I}(t)$.

These two properties also organize the perturbative expansion, meaning every insertion of $\hat V_{\mathrm I}(t)$ moves the state between neighboring sectors of the same parity, while the norm bound controls the amplitude accumulated along such a sequence. To make this structure explicit, define the interaction-picture propagator
\begin{equation}
W_r(t)
:=
U_0(t,0)^\dagger U_r(t,0),
\end{equation}
satisfying
\begin{equation}
\ii\partial_tW_r(t)
=
r\hat V_{\mathrm I}(t)W_r(t),
\qquad
W_r(0)=\mathds{1}.
\end{equation}

Iterating the corresponding integral equation gives the Dyson expansion~\cite{Dyson1949}. The estimates below show that, on bounded-photon input sectors and for sufficiently small $|r|$, this series converges and represents $W_r(t)$, 
\begin{equation}
W_r(t)
=
\sum_{\ell=0}^{\infty}
(-\ii r)^\ell
\int_{0\leq t_\ell\leq\cdots\leq t_1\leq t}
\hat V_{\mathrm I}(t_1)\cdots
\hat V_{\mathrm I}(t_\ell)
\,\dd t_1\cdots\dd t_\ell .
\label{eq:squeezing-dyson-series}
\end{equation}

To bound the Dyson coefficients, we split $\hat V_{\mathrm I}(t)=\hat V_+(t)+\hat V_-(t)$ into its pair-raising and pair-lowering parts, 
\begin{equation}
\hat V_+(t)
:=
\sum_{N\geq0}
\Pi_{N+2}\hat V_{\mathrm I}(t)\Pi_N,
\qquad
\hat V_-(t)
:=
\sum_{N\geq2}
\Pi_{N-2}\hat V_{\mathrm I}(t)\Pi_N,
\end{equation}
each of which inherits the sector-norm bound of \cref{eq:squeezing-regularity}. Since the two components are obtained by projecting onto orthogonal output sectors, each satisfies $\|\hat V_\pm(t)\Pi_N\|\leq C_T(N+1)$.

Expanding a product
of $\ell$ interactions produces
at most $2^\ell$ sequences of raising and lowering operations.
With initial sectors below $N_{\max}$, 
any such sequence will have a support of at most $N_{\max}+2j$ after $j$ operations. Some sequences vanish because they attempt to lower the photon number below zero, but summing over all $2^\ell$ possibilities gives the convenient upper bound
\begin{equation}
\left\|
\hat V_{\mathrm I}(t_1)\cdots
\hat V_{\mathrm I}(t_\ell)P_{\mathcal S}
\right\|
\leq
(2C_T)^\ell
\prod_{j=0}^{\ell-1}(N_{\max}+2j+1).
\label{eq:squeezing-dyson-bound}
\end{equation}
The time-ordered integration region in \cref{eq:squeezing-dyson-series} has volume $t^\ell/\ell!\leq T^\ell/\ell!$, such that the norm of its $\ell$th term on the initial sector space is bounded by
\begin{equation}
\frac{(|r|T)^\ell}{\ell!}
(2C_T)^\ell
\prod_{j=0}^{\ell-1}(N_{\max}+2j+1).
\label{eq:squeezing-time-ordered-bound}
\end{equation}
Writing $a=(N_{\max}+1)/2$, the remaining product becomes
\begin{equation}
\begin{aligned}
\prod_{j=0}^{\ell-1}(N_{\max}+2j+1)
=
2^\ell\prod_{j=0}^{\ell-1}(a+j)
=
2^\ell\frac{\Gamma(\ell+a)}{\Gamma(a)}.
\end{aligned}
\end{equation}
Substituting this identity into \cref{eq:squeezing-time-ordered-bound} gives
\begin{equation}
\frac{\Gamma(\ell+a)}{\Gamma(a)\,\ell!}
(4|r|C_TT)^\ell.
\label{eq:squeezing-convergence-bound}
\end{equation}

The gamma-function ratio grows only polynomially in $\ell$, so the series converges uniformly for $4|r|C_TT<1$. The polynomial sector growth in \cref{eq:squeezing-observable-growth} adds only another polynomial factor. Hence $F(r,t)$ is analytic in $r$ uniformly on $[0,T]$ within a nonzero neighborhood of the origin. The projected evolution acts on a finite-dimensional space $P_k\mathcal H$ and is therefore analytic in $r$ as well.

We now compare the coefficients of the two series. A contribution of total order $q$ contains $q_+$ insertions of $\hat V_{\mathrm I}$ on the forward branch and $q_-$ on the backward branch, with $q=q_++q_-$. Each insertion moves between adjacent parity-compatible photon sectors. Because $\rho_{\mathrm{in}}$ begins and ends in $\mathcal S$ and $O$ is number conserving, concatenating the forward branch with the reverse backward branch produces a sector path whose endpoints lie in $\mathcal S$. If this contribution differs from the projected one, the path must leave $\mathcal S^{(k)}$. It then requires at least $k+1$ steps to reach an omitted sector and at least $k+1$ further steps to return to $\mathcal S$. No coefficient below total order $2(k+1)$ can therefore depend on an omitted sector, which proves \cref{eq:coefficientmatching}.

The difference $F(r,t)-F_k(r,t)$ is analytic and has a zero of order at least $2(k+1)$ at $r=0$. It can thus be written as $r^{2(k+1)}G_k(r,t)$, where $G_k$ is continuous on every compact subset of the common analytic domain. Taking its supremum over $|r|\leq r_\ast$ and $t\in[0,T]$ proves \cref{eq:squeezeerror}. For a number-changing observable, the two branches need not end in the same sector. An omitted sector can then contribute after only $k+1$ insertions, so the generic guarantee weakens to $\mathcal O(r^{k+1})$.

Finally, $\mathcal S^{(k)}\subseteq\{0,\ldots,N_{\max}+2k\}$ gives
\begin{equation}
\dim(P_k\mathcal H)
\leq
\binom{n+N_{\max}+2k}{N_{\max}+2k}
=
\mathcal O(n^{N_{\max}+2k}),
\end{equation}
and squaring this bound yields the stated dimension of $\operatorname{End}(P_k\mathcal H)$.

\subsection{Failure of exact finite closure}

The finite-band approximation does not imply an exact finite reachable module for Kerr-plus-squeezing dynamics. For one mode, consider the Hermitian generators
\begin{equation}
\hat H_{\mathrm{ph}}=\hat n,
\qquad
\hat H_{\mathrm K}=\hat n^2,
\qquad
\hat H_{\mathrm S}
=
\frac12(\hat a^2+\hat a^{\dagger2})
\end{equation}
on the common invariant domain spanned by finite Fock states. Starting from $\hat n$, the squeezing action gives
\begin{equation}
\mathrm{ad}_{\hat H_{\mathrm S}}(\hat n)
=
\ii(\hat a^2-\hat a^{\dagger2}).
\end{equation}
Commuting this operator with $\hat H_{\mathrm{ph}}$ shows that the complexified reachable module contains $\hat a^2$. Moreover,
\begin{equation}
[\hat n^2,\hat a^2]
=
\hat a^2\!\left[(\hat n-2)^2-\hat n^2\right]
=
\hat a^2(-4\hat n+4),
\end{equation}
and hence
\begin{equation}
\mathrm{ad}_{\hat H_{\mathrm K}}^j(\hat a^2)
=
\ii^j\hat a^2(-4\hat n+4)^j,
\qquad
j\geq0.
\label{eq:infiniteorbit}
\end{equation}
These operators are linearly independent. Indeed, applying a finite linear combination of them to every $\ket m$ with $m\geq2$ would force a polynomial in $-4m+4$ to vanish at infinitely many points. All its coefficients must therefore vanish. Thus the complexified, and consequently the real, reachable module of $\hat n$ is infinite-dimensional. This argument establishes an infinite observable orbit directly; an infinite DLA alone would not suffice.

\subsection{Exact squeezed-vacuum tails}

The preceding estimate controls generic interleaved dynamics locally in the perturbative coupling $r$. For exactly solvable squeezed vacua, photon-number tails give nonperturbative cutoff bounds. We denote the integrated squeeze parameter in these examples by $s$ to distinguish it from $r$ in \cref{eq:squeezedH}.

For the two-mode convention
\begin{equation}
S_2(s)
=
\exp\!\left[
s(\hat a^\dagger\hat b^\dagger-\hat a\hat b)
\right],
\end{equation}
the two-mode squeezed vacuum is~\cite{Weedbrook2012}
\begin{equation}
S_2(s)\ket{0,0}
=
\operatorname{sech}s
\sum_{m=0}^{\infty}
(\tanh s)^m\ket{m,m},
\end{equation}
up to an inessential phase convention. Truncating after $k$ generated pairs discards the exact probability
\begin{equation}
\tau_k^{(2)}(s)
=
(\tanh s)^{2(k+1)}.
\label{eq:twomodetail}
\end{equation}
For the single-mode convention
\begin{equation}
S_1(s)
=
\exp\!\left[
\frac{s}{2}(\hat a^2-\hat a^{\dagger2})
\right],
\end{equation}
we have~\cite{Weedbrook2012}
\begin{equation}
S_1(s)\ket0
=
\frac{1}{\sqrt{\cosh s}}
\sum_{m=0}^{\infty}
(-\tanh s)^m
\frac{\sqrt{(2m)!}}{2^m m!}\ket{2m}.
\end{equation}
The probability above $2k$ photons obeys
\begin{equation}
\begin{aligned}
\tau_k^{(1)}(s)
=
\frac1{\cosh s}
\sum_{m=k+1}^{\infty}
\frac{(2m)!}{2^{2m}(m!)^2}
\tanh^{2m}s
\leq
\min\!\left\{
1,\,
\cosh s\,(\tanh s)^{2(k+1)}
\right\},
\end{aligned}
\label{eq:singlemodetail}
\end{equation}
where we used $\binom{2m}{m}/4^m\leq1$.

Let $P$ be a photon-number cutoff and let $\tau=\langle\psi|(\mathds{1}-P)|\psi\rangle$. If $O$ is bounded and number conserving, then $[O,P]=0$ and
\begin{equation}
\left|
\langle\psi|O|\psi\rangle
-
\langle\psi|POP|\psi\rangle
\right|
\leq
\|O\|\tau.
\label{eq:tailobservable}
\end{equation}
For the normalized projected state $\ket{\psi_P}=P\ket\psi/\sqrt{1-\tau}$, the corresponding bound is $2\|O\|\tau$. Number-changing observables can produce cross terms of order $\sqrt{\tau}$. Unbounded observables such as $\hat N$ instead require weighted-tail bounds, for example on $\sum_{N>N_{\mathrm{cut}}}Np_N$.
For $s\neq0$, the two-mode condition $\tau_k^{(2)}(s)\leq\epsilon$ is equivalent to
\begin{equation}
k+1
\geq
\frac{\log(1/\epsilon)}
{2\log(\coth|s|)}.
\label{eq:accuracyband}
\end{equation}
For these squeezed-vacuum examples, fixed $s$ and fixed accuracy therefore require a band depth independent of $n$. By contrast, $\epsilon=n^{-c}$ gives $k=\Theta(\log n)$ and makes the retained-sector dimension $\dim(P_k\mathcal H)$ quasi-polynomial in $n$,
\begin{equation}
n^{N_{\max}+2k}
=
\exp\!\left[\mathcal O((\log n)^2)\right].
\end{equation}
General bosonic leakage bounds control truncation errors by restricting how rapidly number-changing matrix elements grow with occupation~\cite{Tong.2022}. Their sublinear-growth assumptions do not directly cover quadratic pair creation, whose sector-block norm grows linearly with $N$. The local estimate in \cref{eq:squeezeerror} instead follows from the $\Delta N=\pm2$ selection rule, the explicit linear block-growth bound in \cref{eq:squeezing-regularity}, and perturbative path counting. A nonperturbative, time-uniform leakage bound for generic interleaved Kerr-plus-squeezing dynamics remains open.

\section{Boundary calculations for non-Gaussian bosonic generators}
\label{app:boundary-calculations}

\subsection{Projective cubic flow}
\label{app:projective-cubic-flow}

We justify \cref{ex:projective-cubic} in the Hermitian convention used in the main text. Let $K_+=\frac13(\hat x^2\hat p+\hat x\hat p\hat x+\hat p\hat x^2)$~\cite{Turbiner.1988,GonzalezLopez.1991-QuasiExactly}. 
All identities below are understood on the finite-particle domain
\begin{equation}
\mathcal D_{\mathrm{fin}}
=
\spann_{\mathbb C}\{\ket n:n\geq0\},
\end{equation}
which is invariant under polynomials in $\hat a$ and $\hat a^\dagger$, and hence under $\hat x$, $\hat p$, and $K_+$. Since $[\hat p,\hat x]=-\ii$, each ordered term has commutator $[\hat x^2\hat p,\hat x]=[\hat x\hat p\hat x,\hat x]=[\hat p\hat x^2,\hat x]=-\ii\hat x^2$, and therefore
\begin{equation}
  \mathrm{ad}_{K_+}(\hat x)=\ii[K_+,\hat x]=\hat x^2.
\end{equation}
More generally, for every polynomial $f(\hat x)$,
\begin{equation}
\mathrm{ad}_{K_+}\!\left(f(\hat x)\right)
=
\hat x^2f'(\hat x).
\end{equation}
Iterating from $f(\hat x)=\hat x$ therefore gives
\begin{equation}
\mathrm{ad}_{K_+}^{\,r}(\hat x)
=
r!\,\hat x^{r+1},
\qquad r\geq0.
\end{equation}
The operators $\{\hat x^{r+1}\}_{r\geq0}$ are linearly independent on $\mathcal D_{\mathrm{fin}}$. Indeed, $\hat x^m\ket0$ has a nonzero component along $\ket m$, whereas $\hat x^\ell\ket0$ has no such component for $\ell<m$.

\subsection{Kerr orbit on the full Fock space}
\label{app:kerr-full-fock-orbit}

Let $\hat n=\hat a^\dagger\hat a$ and $H=\chi\hat n^2$ with $\chi\neq0$. The generator algebra $\spann_{\mathbb R}\{H\}$ is one-dimensional. Since multiplication of the generator by a nonzero scalar does not change the dimension of its adjoint orbit, we first work with $\hat n^2$ on the finite-particle domain $\mathcal D_{\mathrm{fin}}=\spann\{\ket n:n\geq0\}$. Using $[\hat n,\hat a]=-\hat a$, we find
\begin{equation}
  [\hat n^2,\hat a]=-\hat a(2\hat n-1),
\end{equation}
and hence
\begin{equation}
  \mathrm{ad}_{\hat n^2}(\hat a)=-\ii\hat a(2\hat n-1).
\end{equation}
Because every polynomial in $\hat n$ commutes with $\hat n^2$, iteration gives the exact right-ordered identity
\begin{equation}
  \mathrm{ad}_{\hat n^2}^m(\hat a)
  =
  (-\ii)^m\hat a(2\hat n-1)^m.
\end{equation}
These operators are linearly independent. Indeed, taking the matrix element $\langle k-1|\cdot|k\rangle$ of a finite linear relation reduces it to a polynomial in $2k-1$ that vanishes for every $k\geq1$, and hence has all coefficients equal to zero.

The same conclusion holds for the Hermitian quadrature $\hat x=(\hat a+\hat a^\dagger)/\sqrt2$, since
\begin{equation}
\mathrm{ad}_{\hat n^2}^m(\hat x)
=
\frac{1}{\sqrt2}
\left[
(-\ii)^m\hat a(2\hat n-1)^m
+
\ii^m\hat a^\dagger(2\hat n+1)^m
\right].
\end{equation}
Taking the lowering matrix elements $\langle k-1|\cdot|k\rangle$ proves that these Hermitian operators are linearly independent. Therefore $\mathcal V(\hat x)$ is infinite dimensional.

Equivalently, the exact Heisenberg evolution is
\begin{equation}
  e^{\ii t\chi\hat n^2}\hat a e^{-\ii t\chi\hat n^2}=\hat a e^{-\ii t\chi(2\hat n-1)}.
\end{equation}
For $\chi t\notin\pi\mathbb Z$, the phase $e^{-\ii t\chi(2\hat n-1)}$ is not a polynomial in $\hat n$ on the nonnegative-integer spectrum. The evolved operator is therefore not a finite linear combination of $\hat a\hat n^m$.

\section{Additional numerical demonstrations}

\subsection{Gaussian- and non-Gaussian-input validation}
\label{app:num-gaussian}

Here, we first validate the sign and normalization conventions of the Bogoliubov propagation~\cite{Weedbrook2012} (see \cref{ssec:Gaussianpropagation}) using the single-mode squeezing Hamiltonian
\begin{equation}
\hat H_{\mathrm{sq}}
=
\frac{\ii}{2}
\left(
\xi^*\hat a^2-\xi\hat a^{\dagger2}
\right).
\end{equation}
For a vacuum input and real $\xi=r$, the exact mean photon number is $\langle\hat n(t)\rangle=\sinh^2(rt)$. We set $r=0.6$ and sample $200$ times over $0\leq t\leq1.6$. At eight times, an independent 100-dimensional truncated-Fock calculation, retaining occupations $0,\ldots,99$, agrees with the analytic result to $9.02\times10^{-14}$, see \cref{fig:num-gaussian}(a). Meanwhile the $2\times2$ Bogoliubov transfer agrees with the analytic curve to $6.66\times10^{-16}$.

In \cref{fig:num-gaussian}(b), we then benchmark the complete number-correlation matrix
$
M_{ij}
=
\left\langle
\hat n_i\hat n_j
\right\rangle
$
for the number-state input $\ket{1,1,\ldots,1}$ after a Haar-random passive interferometer~\cite{mezzadri2007generaterandommatricesclassical}. We use mode numbers $n=64,128,256,512,1024$. The measured times depend on the workstation's dense-linear-algebra backend and serve only as practical benchmarks. The implemented contraction evaluates three dense $n\times n$ matrix products and therefore has arithmetic complexity $\Theta(n^3)$, independent of the hardware and linear-algebra backend.

The input contains $N=n$ photons, so its ambient number sector has dimension
$
d_{n,n}
=
\tbinom{2n-1}{n}$~\cite{Aaronson2013}.
The calculation does not propagate a state or a generic operator in this exponentially large sector. Instead, it evaluates the complete matrix of fixed-order moments directly from the interferometer and the input occupations.

\begin{figure}[t]
\centering
\includegraphics[width=0.8\textwidth]{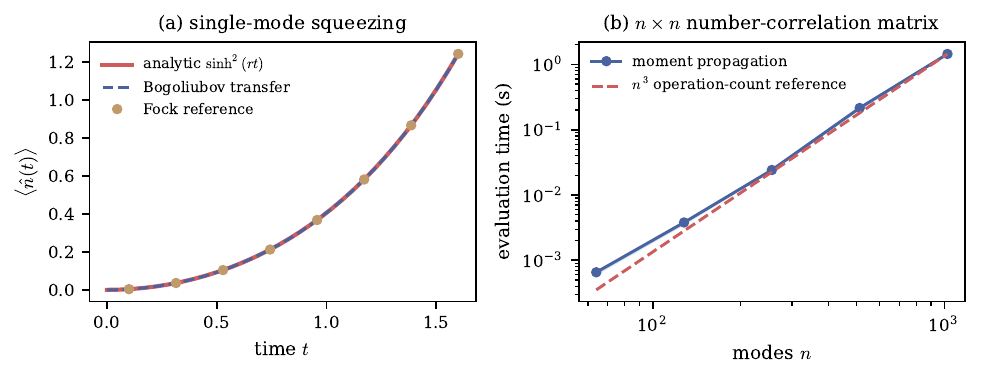}
\caption{Gaussian propagation and fixed-order correlation benchmark. (a) Mean photon number under single-mode squeezing with $r=0.6$, evaluated analytically, with the $2\times2$ Bogoliubov transfer, and with a 100-dimensional truncated-Fock reference. (b) Evaluation time for the complete $n\times n$ number-correlation matrix. Points and shaded regions show medians and interquartile ranges over seven repetitions after warm-up. The dashed line shows the $n^3$ operation-count reference and is not a fit to the hardware timings.}
\label{fig:num-gaussian}
\end{figure}

We test non-Gaussian number-state inputs using the beamsplitter
\begin{equation}
W(\theta)
=
\begin{pmatrix}
\cos\theta&\sin\theta\\
-\sin\theta&\cos\theta
\end{pmatrix}
\end{equation}
and the Fock input $|1,1\rangle$. The coincidence between the two output modes is
$
\left\langle
\hat n_1\hat n_2
\right\rangle
=
\cos^2(2\theta)$.
Over $200$ angles in $0\leq\theta\leq\pi/2$, moment propagation agrees with this expression to $4.44\times10^{-16}$. An independent three-level local Fock calculation at $\theta=0,\pi/8,\pi/4,3\pi/8,\pi/2$ agrees to $2.22\times10^{-16}$; see \cref{fig:num-nongaussian-input}(a).
For \cref{fig:num-nongaussian-input}(b), we generate twelve Haar-random four-mode interferometers~\cite{mezzadri2007generaterandommatricesclassical} and input occupation $(1,1,0,0)$. The resulting comparison contains all $16$ entries of the number-correlation matrix for each interferometer, giving $192$ entrywise tests. Moment propagation agrees with direct three-level Fock calculations to a maximum error of $4.44\times10^{-16}$. These calculations validate fixed-order correlation functions for non-Gaussian inputs.
\begin{figure}[t]
\centering
\includegraphics[width=0.8\textwidth]{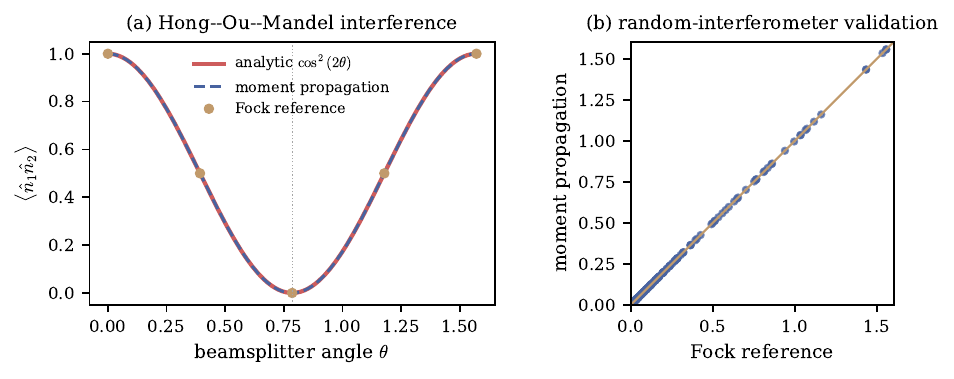}
\caption{Number correlations for non-Gaussian number-state inputs. (a) Hong--Ou--Mandel interference~\cite{Hong.1987} for $\ket{1,1}$, compared with the analytic expression and a three-level Fock reference. (b) Entrywise comparison with direct Fock calculations for all $192$ number-correlation values obtained from twelve seeded four-mode interferometers.}
\label{fig:num-nongaussian-input}
\end{figure}

\subsection{Gradient validation and nonlinear control}
\label{app:num-gradients}

We consider an open $L\times L$ Hofstadter lattice with Landau-gauge hopping
\begin{equation}
\hat H_{\mathrm{hop}}
=
-J\sum_{x,y}
\left(
\hat a_{x+1,y}^\dagger\hat a_{x,y}
+\mathrm{h.c.}
\right)
-J\sum_{x,y}
\left(
e^{-\ii2\pi\phi x}
\hat a_{x,y+1}^\dagger\hat a_{x,y}
+\mathrm{h.c.}
\right).
\label{eq:num-hofstadter-hopping}
\end{equation}
Writing $M=L^2$, each control layer has the form
\begin{equation}
\hat U_\ell
=
\left(
\prod_{j=0}^{M-1}
e^{-\ii\varphi_{\ell j}\hat n_j}
\right)
e^{-\ii\kappa_\ell\hat H_{\mathrm{Kerr}}}
e^{-\ii\tau_\ell\hat H_{\mathrm{hop}}/J},
\qquad
\hat H_{\mathrm{Kerr}}
=
\sum_{j=0}^{M-1}\hat n_j^2.
\label{eq:num-control-layer}
\end{equation}
The full circuit is $\hat U=\hat U_p\cdots\hat U_1$ at depth $p$. Within the two-photon sector, replacing $\hat n_j^2$ by $\hat n_j(\hat n_j-1)$ changes only a global phase. We compare the Kerr circuit with the nested passive ansatz obtained by setting every $\kappa_\ell$ to zero.

Two photons start at adjacent sites $(0,0)$ and $(0,1)$, corresponding to the flattened indices $0$ and $1$. We optimize their probability of forming a doublon at the nearby target $(1,1)$, whose flattened index is $j_\star=L+1$. We define the pair-density profile and loss by
\begin{equation}
p_j(\bm\theta)
=
\frac12
\left\langle
\hat n_j(\hat n_j-1)
\right\rangle_{\bm\theta},\qquad
\mathcal L(\bm\theta)
=
\sum_j
\left[
p_j(\bm\theta)-\delta_{j,L+1}
\right]^2.
\label{eq:num-control-objective}
\end{equation}
For two photons entering distinct input modes $a$ and $b$ of a passive interferometer $W$, the probability that both leave through mode $j$ satisfies
\begin{equation}
p_j^{\mathrm{passive}}
=
2|W_{ja}W_{jb}|^2
\leq
\frac12.
\label{eq:passive-doublon-bound}
\end{equation}
Indeed, row unitarity gives $|W_{ja}|^2+|W_{jb}|^2\leq1$, while $2xy\leq(x+y)^2/2$. A target probability of $p_{j_\star}>1/2$ is therefore inaccessible to any passive linear-optical circuit acting on this input. Notice also that the comparison does not depend on matching the number of trainable parameters.

We use flux $\phi=1/4$ and lattice sizes $L=5,7,9$, corresponding to $M=25,49,81$ modes and two-photon sector dimensions $325$, $1225$, and $3321$. The $L=5,7$ calculations use depths one through eight, while the $L=9$ calculations use depths four, six, and eight. For each size, depth, and ansatz, we run Adam~\cite{Kingma.2014-AdamMethodStochastic} for $250$ steps with learning rate $0.06$ from $30$ Gaussian initializations with standard deviation $0.15$.
Panel~\ref{fig:num-gradients}(a) compares reverse-mode gradients with centered finite differences of step $10^{-6}$ for the $L=5$, depth five circuit. Across all $135$ parameters, the maximum absolute discrepancy is $1.8\times10^{-10}$ at a point where the gradient norm is of order $10^{-1}$.

The depth-three results for $L=5$ separate into two groups. $12$ of the $30$ Kerr runs exceed the passive bound, reaching probabilities between $0.503$ and $0.793$, while the remaining runs converge near zero. The success fraction increases with depth and reaches unity at depth six. At depth eight, the median Kerr probabilities are $0.9856$, $0.9830$, and $0.9857$ for $L=5,7,9$, respectively, and every run exceeds the passive bound; see \cref{fig:num-gradients}(b). Panel (c) shows one independently optimized depth five solution with $p_{j_\star}=0.9748$.

These calculations validate the reverse-mode gradient implementation and show that the Kerr-augmented circuit reaches target probabilities not accessible to passive linear optics. Because the target remains local as $L$ increases, the size sweep tests optimization in progressively larger two-photon sectors rather than control over increasing distances. 

\begin{figure}[t]
\centering
\includegraphics[width=\textwidth]{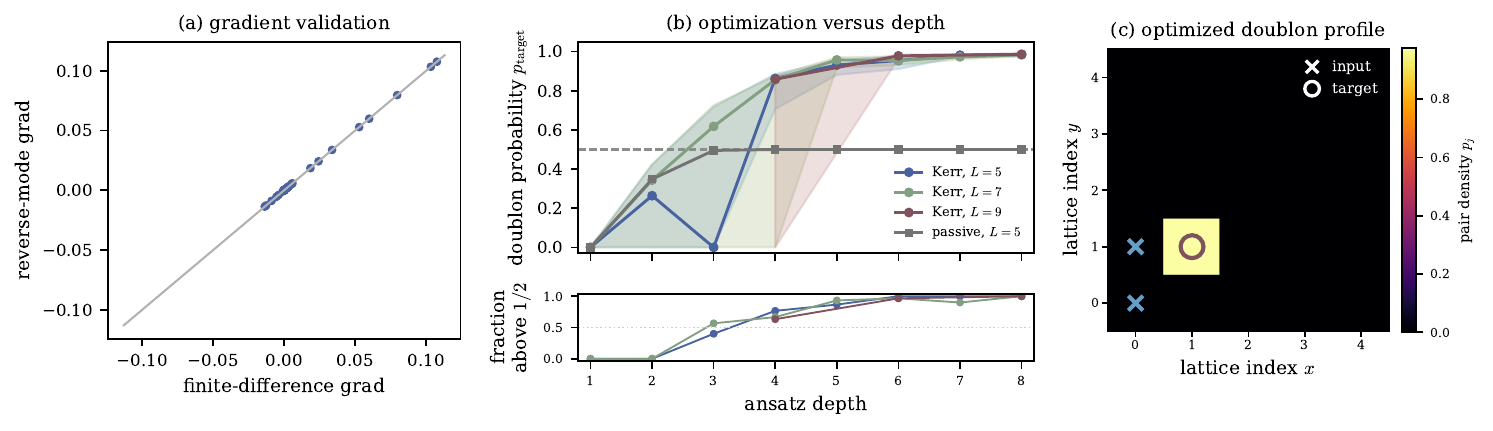}
\caption{Reverse-mode gradient validation and nonlinear two-photon control. (a) Reverse-mode derivatives compared with centered finite differences for the $L=5$, depth-five circuit. The diagonal denotes equality, and the maximum absolute discrepancy is $1.8\times10^{-10}$. (b) Median target-doublon probability and interquartile range over $30$ initializations (upper panel), together with the fraction of runs exceeding the universal passive bound $p_{\mathrm{target}}=1/2$ (lower panel). The passive curve shown corresponds to $L=5$. (c) Pair-density profile of one separately optimized depth-five Kerr circuit with $p_{\mathrm{target}}=0.9748$. Crosses mark the occupied input sites and the circle marks the target.}
\label{fig:num-gradients}
\end{figure}

\subsection{Exact nilpotent cubic-phase dynamics}
\label{app:num-cph}

We next test our second exact non-Gaussian mechanism identified in \cref{sec:boundaries}, namely bounded-degree position phases interspersed with translations. Here we consider the depth-$L$ circuit
\begin{equation}
U_L=\prod_{\ell=1}^{L}e^{-\ii V_\ell(\hat{\bm x})}e^{-\ii\bm s_\ell^{\top}\hat{\bm p}},
\label{eq:num-cph-circuit}
\end{equation}
where each $V_\ell$ has degree at most three. Exact polynomial composition reduces propagation in the Heisenberg picture to
\begin{equation}
U_L^\dagger\hat{\bm x}\,U_L=\hat{\bm x}+\bm c_L,
\quad
U_L^\dagger\hat{\bm p}\,U_L=\hat{\bm p}-\nabla W_L(\hat{\bm x}),
\label{eq:num-cph-shear}
\end{equation}
where $W_L$ has the same bounded degree as the layer phases.
We therefore propagate polynomial coefficients and contract products of the transformed quadratures with input moments, without introducing a Fock-space cutoff.

Panel~\ref{fig:num-cph}(a) uses a single-mode vacuum input and eight cubic layers, each consisting of a position translation $s_\ell$ followed by the cubic phase $V_\ell(\hat x)=\gamma_\ell\hat x^3$. For the experiments, we choose a fixed depth-eight circuit whose translations and phase strengths include both signs and satisfy $|s_\ell|\leq0.22$ and $|\gamma_\ell|\leq0.055$; the complete layer-by-layer parameters are provided with the accompanying numerical data. We evaluate the fourth momentum cumulant
\begin{equation}
\kappa_4(\hat p)
=
\langle(\Delta\hat p)^4\rangle
-3\langle(\Delta\hat p)^2\rangle^2,
\label{eq:num-cph-cumulant}
\end{equation}
which vanishes for Gaussian quadrature statistics~\cite{Weedbrook2012}. After eight layers, the fourth cumulant reaches the value $\kappa_4(\hat p)=0.483323$, giving a direct witness of the output state being non-Gaussian. The coefficient propagation is exact within the finite reachable module and introduces no Fock-space or spatial-grid truncation, such that only floating-point roundoff remains. 
We verify the result independently by representing the wavefunction on $[-10,10)$, applying translations and momentum operators through fast Fourier transforms (FFTs), and applying the cubic phases pointwise in position space. With $2048$ grid points, the two calculations agree at every circuit depth to within $2.66\times10^{-15}$. Doubling the grid resolution to $4096$ points changes the FFT reference by at most $4.44\times10^{-15}$. Thus, the quoted value is stable well within double precision.

Panel~\ref{fig:num-cph}(b) tests a correlated two-mode phase operator
\begin{equation}
\hat V_\ell
=
\gamma_\ell
\left(
\hat x_0^2\hat x_1
-0.55\hat x_0\hat x_1^2
+0.20\hat x_0^3
-0.15\hat x_1^3
\right)
\end{equation}
in a circuit of depth $L=6$. As in the single-mode calculation above, we use one fixed deterministic instance, with translation vectors and phase strengths including both signs within $|s_\ell|\leq0.16$ and $|\gamma_\ell|\leq0.04$.
For a two-mode vacuum input, the final connected correlator produced by the circuit is 
\begin{equation}
\langle\hat p_0\hat p_1\rangle_c
:=
\langle\hat p_0\hat p_1\rangle
-\langle\hat p_0\rangle\langle\hat p_1\rangle
=
-2.29042\times10^{-3}.
\end{equation}

Independent single-mode translations and phase gates would preserve the product structure of the input and leave this connected correlator zero. The value being nonzero therefore isolates the effect of the cross-mode monomials in $\hat V_\ell$. The coefficient calculation agrees with an independent FFT wavefunction calculation on $[-8,8)^2$ to within $8.89\times10^{-18}$ at every depth, while increasing the grid from $128^2$ to $192^2$ points changes the reference by at most $2.43\times10^{-17}$. 

Panel~\ref{fig:num-cph}(c) benchmarks depth-three circuits with homogeneous phase degree $m=3,4,5$. Translation amplitudes scale as $0.03/\sqrt n$, while deterministic polynomial coefficients scale as $0.01/\sqrt n$. At the largest sizes, the dimensions are ${d_{768,3}=76\,089\,473}$, ${d_{64,4}=814\,449}$, and ${d_{32,5}=435\,929}$, respectively.
The timings follow the $O(n^m)$ fixed-degree dimension guides and establish practical coefficient propagation beyond the cubic example. The local and HPC curves were measured on different hardware, so their absolute prefactors should not be compared.

\begin{figure}[htbp]
\centering
\includegraphics[width=\textwidth]{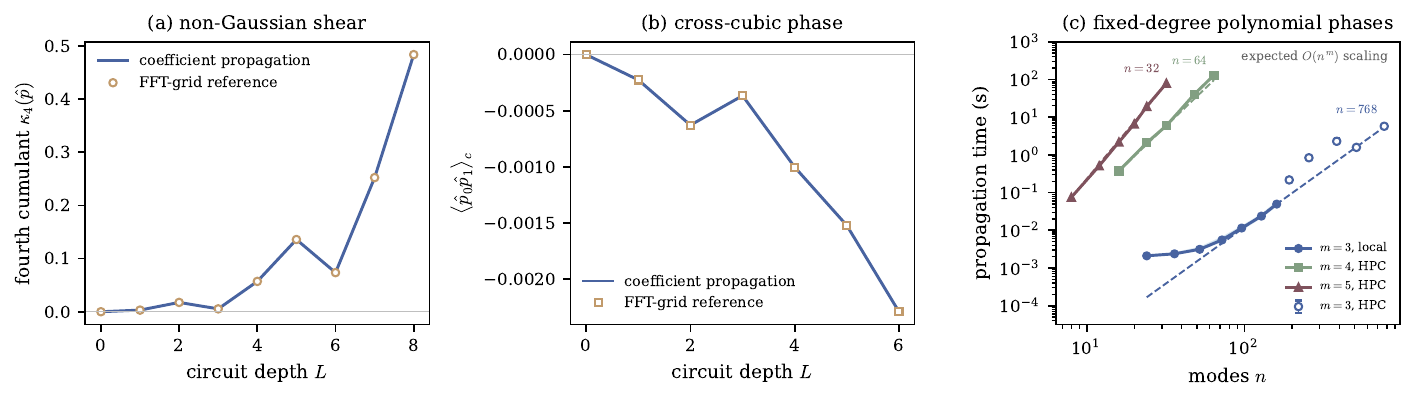}
\caption{Exact nilpotent polynomial-phase dynamics. (a) Fourth momentum cumulant generated by a single-mode cubic-phase circuit, evaluated by coefficient propagation and an independently grid-refined FFT reference. Its nonzero value witnesses the non-Gaussian output. (b) Connected momentum correlator generated by a two-mode cross-cubic phase and validated against a two-dimensional FFT-grid calculation. (c) Median coefficient-propagation time and interquartile ranges over seven repetitions at depth $L=3$. Filled connected circles show the local benchmark for $m=3$, while open circles show its HPC-cluster extension; the $m=4,5$ curves are HPC-cluster benchmarks. Dashed lines indicate the expected $\mathcal O(n^m)$ scaling; their vertical offsets are guides only because runtime prefactors across different hardware are not directly comparable.}
\label{fig:num-cph}
\end{figure}

\twocolumngrid

\bibliography{literature,new_literature,numerics_citations,topology_citations}

\end{document}